\def\BibTeX{{\rm B\kern-.05em{\sc i\kern-.025em b}\kern-.08em
    T\kern-.1667em\lower.7ex\hbox{E}\kern-.125emX}}
\newacronym{LTI}{LTI}{linear time-invariant}
\newacronym{SDP}{SDP}{semidefinite program}
\newacronym{MIP}{MIP}{mixed-integer program}
\newacronym{IQC}{IQC}{integral quadratic constraint}
\newacronym{SOS}{SOS}{sum of squares}
\newacronym{ROA}{RoA}{region of attraction}
\newacronym{ReLU}{ReLU}{rectified linear Unit}
\newacronym{IBP}{IBP}{interval bound propagation}
\newacronym{REN}{REN}{recurrent equilibrium network}
\newacronym{RNN}{RNN}{recurrent neural network}
\newacronym{CNN}{CNN}{convolutional neural network}
\newacronym{GAS}{GAS}{globally asymptotically stable}
\newacronym{LAS}{LAS}{locally asymptotically stable}
\newacronym[longplural=Linear Matrix Inequalities]{LMI}{LMI}{linear matrix inequality}
\newacronym{LSTM}{LSTM}{long short-term memory}
\newacronym{LQR}{LQR}{linear-quadratic regulator}
\newacronym{MPC}{MPC}{model predictive controller}
\newacronym{DoF}{DoF}{degree of freedom}
\newacronym{RHP}{RHP}{right-half plane}
\newacronym{EoM}{EoM}{equations of motion}
\newacronym{CoG}{CoG}{center of gravity}
\newacronym{PRBS}{PRBS}{pseudorandom-binary signal}
\newacronym{MSE}{MSE}{mean-squared error}
\newacronym{NNC}{NNC}{neural-network-based controller}
\newacronym{RMSE}{RMSE}{root mean square error}
\newacronym{MAE}{MAE}{maximum absolute error}
\newcommand{\realsN}[1]{\ensuremath{\mathbb{R}^{#1}}}
\newcommand{\integersN}[1]{\ensuremath{\mathbb{Z}^{#1}}}
\newcommand{\transpose}{^\top}
\DeclareMathOperator*{\argmin}{arg\,min}
\DeclareMathOperator{\graph}{gr}
\DeclareMathAlphabet{\mathcal}{OMS}{cmsy}{m}{n}
\pgfplotsset{compat=newest}
\pgfplotsset{plot coordinates/math parser=false}
\newlength\figureheight
\newlength\figurewidth
\pgfplotsset{/pgfplots/layers/niceLayers/.define layer set={
		axis background,axis grid,main,axis ticks,axis lines,axis tick labels,axis descriptions,axis foreground
	}{/pgfplots/layers/standard}
}
\pgfplotsset{every axis/.append style={
		set layers=niceLayers,
		tick label style={font=\scriptsize},
		clip marker paths=true,
		line width=1.5pt,
		line cap=round,
		line join=round,
		tick style={semithick, color=black}
}}
\newtheorem{thm}{Theorem}[section]
\newtheorem{cor}[thm]{Corollary}
\newtheorem{defn}[thm]{Definition}
\newtheorem{exmp}{Example}
\newtheorem{rem}{Remark}
\newtheorem{assume}{Assumption}
\crefname{thm}{Theorem}{Theorems}
\crefname{lemma}{Lemma}{Lemmas}
\crefname{prop}{Proposition}{Propositions}
\crefname{cor}{Corollary}{Corollaries}
\crefname{defn}{Definition}{Definitions}
\crefname{conj}{Conjecture}{Conjectures}
\crefname{exmp}{Example}{Examples}
\crefname{rem}{Remark}{Remarks}
\crefname{assume}{Assumption}{Assumptions}
\crefname{equation}{}{} 
\Crefname{equation}{Equation}{Equations} 
\crefname{figure}{Fig.}{Figs.} 
\Crefname{figure}{Fig.}{Figs.} 
\crefname{subfigure}{Fig.}{Figs.} 
\Crefname{subfigure}{Fig.}{Figs.} 
\crefname{section}{Section}{Sections} 
\crefname{algorithm}{Algorithm}{Algorithms} 
\Crefname{algorithm}{Algorithm}{Algorithms} 
\DeclareSIUnit{\deg}{deg}
\begin{document}
\title{Contributions to Semialgebraic-Set-Based Stability Verification of Dynamical Systems with Neural-Network-Based Controllers}
\author{Alvaro Detailleur, Dalim Wahby, Guillaume Ducard, Christopher Onder
\thanks{A. Detailleur and C. Onder are with the Institute for Dynamic Systems and Control (IDSC), Department of Mechanical and Process Engineering, ETH Zurich, Leonhardstrasse 21, 8092 Zürich, Switzerland. (E-mail: adetailleur@student.ethz.ch; onder@idsc.mavt.ethz.ch, Telephone: +41 44 632 87 96)}%
\thanks{D. Wahby and G. Ducard are with Universit{\'e} C\^{o}te d`Azur, i3S, CNRS, 06903 Sophia Antipolis, France. (E-mail: dalim.wahby@univ-cotedazur.fr; guillaume.ducard@univ-cotedazur.fr). This work was supported by the French government through the France 2030 investment plan managed by the National Research Agency (ANR), as part of the Initiative of Excellence Universit{\'e} C\^{o}te d`Azur under reference number ANR- 15-IDEX-01.}%
}

\maketitle
\thispagestyle{firstpage}
\begin{abstract}
\Acrfullpl{NNC} can represent complex, highly nonlinear control laws, but verifying the closed-loop stability of dynamical systems using them remains challenging. This work presents contributions to a state-of-the-art stability verification procedure for \acrshort{NNC}-controlled systems which relies on semialgebraic-set-based input-output modeling to pose the search for a Lyapunov function as an optimization problem.
Specifically, this procedure’s conservatism when analyzing \acrshortpl{NNC} using transcendental activation functions and the restriction to feedforward \acrshortpl{NNC} are addressed by a) introducing novel semialgebraic activation functions that preserve key properties of common transcendental activations and b) proving compatibility of \acrshortpl{NNC} from the broader class of \acrfullpl{REN} with this procedure. Furthermore, the indirect optimization of a local \acrfull{ROA} estimate using a restricted set of candidate Lyapunov functions is greatly improved via c) the introduction of a richer parameterization of candidate Lyapunov functions than previously reported and d) the formulation of novel \acrfullpl{SDP} that directly optimize the resulting \acrshort{ROA} estimate. The value of these contributions is highlighted in two numerical examples.
\end{abstract}

\begin{IEEEkeywords}
closed-loop stability, neural networks, semidefinite programming (SDP), sum of squares (SOS)
\end{IEEEkeywords}

\section{Introduction}
\label{sec:Introduction}
\IEEEPARstart{T}{he} interest of the control engineering community in (artificial) neural networks dates back to at least the end of the 1980s \cite{mybibfile:Hunt1992}. Researchers have long recognized that neural networks possess several properties that make them uniquely suited for use in control applications. In particular:
\begin{enumerate}
    \item According to (variants of) the universal function approximation theorem \cite{mybibfile:Hanin2019}, large enough neural networks are capable of approximating continuous functions with arbitrary accuracy, which implies neural networks can be used to parameterize highly complex control laws.
    \item Neural networks represent a method to store such complex parameterizations whilst requiring relatively low computational requirements compared to traditional optimization algorithms used in (optimal) control, allowing them to be evaluated on embedded hardware platforms and/or at a high frequency \cite{mybibfile:Gonzalez2024, mybibfile:Detailleur2025}.
\end{enumerate}

\begin{figure}[t]
	\centering
	\def\svgwidth{0.75\columnwidth}
\begingroup%
  \makeatletter%
  \providecommand\color[2][]{%
    \errmessage{(Inkscape) Color is used for the text in Inkscape, but the package 'color.sty' is not loaded}%
    \renewcommand\color[2][]{}%
  }%
  \providecommand\transparent[1]{%
    \errmessage{(Inkscape) Transparency is used (non-zero) for the text in Inkscape, but the package 'transparent.sty' is not loaded}%
    \renewcommand\transparent[1]{}%
  }%
  \providecommand\rotatebox[2]{#2}%
  \newcommand*\fsize{\dimexpr\f@size pt\relax}%
  \newcommand*\lineheight[1]{\fontsize{\fsize}{#1\fsize}\selectfont}%
  \ifx\svgwidth\undefined%
    \setlength{\unitlength}{195.70528903bp}%
    \ifx\svgscale\undefined%
      \relax%
    \else%
      \setlength{\unitlength}{\unitlength * \real{\svgscale}}%
    \fi%
  \else%
    \setlength{\unitlength}{\svgwidth}%
  \fi%
  \global\let\svgwidth\undefined%
  \global\let\svgscale\undefined%
  \makeatother%
  \begin{picture}(1,0.63338912)%
    \lineheight{1}%
    \setlength\tabcolsep{0pt}%
    \put(0,0){\includegraphics[width=\unitlength,page=1]{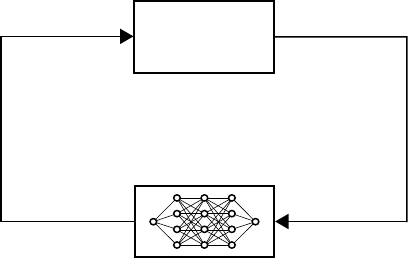}}%
    \put(0.3824088,0.53274254){\makebox(0,0)[lt]{\lineheight{1.25}\smash{\begin{tabular}[t]{l}$x^+ = f(x, u)$\end{tabular}}}}%
    \put(0.04233083,0.12024308){\makebox(0,0)[lt]{\lineheight{1.25}\smash{\begin{tabular}[t]{l}$u(x) = \varphi(x)$\end{tabular}}}}%
    \put(0.75198126,0.56578001){\makebox(0,0)[lt]{\lineheight{1.25}\smash{\begin{tabular}[t]{l}$x$\end{tabular}}}}%
  \end{picture}%
\endgroup%

	\caption{Block diagram of a general discrete-time closed-loop dynamical system with a neural-network-based controller $\varphi$. The stability properties of systems of this form are examined in this work.}
	\label{fig:GeneralClosedLoopSystem}
\end{figure}

However, establishing stability properties of closed-loop systems utilizing a neural network as a controller, as shown in \cref{fig:GeneralClosedLoopSystem}, is a non-trivial task \cite{mybibfile:Norris2021}. Traditional stability conditions used for \acrfull{LTI} systems are not directly applicable to certify stability for systems with so-called neural-network-based controllers \cite[Part~2]{mybibfile:Liberzon2002}. In addition, deciding the asymptotic stability of a closed-loop system containing a neural network made up of a single neuron is in general already an NP-Hard problem\cite{mybibfile:Blondel1999, mybibfile:Korda2022}. This has motivated research into stability verification procedures for this class of systems.

\subsection{Optimization-based Stability Verification Procedures}
Given the size and complexity of a general \acrshort{NNC}, many stability verification procedures pose the search for a stability certificate as an optimization program, e.g. a \acrfull{SDP} or \acrfull{MIP}. 
Previous successful approaches include \acrshort{MIP}-based methods that certify stability by comparing the \acrshort{NNC} to an existing robust optimization-based controller \cite{mybibfile:Dubach2022,mybibfile:Schwan2023}. Other methods have approximated forward reachable sets of the closed-loop system using \acrshortpl{SDP} \cite{mybibfile:Hu2020}. In this work we examine a stability verification procedure belonging to a third class of methods that pose the search for a Lyapunov function for the closed-loop system as one or more \acrshortpl{SDP} \cite{mybibfile:Korda2017,mybibfile:Korda2022,mybibfile:Richardson2023,mybibfile:Yin2022,mybibfile:Pauli2021,mybibfile:Newton2022,mybibfile:Revay2024,mybibfile:Samanipour2024}.
Procedures from this final category consist of a two-step process, executed sequentially:
\begin{enumerate}
    \item \textbf{Modeling Step}: Obtain a system model of the closed-loop system of \cref{fig:GeneralClosedLoopSystem}, which specifies how the states of the dynamical system change over time.
    \item \textbf{SDP Formulation Step}:
    Using the closed-loop system model and a parameterized set of (candidate Lyapunov) functions, formulate and solve one or more \acrshortpl{SDP} to find a Lyapunov function decreasing over time for all state trajectories defined admissible by the system model.
\end{enumerate}

Due to the nonlinear nature of \acrshortpl{NNC}, explicitly describing how the states of the closed-loop system change over time is generally infeasible. Therefore, various modeling techniques to obtain the required system model have been reported in the literature, including sector constraints \cite{mybibfile:Richardson2023,mybibfile:Yin2022}, semialgebraic sets \cite{mybibfile:Korda2017,mybibfile:Korda2022,mybibfile:Newton2022}, Zames-Falb multipliers \cite{mybibfile:Pauli2021}, (integral) quadratic constraints \cite{mybibfile:Yin2022} and incremental quadratic constraints \cite{mybibfile:Revay2024}.
Once a system model has been obtained using one (or more) of these techniques, it is used to formulate \acrshort{SDP}(s) searching for incremental, global or local Lyapunov functions for the closed-loop system within a parameterized set of functions. 

\subsection{Structure and Contributions}
This work focuses specifically on a stability verification procedure that uses a semialgebraic-set-based system model to pose the search for a global or local discrete-time Lyapunov function as an \acrshort{SDP} \cite{mybibfile:Korda2017,mybibfile:Korda2022}. The choice for this procedure is motivated by previous results \cite{mybibfile:Korda2022} demonstrating its ability to scale to large networks and exactly describe the input-output properties of a subset of \acrshortpl{NNC}, which stands in contrast to the inexact system models obtained using other modeling techniques. 

To motivate the contributions presented in this work, \cref{sec:SoAReview} first presents a review and identifies limitations of this state-of-the-art semialgebraic-set-based stability verification procedure. The contributions of this work address the limitations identified in this review and are presented in \cref{sec:ModelingContributions,sec:StabilityVerificationContributions}. 

\Cref{sec:ModelingContributions} details our contributions to the modeling step of the reviewed procedure. These expand the class of \acrshortpl{NNC} whose input-output properties can be described exactly using a semialgebraic-set-based model. Specifically, we
    \begin{enumerate}
        \item[a)] introduce two novel semialgebraic activation functions mimicking the fundamental properties of the common $\text{softplus}$ and $\text{tanh}$ activation functions, and
        \item[b)]prove compatibility of this modeling technique with \acrshortpl{NNC} from the class of \acrfullpl{REN}, which includes, among others, \acrfullpl{RNN}, \acrfull{LSTM} networks and \acrfullpl{CNN} \cite{mybibfile:Revay2024}.
    \end{enumerate}

\Cref{sec:StabilityVerificationContributions} presents our contributions to the \acrshort{SDP} formulation step of the reviewed procedure. They consist of
\begin{enumerate}
    \item[c)] the introduction of a richer class of candidate Lyapunov functions compatible with the existing local stability analysis procedure, particularly for networks containing ReLU or any of the semialgebraic activation functions newly presented in this work, and
    \item[d)] an improved local stability analysis procedure consisting of a sequence of \acrshortpl{SDP} directly optimizing an inner estimate of the closed-loop system's \acrshort{ROA}, 
\end{enumerate}
which represent improvements over current state-of-the-art procedures to examine local stability properties \cite{mybibfile:Yin2022,mybibfile:Pauli2021,mybibfile:Newton2022}.

Finally, \cref{sec:NumericalResults} presents two numerical examples to highlight the value of our contributions. These examples showcase the enlarged class of closed-loop systems suitable for analysis and the improved method for obtaining inner estimates of the closed-loop system's \acrshort{ROA}.
A conclusion and an outlook for future work are presented in \cref{sec:Conclusion}.

\subsection{Notation}
This work uses the following notational conventions:
\begin{itemize}
    \item In the analysis of discrete-time systems, the plus superscript indicates the successor variable, e.g. $x, \ x^+ \in \realsN{n_x}$ represent the current and successor state, respectively.
    \item All inequalities are defined element-wise. $P \succ 0$, $P \succeq 0$ denote a positive definite and positive semidefinite matrix $P$, respectively.
    \item The set $\{1, \dots, n\}$ is denoted as $[n]$. The notation $\mathcal{M}(x, \,n)$ denotes the vector of all unique products of $n$ entries of $x$.  
    $[A_{22}]_i$ refers to the $i$-th row of matrix $A_{22}$.  Entries denoted by $*$ in a matrix indicate the matrix is assumed symmetric.
    \item The identity map is denoted $\textrm{id}$. Function composition is denoted by $\circ$, i.e. $f \circ g(x) = f\big(g(x)\big)$.
    \item The interior of a set $\mathcal{Q}$ is denoted $\text{int}(\mathcal{Q})$.
    \item The Minkowski sum and Pontryagin difference are denoted by $\oplus$ and $\ominus$, respectively. Elementwise multiplication is denoted by $\odot$.
    \item Unless specified, all norms represent the Euclidean norm.
\end{itemize}

\section{Review of the state of the art}
\label{sec:SoAReview}

\begin{figure*}[t]
    \centering
        \begin{subfigure}{0.33\textwidth} 
        \centering
        \def\svgwidth{0.95\linewidth}
\begingroup%
  \makeatletter%
  \providecommand\color[2][]{%
    \errmessage{(Inkscape) Color is used for the text in Inkscape, but the package 'color.sty' is not loaded}%
    \renewcommand\color[2][]{}%
  }%
  \providecommand\transparent[1]{%
    \errmessage{(Inkscape) Transparency is used (non-zero) for the text in Inkscape, but the package 'transparent.sty' is not loaded}%
    \renewcommand\transparent[1]{}%
  }%
  \providecommand\rotatebox[2]{#2}%
  \newcommand*\fsize{\dimexpr\f@size pt\relax}%
  \newcommand*\lineheight[1]{\fontsize{\fsize}{#1\fsize}\selectfont}%
  \ifx\svgwidth\undefined%
    \setlength{\unitlength}{271.41416893bp}%
    \ifx\svgscale\undefined%
      \relax%
    \else%
      \setlength{\unitlength}{\unitlength * \real{\svgscale}}%
    \fi%
  \else%
    \setlength{\unitlength}{\svgwidth}%
  \fi%
  \global\let\svgwidth\undefined%
  \global\let\svgscale\undefined%
  \makeatother%
  \begin{picture}(1,0.30529543)%
    \lineheight{1}%
    \setlength\tabcolsep{0pt}%
    \put(0,0){\includegraphics[width=\unitlength,page=1]{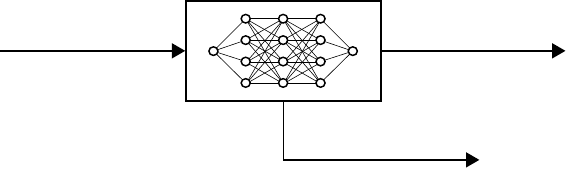}}%
    \put(0.59812475,0.05068204){\makebox(0,0)[lt]{\lineheight{1.25}\smash{\begin{tabular}[t]{l}$\lambda(x)$\end{tabular}}}}%
    \put(0.11245526,0.24259885){\makebox(0,0)[lt]{\lineheight{1.25}\smash{\begin{tabular}[t]{l}$x$\end{tabular}}}}%
    \put(0.81139142,0.2426014){\makebox(0,0)[lt]{\lineheight{1.25}\smash{\begin{tabular}[t]{l}$\varphi(x)$\end{tabular}}}}%
  \end{picture}%
\endgroup%

        \caption{}
        \label{subfig:NeuralNetworkGraph}
    \end{subfigure}
    \hfill
    \begin{subfigure}{0.65\textwidth}
        \centering
        \def\svgwidth{0.95\linewidth}
\begingroup%
  \makeatletter%
  \providecommand\color[2][]{%
    \errmessage{(Inkscape) Color is used for the text in Inkscape, but the package 'color.sty' is not loaded}%
    \renewcommand\color[2][]{}%
  }%
  \providecommand\transparent[1]{%
    \errmessage{(Inkscape) Transparency is used (non-zero) for the text in Inkscape, but the package 'transparent.sty' is not loaded}%
    \renewcommand\transparent[1]{}%
  }%
  \providecommand\rotatebox[2]{#2}%
  \newcommand*\fsize{\dimexpr\f@size pt\relax}%
  \newcommand*\lineheight[1]{\fontsize{\fsize}{#1\fsize}\selectfont}%
  \ifx\svgwidth\undefined%
    \setlength{\unitlength}{511.75989743bp}%
    \ifx\svgscale\undefined%
      \relax%
    \else%
      \setlength{\unitlength}{\unitlength * \real{\svgscale}}%
    \fi%
  \else%
    \setlength{\unitlength}{\svgwidth}%
  \fi%
  \global\let\svgwidth\undefined%
  \global\let\svgscale\undefined%
  \makeatother%
  \begin{picture}(1,0.15725041)%
    \lineheight{1}%
    \setlength\tabcolsep{0pt}%
    \put(0,0){\includegraphics[width=\unitlength,page=1]{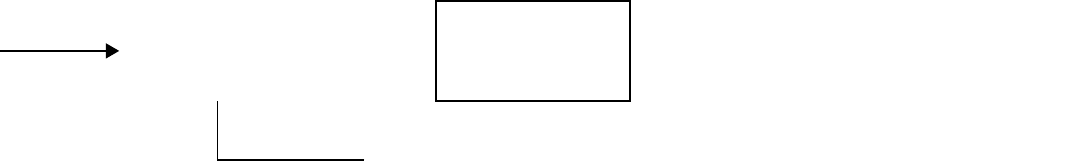}}%
    \put(0.42310011,0.10431159){\makebox(0,0)[lt]{\lineheight{1.25}\smash{\begin{tabular}[t]{l}$x^+ = f(x,u)$\end{tabular}}}}%
    \put(0,0){\includegraphics[width=\unitlength,page=2]{OpenLoop_SemialgebraicSet.pdf}}%
    \put(0.24653124,0.02215671){\makebox(0,0)[lt]{\lineheight{1.25}\smash{\begin{tabular}[t]{l}$\lambda(x)$\end{tabular}}}}%
    \put(0.04243475,0.12394066){\makebox(0,0)[lt]{\lineheight{1.25}\smash{\begin{tabular}[t]{l}$x$\end{tabular}}}}%
    \put(0.32816359,0.12394201){\makebox(0,0)[lt]{\lineheight{1.25}\smash{\begin{tabular}[t]{l}$\varphi(x)$\end{tabular}}}}%
    \put(0,0){\includegraphics[width=\unitlength,page=3]{OpenLoop_SemialgebraicSet.pdf}}%
    \put(0.83862644,0.02215671){\makebox(0,0)[lt]{\lineheight{1.25}\smash{\begin{tabular}[t]{l}$\lambda^+(x)$\end{tabular}}}}%
    \put(0.61482338,0.12394066){\makebox(0,0)[lt]{\lineheight{1.25}\smash{\begin{tabular}[t]{l}$x^+(x)$\end{tabular}}}}%
    \put(0.91078868,0.12394201){\makebox(0,0)[lt]{\lineheight{1.25}\smash{\begin{tabular}[t]{l}$\varphi^+(x)$\end{tabular}}}}%
  \end{picture}%
\endgroup%

        \caption{}
        \label{subfig:OpenLoopGraph}
    \end{subfigure}
    \caption{Schematic of \subref{subfig:NeuralNetworkGraph} the neural network $\varphi$, which is modeled via its graph by the semialegbraic set $\mathbf{K}_\varphi$ and \subref{subfig:OpenLoopGraph} the composed loop $L = \varphi \circ f \circ (\textrm{id}, \varphi)$, which is modeled via its graph by the semialegbraic set $\mathbf{K}_{L}$.}
    \label{fig:SemialgebraicSetModel}
\end{figure*}

This work examines the stability properties of an open-loop system with polynomial dynamics, $f(x,u)\colon \realsN{n_x} \times \realsN{n_u} \mapsto \realsN{n_x}$, controlled by a fixed, memoryless \acrshort{NNC}, $u(x) = \varphi(x)\colon \realsN{n_x} \mapsto \realsN{n_u}$, as shown in \cref{fig:GeneralClosedLoopSystem}. This discrete-time closed-loop system 
\begin{equation}
    \label{eq:GeneralClosedLoopSystem}
    x^+ = f\big(x, \varphi(x)\big),
\end{equation}
is assumed to be locally bounded and have an equilibrium point at $x = 0$, i.e. $f\big(0, \varphi(0)\big) = 0$. As the nonlinear mapping defined by a general \acrshort{NNC} prevents an explicit description of the closed-loop system's dynamics from being obtained, 
systematic methods capable of finding a certificate of the following stability properties are of great interest.
\begin{defn}[Global Asymptotic Stability]
    Closed-loop system \cref{eq:GeneralClosedLoopSystem} is \acrfull{GAS} if:
    \begin{itemize}
        \item it is stable in the sense of Lyapunov, i.e. for every $\epsilon > 0$, $\exists \delta(\epsilon) > 0$ such that $\|x(0)\| < \delta \implies \|x(k)\| < \epsilon$ for all $k \in \integersN{}_{\geq 0}$.
        \item it is globally attractive, i.e. $\lim_{k \to \infty} \|x(k)\| = 0$ for all $x(0) \in \realsN{n}$.
    \end{itemize}
\end{defn}
\begin{defn}[Local Asymptotic Stability]
    Closed-loop system \cref{eq:GeneralClosedLoopSystem} is \acrfull{LAS} if
    \begin{itemize}
        \item it is stable in the sense of Lyapunov,
        \item it is locally attractive, i.e. $\exists \eta > 0$ such that $\lim_{k \to \infty} \|x(k)\| = 0$ for all $\|x(0)\| < \eta$.
    \end{itemize}
\end{defn}
In addition, for locally stable systems, we would like to find (an inner estimate of) the closed-loop system's \acrfull{ROA}.
\begin{defn}[Region of Attraction]
    The \acrfull{ROA} of the equilibrium point $x = 0$ of closed-loop system \labelcref{eq:GeneralClosedLoopSystem} is the set $\mathcal{X}_{\textrm{RoA}}$ of all points such that $x \in \mathcal{X}_{\textrm{RoA}}$ implies $\lim_{k\to\infty} \|x(k)\| = 0$.
\end{defn}

A certificate of local/global asymptotic stability and an estimate of the closed-loop system's \acrshort{ROA} can be obtained systematically by formulating and successfully solving one or more \acrshortpl{SDP} searching for a Lyapunov function.

\begin{defn}[Lyapunov Function, Def. B.12 \cite{mybibfile:Rawlings2017}]
    \label{defn:LyapunovFunction}
    Suppose that $\mathcal{X} \subseteq \realsN{n_x}$ is positive invariant and the origin $0\in \mathcal{X}$ is an equilibrium point for closed-loop system \cref{eq:GeneralClosedLoopSystem}. A function $V: \mathcal{X}\rightarrow R_{\geq 0}$ is said to be a Lyapunov function in $\mathcal{X}$ for closed-loop system \cref{eq:GeneralClosedLoopSystem} if there exist functions $\alpha_1, \alpha_2 \in \mathcal{K}_{\infty}$, and continuous, positive definite function $\alpha_3$ such that for any $x \in \mathcal{X}$ 
\end{defn}
\begin{subequations}\label{eq:V_conditions}
    \begin{alignat}{1}
    V(x) &\geq \alpha_1(\|x\|) \label{eq:V_conditions_a}, \\
    V(x) &\leq \alpha_2(\|x\|) \label{eq:V_conditions_b}, \\
     V(x) - V\big(f(x,u(x))\big) &\geq \alpha_3(\|x\|)\label{eq:V_conditions_c}.
\end{alignat}
\end{subequations}
In this work, the Lyapunov function in $\mathcal{X}$ for closed-loop system \cref{eq:GeneralClosedLoopSystem} is referred to as a global Lyapunov function 
if $\mathcal{X} = \realsN{n_x}$, and a local Lyapunov function 
otherwise. In addition, any function $V$ satisfying \cref{eq:V_conditions_a,eq:V_conditions_b} is referred to as a candidate Lyapunov function.

Following from standard Lyapunov theory \cite[Thm~B.13]{mybibfile:Rawlings2017}, given that $f\big(x, \varphi(x)\big)$ is locally bounded, the existence of a global or local Lyapunov function for closed-loop system \cref{eq:GeneralClosedLoopSystem} certifies that the system is \acrshort{GAS} or \acrshort{LAS}, respectively. In addition, the domain of this Lyapunov function is an (inner) estimate of the closed-loop system's \acrshort{ROA}.

The remainder of this section reviews and highlights the limitations of a specific state-of-the-art stability verification procedure in which a semialgebraic-set-based system model of closed-loop system \cref{eq:GeneralClosedLoopSystem} is used to formulate one or more \acrshortpl{SDP} searching for a Lyapunov function. \Cref{sec:SoAReview_SAModeling} reviews how the semialgebraic-set-based system model of closed-loop system $x^+=f\big(x,\varphi(x)\big)$ is obtained.  \Cref{sec:SoAReview_SDPFormulation} details how this system model is used to formulate the \acrshort{SDP}(s) searching for a Lyapunov function. \Cref{sec:SoAReview_LimitationsOfSoA} concludes this section by summarizing the limitations of this state-of-the-art stability verification procedure, which are addressed by our contributions in \cref{sec:ModelingContributions,sec:StabilityVerificationContributions} of this work.

\subsection{Semialgebraic-set-based System Model}
\label{sec:SoAReview_SAModeling}
Following \cite{mybibfile:Korda2022,mybibfile:Newton2021,mybibfile:Newton2022}, closed-loop system \cref{eq:GeneralClosedLoopSystem} is modeled using semialgebraic sets, i.e. sets defined using polynomial (in)equalities. The resulting semialgebraic sets, $\mathbf{K}_\varphi$ and $\mathbf{K}_L$, model \acrshort{NNC} $\varphi$ and composed loop $L = \varphi \circ f \circ (\textrm{id}, \, \varphi)$, shown in \cref{subfig:NeuralNetworkGraph,subfig:OpenLoopGraph}, respectively, by providing semialgebraic descriptions of their graphs.

\begin{defn}[Graph of a function]
    \label{defn:FunctionGraph}
    Given a function $f:\mathcal{X}\rightarrow \mathcal{Y}$, its graph $\graph(f)$ is defined by the set
    \vspace{-0.3\baselineskip}
    \begin{equation}
        \graph({f})=
        \big\{ \big(x,y\big) 
        \; \big\vert \; x \in \mathcal{X}, \, y = f(x)  \big\}.
        \label{eq:GraphDefinition}
    \end{equation}
\end{defn}
\vspace{0.3\baselineskip}

Ideally, the semialgebraic sets $\mathbf{K}_\varphi$, $\mathbf{K}_L$ are identically equal to $\graph(\varphi)$, $\graph(L)$, respectively. This allows the system model to capture the exact input-output relation of \acrshort{NNC} $\varphi$ and composed loop $L$. If such an exact description is not available, the semialgebraic set defines the graph of a multivalued function encompassing the desired function $\varphi$ or $L$, e.g. $\graph(\varphi) \subset \mathbf{K}_\varphi$. System models containing such inexact descriptions fail to uniquely describe how the states of the closed-loop system change over time, which is undesirable as the system model $\big(\mathbf{K}_\varphi, \, \mathbf{K}_{L}\big)$ specifies all state trajectories along which a Lyapunov function must exhibit a decrease.

\subsubsection{Semialgebraic Description of \texorpdfstring{\acrshort{NNC} $\varphi$}{NNC Phi}}
\label{sec:SoAReview_SAModeling_Kphi}
Previous works \cite{mybibfile:Korda2022,mybibfile:Newton2021,mybibfile:Newton2022} have restricted \acrshort{NNC} $\varphi$ to be a fixed, deep, feedforward neural network $\varphi\colon \realsN{n_x} \mapsto \realsN{n_u}$ of the form
\begin{subequations}
\label{eq:FFNNdefinition}
    \begin{alignat}{2}
        \varphi(x) &= A_{\ell+1} \circ \phi_\ell \circ A_{\ell} \circ \, \ldots \circ \phi_2 \circ A_2 \circ \phi_1 \circ A_1(x), \ &&
        \label{eq:FFNNdefinition_Concatenation} \\
        A_i(x) &= W_i x + b_i. \quad && 
        \label{eq:FFNNdefinition_AffineTransformation} 
    \end{alignat}
\end{subequations}
Here, $W_i \in \realsN{n_{i} \times n_{i-1}}$, $b_i \in \realsN{n_i}$, $\phi_i \colon \realsN{n_i} \mapsto \realsN{n_{i}}$ represent the weights, biases and stacked activation functions of layer $i$, respectively, and $n_i$ denotes the number of inputs to (hidden) layer $i+1$. 

To obtain a semialgebraic-set-based model of feedforward \acrshort{NNC} $\varphi$, following ideas from robust control, the outputs of the nonlinear components of \cref{eq:FFNNdefinition} are first isolated and collected in a so-called `lifting variable', $\lambda = [\lambda_1\transpose \, , \ldots , \, \lambda_\ell\transpose]\transpose \in \realsN{n_\lambda}$, with

\begin{equation}
    \label{eq:FFNNLifted}
    \begin{alignedat}{1}
        \lambda_1(x) &= \phi_1(W_1x + b_1), \\
        \lambda_2(x) &= \phi_2(W_2\lambda_1(x) + b_2), \\
        \vdots \quad &= \qquad \vdots \\
        \lambda_\ell(x) &= \phi_\ell(W_\ell\lambda_{\ell-1}(x) + b_\ell).
    \end{alignedat}
\end{equation}
In this work, when utilized, such lifting variables are interpreted as auxiliary outputs of the main function to be described, i.e. here semialgebraic set $\mathbf{K}_\varphi$ aims to characterize $\graph\big(\bigl[\!\begin{smallmatrix} \lambda(x) \\ \varphi(x) \end{smallmatrix}\!\bigr]\big)$, as shown in \cref{subfig:NeuralNetworkGraph}.

In contrast to robust control, where the input-output relation of all $\phi_i$ is typically assumed to only be known approximately, the individual nonlinear components $\phi_i$ of a \acrshort{NNC} are (stacked) activation functions whose definitions are known exactly. This motivates the use of semialgebraic sets to model \acrshort{NNC} $\varphi$ instead of conventional techniques from robust control such as integral/incremental constraints or Zames-Falb multipliers. 

In order to obtain a semialgebraic description of feedforward \acrshort{NNC} $\varphi$, the graphs of the individual activation functions $\phi_i$ are first examined. The graphs of some common activation functions, such as \acrshort{ReLU} and the saturation function, can be described exactly using a semialgebraic set \cite{mybibfile:Korda2022}.

\begin{exmp}
    \label{exmp:SAset_SingleReLU}
    Consider the composition of the \acrshort{ReLU} activation function with an affine transformation $A$, $\realsN{n} \ni x \mapsto y = \phi_{\textrm{ReLU}} \circ A(x) = \max(0, \, w\transpose x + b)$. The graph of this function is a semialgebraic set, as 
    $\graph(\phi_{\textrm{ReLU}} \circ A) =$
\begin{equation}
    \label{eq:SingleReLUNetwork}
    \mathbf{K}_{\phi_{\textrm{ReLU}} \circ A} = \left\{ (x, y) 
    \, \left\vert \:
    {\renewcommand{\arraystretch}{0.9}%
    \begin{alignedat}{1}
        & y \geq 0, \ y - w\transpose x - b \geq 0 \\
        \span y \big(y - w\transpose x - b \big) = 0
    \end{alignedat}
    }
    \right. \right\}.
\end{equation}
\end{exmp}
\vspace{0.3\baselineskip}

Other common activation functions are transcendental, e.g. $\phi_{\textrm{sp}}(x) = \ln(1+e^{x})$ or $\phi_{\tanh}(x) = \tfrac{e^x - e^{-x}}{e^x + e^{-x}}$. Such functions do not satisfy a polynomial equation, and therefore polynomial (in)equalities are used to define inexact descriptions of their graphs
\cite{mybibfile:Newton2021}, e.g. via (multiple) sector constraints, slope constraints, or robust control descriptions such as integral/incremental constraints and Zames-Falb multipliers. The resulting inexact semialgebraic set description 
defines the graph of a multivalued function enclosing the desired activation function.

\begin{exmp}
    \label{exmp:SAset_SingleSigmoid}
     Consider the $\tanh$ activation function. 
     Using a single sector constraint and the asymptote values, the graph of $\realsN{n} \ni x \mapsto y =\phi_{\textrm{tanh}}(x) = \tfrac{e^x - e^{-x}}{e^x + e^{-x}}$ can be approximated by
    \begin{equation}
        \label{eq:SigmoidGraphApproximation}
        \mathbf{K}_{\phi_{\textrm{tanh}}} = \left\{(x, y) \, \Bigg\vert 
        {\renewcommand{\arraystretch}{0.9}%
            \begin{bmatrix}
            y + 1 \\
            1 - y \\
            y(x-y)
            \end{bmatrix} \geq 0
        }
        \right\}.
    \end{equation}
\end{exmp}
\vspace{0.4\baselineskip}

Once a semialgebraic-set-based description of each activation function's graph is available, the semialgebraic description of the complete feedforward \acrshort{NNC} $\varphi$ is obtained by
examining the composition of single-neuron graphs across successive layers of the network, resulting in a semialgebraic-set-based description of \acrshort{NNC} $\varphi$ defined by

\begin{equation}
    \label{eq:SemialgebraicNetworkSet}
         \mathbf{K}_\varphi = \Bigg\{ \bigg(x, \begin{bmatrix} \lambda \\ \varphi \end{bmatrix} \bigg) 
         \;  \bigg\vert \; 
                g^{\varphi}(x, \lambda, \varphi) \geq 0, \,
            h^{\varphi}(x, \lambda, \varphi) = 0  \Bigg\},
\end{equation}
where $g^{\varphi} \colon \realsN{n_x} \times \realsN{n_\lambda} \times \realsN{n_u} \mapsto \realsN{n_{g^\varphi}}$ and $h^{\varphi} \colon \realsN{n_x} \times \realsN{n_\lambda} \times \realsN{n_u} \mapsto \realsN{n_{h^\varphi}}$ are (vector-valued) polynomial functions.

\begin{exmp}
    \label{exmp:SAset_CompleteReLU}
    Assume \cref{eq:FFNNdefinition} utilizes only \acrshort{ReLU} activation functions. Then, application of \cref{eq:SingleReLUNetwork} to all neurons in \acrshort{NNC} $\varphi$ leads to
    \begin{equation}
    \label{eq:RecursiveReLUSemialgebraicSet}
    \left\{ (\lambda_{i-1}, \lambda_i) 
    \, \left\vert \:
    \begin{alignedat}{1}
        & \lambda_i \geq 0, \ \lambda_i - W_i \lambda_{i-1} - b_i \geq 0 
        \\ & \lambda_i \odot \left(\lambda_i - W_i \lambda_{i-1} - b_i \right) = 0
    \end{alignedat}
    \right. \right\}, \quad \forall i \in [\ell],
    \end{equation}
    where $\lambda_0 = x$ only here for notational purposes. An exact semialgebraic description $\mathbf{K}_\varphi$ of the graph $\realsN{n} \ni x \mapsto [\lambda(x)\transpose, \, \varphi(x)\transpose]\transpose$ is obtained by considering the union of these sets with the $n_u$ polynomial equalities $\varphi - W_{\ell+1} \lambda_\ell - b_{\ell+1} = 0$.    
    \end{exmp}
\vspace{0.3\baselineskip}

As is clear from \cref{exmp:SAset_SingleSigmoid}, any set $\mathbf{K}_\varphi$ constructed using inexact semialgebraic-set-based descriptions of individual activation functions $\phi_i$ defines a system model specifying multiple admissible outputs $\bigl[ \begin{smallmatrix} \lambda \\ \varphi \end{smallmatrix} \bigr]$ for any input $x \in \realsN{n_x}$.

\subsubsection{Semialgebraic Description of Composed Loop \texorpdfstring{L}{L}}
\label{sec:SoAReview_SAModeling_KL}
A semialgebraic description of the graph of composed loop $L = \varphi \circ f \circ (\textrm{id}, \varphi)$ is obtained by examining the function $\realsN{n} \ni x \mapsto [\lambda(x)\transpose, \, \allowdisplaybreaks \varphi(x)\transpose, \, \allowdisplaybreaks {x^+}(x)\transpose, \, \allowdisplaybreaks \lambda^+(x)\transpose, \, \allowdisplaybreaks \varphi^+(x)\transpose]\transpose$, shown in \cref{subfig:OpenLoopGraph}. The semialgebraic set describing the graph of this function is constructed by considering an additional composition with the polynomial dynamics $f$, which is achieved via the inclusion of the equalities $x^+ - f(x, \varphi) = 0$. Thus, following the technique used to construct $\mathbf{K}_\varphi$, consider the semialgebraic description of $L$ and its auxiliary outputs,
\begin{equation}
    \label{eq:SemialgebraicComposedLoopSet}  
    \begin{aligned}
         & \mathbf{K}_L =  \Bigg\{ 
        \Big(x, \, [\lambda\transpose\!\!, \, \varphi\transpose\!\!, \, ({x}^{+})\transpose\!\!\!\!, \,
        (\lambda^+)\transpose\!\!\!\!, \, ({\varphi^+})\transpose]\transpose\Big) 
        \;  \bigg\vert
        \\  & \qquad \qquad \qquad \qquad \quad
        \begin{aligned}
             g^L(x,  \, \lambda, \, \varphi, \, x^+, \, \lambda^+, \, \varphi^+) & \geq 0, \\
             h^L(x, \, \lambda, \, \varphi, \, x^+, \, \lambda^+, \, \varphi^+) &= 0
        \end{aligned} \, 
        \Bigg\},
    \end{aligned}
\end{equation}
with $g^L \colon \realsN{n_x} \! \times \! \realsN{n_\lambda} \! \times \! \realsN{n_u} \! \times \! \realsN{n_x} \! \times \! \realsN{n_\lambda} \! \times \! \realsN{n_u} \mapsto \realsN{n_{g^L}}$ and $h^L \colon \realsN{n_x} \! \times \! \realsN{n_\lambda} \! \times \! \realsN{n_u} \! \times \! \realsN{n_x} \! \times \! \realsN{n_\lambda} \! \times \! \realsN{n_u} \mapsto \realsN{n_{h^L}}$ (vector-valued) polynomial functions. 

By construction, the semialgebraic-set-based system model 
$(\mathbf{K}_\varphi, \mathbf{K}_L)$
describes the input-output relations of \acrshort{NNC} $\varphi$ and composed loop $L$ for all $x \in \realsN{n}$, respectively, and does so exactly in case all neurons in feedforward \acrshort{NNC} $\varphi$ are modeled using an exact semialgebraic description. 

To simplify the notation in the remainder of this work, we now define the vectors
$\zeta = [\zeta_x\transpose, \, \zeta_\lambda\transpose, \, \zeta_\varphi\transpose]\transpose \in \realsN{n_x+n_\lambda+n_u} = \realsN{n_\zeta}$ and $\xi = [\xi_x\transpose, \, \xi_\lambda\transpose, \, \xi_\varphi\transpose, \, \xi_{x^+}\transpose, \, \xi_{\lambda^+}\transpose, \, \xi_{\varphi^+}\transpose]\in \realsN{2(n_x + n_\lambda + n_u)} = \realsN{n_\xi}$.

\subsection{SDP formulation}
\label{sec:SoAReview_SDPFormulation}
Next, using the semialgebraic-set-based system model $\big( \mathbf{K}_\varphi, \, \mathbf{K}_L \big)$, the \acrshort{SDP}(s) searching for a
Lyapunov function for closed-loop system \cref{eq:GeneralClosedLoopSystem} are formulated.
The parameterization of the solution space and the formulation of all \acrshort{SDP} problems are discussed below.

\subsubsection{Parameterization of Solution Space}
\label{sec:SOAReview_SDPFormulation_VParameterization}
In order to pose the search for a Lyapunov function 
as an \acrshort{SDP}, the search domain for this Lyapunov function is defined as a finite-dimensional function space using \acrshort{SOS} polynomials.

\begin{defn}[\Acrfull{SOS} polynomial]
    \label{defn:SOSpolynomial}
    A polynomial $\sigma(\xi)$ is \acrshort{SOS} if it admits a decomposition as a sum of squared polynomials. Let $\nu(\xi)$ denote a vector of monomial terms generated by entries of $\xi$ and let $L$ represent a matrix of coefficients. \acrshort{SOS} polynomials admit an equivalent semidefinite representation
    \begin{equation}
        \label{eq:SOSDefinition}
        \sigma(\xi) = \sum_i \sigma_i(\xi)^2 = \sum_i (l_i\transpose \nu(\xi))^2 = \nu(\xi)\transpose \underbrace{L\transpose L}_{\succeq 0} \nu(\xi),       
    \end{equation}    
    allowing \acrshort{SOS} polynomials to be used in \acrshortpl{SDP}. See the work of Parrilo \cite{mybibfile:Parrilo2003} for more information.
\end{defn}

Following the modeling procedure of \cref{sec:SoAReview_SAModeling}, the semialgebraic set $\mathbf{K}_{\varphi}$ specifies the model-admissible values of $\bigl[ \begin{smallmatrix} \lambda \\ \varphi \end{smallmatrix} \bigr]$ for any input $x \in \realsN{n_x}$. Therefore, in addition to the state vector $x$, lifting variables $\lambda$ and control output $\varphi$ are used to parameterize the finite-dimensional function space that is explored in the \acrshort{SDP}-based search for a Lyapunov function. Following \cref{defn:LyapunovFunction}, to ensure that the function space consists of non-negative functions for all $\big(x, \bigl[ \begin{smallmatrix} \lambda \\ \varphi \end{smallmatrix} \bigr]\big) \in \graph(\varphi)\subseteq \mathbf{K}_\varphi$, the space of functions $V \colon \realsN{n_x} \mapsto \realsN{}_{\geq 0}$ is parameterized as

\begin{equation}
{\scalebox{0.925}{$
    \begin{aligned}
        \label{eq:LyapunovParam}
        V(x) &= V_{\zeta}(x, \lambda(x), \varphi(x))
        \\ 
        & 
        \begin{multlined}
            = \sigma^V\big(x, \lambda(x), \varphi(x)\big) 
            \\
            + \sigma^V_{\textrm{ineq}}(x, \lambda(x), \varphi(x))\transpose 
            \underbrace{\begin{bmatrix}
                \mathcal{M}\big(g^\varphi(x, \lambda(x), \varphi(x)), 1\big) \\ 
                \mathcal{M}(g^\varphi(x, \lambda(x), \varphi(x)), 2\big) \\ 
                \vdots
            \end{bmatrix}}_{g^V},
        \end{multlined}
    \end{aligned}
$}}
\end{equation}
with parameters $\sigma^V$, $\sigma^V_{\textrm{ineq}}$ representing any scalar \acrshort{SOS} polynomial and any vector of \acrshort{SOS} polynomials, respectively. This class of functions is assumed continuous in $x$.

\begin{assume}
    \label{assume:ContinuityOfV}
    All functions $V$ parameterized by \cref{eq:LyapunovParam} are continuous functions of $x$ for all $x \in \realsN{n_x}$.
\end{assume}

For feedforward \acrshortpl{NNC} $\varphi$ of \cref{eq:FFNNdefinition} this assumption is generally non-restrictive as $\lambda(x)$ and $\varphi(x)$ are continuous function of $x$ for 
continuous stacked activation functions $\phi_1, \ldots, \phi_\ell$.

The solution space parameterized by \cref{eq:LyapunovParam} is now combined with the semialgebraic system model $\big(\mathbf{K}_\varphi, \, \mathbf{K}_L\big)$ to formulate \acrshortpl{SDP} posing the search for a 
Lyapunov function for closed-loop system \cref{eq:GeneralClosedLoopSystem} as an optimization problem.

\subsubsection{Global Lyapunov Function}
\label{sec:SOAReview_SDPFormulation_GlobalStability}
Following \cref{eq:V_conditions_c}, the value of any valid global Lyapunov function $V$ for closed-loop system \cref{eq:GeneralClosedLoopSystem} must strictly decrease over time for all $x \in \realsN{n} \setminus \{0\}$. 
To guarantee this required decrease for a function $V$ parameterized by \cref{eq:LyapunovParam} and all model-admissible input-output pairs $\big(x, \, [\lambda\transpose\!\!, \, \varphi\transpose\!\!, \,({x}^{+})\transpose\!\!\!\!, \, (\lambda^+)\transpose\!\!\!\!, \, ({\varphi^+})\transpose]\transpose\big) \in \mathbf{K}_L$, 
it is sufficient to show that 
\begin{equation}
{\scalebox{0.925}{$
\begin{aligned}
    & V_{\zeta}(\xi_x, \xi_\lambda, \xi_\varphi) - V_{\zeta}(\xi_{x^+}, \xi_{\lambda^+}, \xi_{\varphi^+}) - \|\xi_x\|^2 - p^{\Delta V}_{\textrm{eq}}(\xi)\transpose h^L(\xi)
    \\ 
    & \ \  - \sigma^{\Delta V}(\xi) - \sigma^{\Delta V}_{\textrm{ineq}}(\xi) \transpose
    \begin{bmatrix}
        \mathcal{M}\big(g^L(\xi), 1\big) \\
        \mathcal{M}\big(g^L(\xi), 2\big) \\
        \vdots
    \end{bmatrix}
      \geq 0, \ \forall \xi 
      \in \realsN{n_{\xi}}, 
    \label{eq:LyapunovDecreaseCond}
\end{aligned}
$}}
\end{equation}
with $p^{\Delta V}_{\textrm{eq}}$ any vector of arbitrary polynomials, $\sigma^{\Delta V}$ any scalar \acrshort{SOS} polynomial and $\sigma^{\Delta V}_{\textrm{ineq}}$ any vector of \acrshort{SOS} polynomials. By substituting the variables $\xi_x$, $\xi_\lambda$, $\xi_\varphi$, $\xi_{x^+}$, $\xi_{\lambda^+}$, $\xi_{\varphi^+}$ in \cref{eq:LyapunovDecreaseCond} with $x$, $\lambda(x)$, $\varphi(x)$, ${x^+}(x)$, ${\lambda^+}(x)$, ${\varphi^+}(x)$, respectively, 
it follows that satisfaction of \cref{eq:LyapunovDecreaseCond} is a sufficient condition to guarantee the required decrease of $V$ with $\alpha_3(\|x\|) \geq \|x\|^2$ along all state trajectories defined admissible by semialgebraic-set-based system model $\mathbf{K}_L$.

A sufficient, numerically-efficient condition to ensure the satisfaction of \cref{eq:LyapunovDecreaseCond} is to check if it itself is \acrshort{SOS}, e.g. bringing \cref{eq:LyapunovDecreaseCond} to a canonical form 
\begin{equation}
 {\nu_{\textrm{tot}}^{\Delta V}}(\xi)\transpose Q_{\textrm{tot}}^{\Delta V} \nu_{\textrm{tot}}^{\Delta V}(\xi) \geq 0, \ \forall \xi \in \realsN{n_\xi},   
\end{equation}
and verifying $Q_{\textrm{tot}}^{\Delta V} \succeq 0$. Interpreting \cref{eq:LyapunovDecreaseCond} as such an \acrshort{SOS} constraint leads to the \acrshort{SDP} formulation

\begin{subequations}
    \label{eq:SDPFormulationGlobalAsymptoticStability}
    \begin{alignat}{4}
        &\span\span \text{find:} \ & \sigma^V, \, \sigma^V_{\textrm{ineq}}, \ \ \, \,
        & \!\!\! \! \! \! \! \sigma^{\Delta V},  \, \sigma^{\Delta V}_{\textrm{ineq}}, \,  p^{\Delta V}_{\textrm{eq}} \span \nonumber \\
        &\span\span \text{s.t.} \ & \eqref{eq:LyapunovParam}, \ & \eqref{eq:LyapunovDecreaseCond}, \span \\
        &\span\span & \sigma^V, \sigma^V_{\textrm{ineq}}, \sigma^{\Delta V}, \sigma^{\Delta V}_{\textrm{ineq}} \quad & \text{\acrshort{SOS} polynomials,} \\
        &\span\span & 
        p^{\Delta V}_{\textrm{eq}} \quad & \text{arbitrary polynomials,}
    \end{alignat}
\end{subequations}
whose solution defines a global Lyapunov function for closed-loop system \cref{eq:GeneralClosedLoopSystem}.

\begin{thm}
    \label{thm:GASSDPProof}
    Under \cref{assume:ContinuityOfV}, any solution to \acrshort{SDP} \cref{eq:SDPFormulationGlobalAsymptoticStability} defines a global Lyapunov function for closed-loop system \cref{eq:GeneralClosedLoopSystem}.
\end{thm}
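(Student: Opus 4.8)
The plan is to verify that the function $V$ obtained from any feasible point of \acrshort{SDP} \cref{eq:SDPFormulationGlobalAsymptoticStability} via \cref{eq:LyapunovParam} is a global Lyapunov function for \cref{eq:GeneralClosedLoopSystem} in the sense of \cref{defn:LyapunovFunction} with $\mathcal{X}=\realsN{n_x}$; \acrshort{GAS} then follows from \cite[Thm~B.13]{mybibfile:Rawlings2017}, as already noted, since $f(x,\varphi(x))$ is locally bounded. I would carry this out in three steps: non-negativity of $V$, the decrease inequality \cref{eq:V_conditions_c}, and the two $\mathcal{K}_\infty$ bounds \cref{eq:V_conditions_a,eq:V_conditions_b}.

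First I would show $V\geq 0$ on $\realsN{n_x}$. For every $x$ the pair $\big(x,[\lambda(x)\transpose,\varphi(x)\transpose]\transpose\big)$ lies in $\mathbf{K}_\varphi$, hence $g^\varphi(x,\lambda(x),\varphi(x))\geq 0$, so every entry of the vector $g^V$ in \cref{eq:LyapunovParam} (a vector of monomials in the entries of $g^\varphi$) is non-negative; since $\sigma^V$ is \acrshort{SOS} and $\sigma^V_{\textrm{ineq}}$ is a vector of \acrshort{SOS} polynomials, $V(x)=\sigma^V(\cdot)+\sigma^V_{\textrm{ineq}}(\cdot)\transpose g^V\geq 0$. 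In particular $V\big(f(x,\varphi(x))\big)\geq 0$.

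Next I would extract the decrease. Substituting $(\xi_x,\xi_\lambda,\xi_\varphi,\xi_{x^+},\xi_{\lambda^+},\xi_{\varphi^+})$ in \cref{eq:LyapunovDecreaseCond} by $\big(x,\lambda(x),\varphi(x),x^+(x),\lambda^+(x),\varphi^+(x)\big)$ places the argument in $\mathbf{K}_L$ by construction of the system model, so $h^L=0$ and $g^L\geq 0$ along the substitution; hence $p^{\Delta V}_{\textrm{eq}}\transpose h^L=0$, while $\sigma^{\Delta V}$ and $\sigma^{\Delta V}_{\textrm{ineq}}\transpose[\mathcal{M}(g^L,1)\transpose,\mathcal{M}(g^L,2)\transpose,\dots]\transpose$ are non-negative, and \cref{eq:LyapunovDecreaseCond} reduces to $V_\zeta(x,\lambda(x),\varphi(x))-V_\zeta(x^+,\lambda^+(x),\varphi^+(x))\geq\|x\|^2$. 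The bookkeeping point to be careful about is that, again by construction of $\mathbf{K}_L$, the propagated variables satisfy $\lambda^+(x)=\lambda(x^+)$ and $\varphi^+(x)=\varphi(x^+)$ with $x^+=f(x,\varphi(x))$, so the subtracted term equals $V(x^+)$. This is exactly \cref{eq:V_conditions_c} with $\alpha_3(\|x\|)=\|x\|^2$, and it is unaffected by replacing $V$ by $V-V(0)$.

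The remaining, and I expect main, obstacle is \cref{eq:V_conditions_b} and the implicit normalization $V(0)=0$ it forces; this is where \cref{assume:ContinuityOfV} is indispensable. Telescoping the decrease along any trajectory of \cref{eq:GeneralClosedLoopSystem} and using $V\geq 0$ gives $\sum_{k\geq 0}\|x(k)\|^2\leq V(x(0))<\infty$, hence $x(k)\to 0$ for every initial condition; since $k\mapsto V(x(k))$ is non-increasing and bounded below it converges, and by continuity of $V$ its limit is $V(0)$, so $V(x(0))\geq V(0)$ for all $x(0)$ and therefore $V(0)=\min_x V(x)$. Replacing $V$ by $V-V(0)$ we may thus assume $V\geq 0$ and $V(0)=0$; then $V(x)\geq V(x)-V(x^+)\geq\|x\|^2$, giving \cref{eq:V_conditions_a} with $\alpha_1(s)=s^2\in\mathcal{K}_\infty$, and $\bar\alpha(s):=\max_{\|x\|\leq s}V(x)$ is finite (continuity plus compactness), continuous, non-decreasing with $\bar\alpha(0)=0$, so $\alpha_2(s):=\bar\alpha(s)+s\in\mathcal{K}_\infty$ and $V(x)\leq\alpha_2(\|x\|)$, which is \cref{eq:V_conditions_b}. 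All three conditions of \cref{defn:LyapunovFunction} then hold on $\mathcal{X}=\realsN{n_x}$, so $V$ is a global Lyapunov function for \cref{eq:GeneralClosedLoopSystem}.
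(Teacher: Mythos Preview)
Your proof is correct and follows the same overall skeleton as the paper (which defers to the proof of \cref{thm:ValidityOptimProblemInvariantLocalAsymptoticStability} specialized to $\mathcal{Q}=\realsN{n_x}$): establish $V\geq 0$ from the \acrshort{SOS} structure, extract the decrease $V(x)-V(x^+)\geq\|x\|^2$ from \cref{eq:LyapunovDecreaseCond} by substitution into $\mathbf{K}_L$, show $V(0)=\min_x V(x)$, then build $\alpha_1,\alpha_2,\alpha_3$ via the Kalman-type upper bound.

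The one substantive difference is how the minimum $V(0)=\min V$ is obtained. The paper partitions the state space into the compact ball $\{\|x\|^2\leq V(0)\}$ and its complement, uses compactness to produce a minimizer on the ball, and reaches a one-step contradiction (if $\bar x\neq 0$ were a minimizer then $V(\bar x^+)<V(\bar x)\leq V(0)$ forces $\bar x^+$ back into the ball). You instead telescope the decrease along the whole trajectory, conclude $x(k)\to 0$, and use continuity of $V$ to identify the limit of the non-increasing sequence $V(x(k))$ as $V(0)$. Both arguments are valid; yours is arguably more direct for the global statement, while the paper's compactness-plus-one-step argument is chosen so that exactly the same text also serves the local result \cref{thm:ValidityOptimProblemInvariantLocalAsymptoticStability}.
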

\begin{proof}
    Korda \cite[Thm. 2]{mybibfile:Korda2022} presents a direct proof of \acrshort{GAS} given a solution to \acrshort{SDP} \cref{eq:SDPFormulationGlobalAsymptoticStability}, and refers to $V$ as a Lyapunov function in the remainder of his work.

    For an additional formal proof that any function $V$ defined by a solution to \acrshort{SDP} \cref{eq:SDPFormulationGlobalAsymptoticStability} defines a global Lyapunov function according to \cref{defn:LyapunovFunction}, consider the proof of \cref{thm:ValidityOptimProblemInvariantLocalAsymptoticStability} applied to solutions of \acrshort{SDP} \cref{eq:SDPFormulationGlobalAsymptoticStability} where i) $\mathcal{Q}$ should be interpreted as $\realsN{n_x}$, and ii) implications following from constraint \cref{eq:LocalLyapunovDecreaseCond} now follow from \cref{eq:LyapunovDecreaseCond}. 
\end{proof}

\subsubsection{Local Lyapunov Function}
\label{sec:SOAReview_SDPFormulation_LocalStability}
By \cref{defn:LyapunovFunction}, the value of a local Lyapunov function is only required to decrease over time for a subset of all possible state trajectories. To define this subset, 
the set $\mathcal{Q}\subseteq\realsN{n_x}$ is introduced, defined as
\begin{equation}
    \mathcal{Q} = \big\{ x \in \realsN{n} \mid q(x) = q_\zeta(x, \lambda(x), \varphi(x)) \geq 0 \big\}
    \label{eq:LocalRegionQDefinition}
\end{equation}
with $0 \in \textrm{int}(\mathcal{Q})$ and $q_\zeta\colon \realsN{n_x} \times \realsN{n_\lambda} \times \realsN{n_u} \mapsto \realsN{n_q}$ a polynomial function. The function $q$ is assumed continuous.

\begin{assume}
\label{assume:ContinuityOfQinx}
    The function $q(x)$ characterizing $\mathcal{Q}$ is a continuous function of $x$ for all $x\in\realsN{n_x}$.
\end{assume}

Following \cref{defn:FunctionGraph}, the description $\mathbf{K}_{L}$ of composed loop $L$ can be limited to $x \in \mathcal{Q}$ by including the polynomial inequality $q_\zeta(x, \lambda, \varphi) \geq 0$ in $\mathbf{K}_L$. Applying the methodology used to construct \acrshort{SDP} \cref{eq:SDPFormulationGlobalAsymptoticStability}, the inclusion of the constraint $x \in \mathcal{Q}$ results in a modification of \cref{eq:LyapunovDecreaseCond} to

\begin{equation}
{\scalebox{0.925}{$
\begin{aligned}
    &V_\zeta(\xi_{x}, \xi_\lambda, \xi_\varphi) - V_\zeta(\xi_{x^+}, \xi_{\lambda^+}, \xi_{\varphi^+}) - \|\xi_x\|^2 -  p^{\Delta V}_{\textrm{eq}}(\xi)\transpose h^L(\xi)
    \\
    & \ \  - \sigma^{\Delta V}(\xi) - \sigma^{\Delta V}_{\textrm{ineq}}(\xi) \transpose 
     \begin{bmatrix}
        \mathcal{M}\Big( \left[\!\begin{smallmatrix} q_\zeta(\xi_{x}, \xi_\lambda, \xi_\varphi) \\ g^L(\xi) \end{smallmatrix}\!\right]^{\phantom{|}}\!\!, 1 \Big)
        \\
        \mathcal{M}\Big( \left[\!\begin{smallmatrix} q_\zeta(\xi_{x}, \xi_\lambda, \xi_\varphi) \\ g^L(\xi) \end{smallmatrix}\!\right]^{\phantom{|}}\!\!, 2 \Big)
        \\
        \vdots
    \end{bmatrix} 
    \geq 0, \  \forall \xi 
    \in \realsN{n_{\xi}},
    \label{eq:LocalLyapunovDecreaseCond}
\end{aligned}
$}}
\end{equation}
which is similarly interpreted as an \acrshort{SOS} constraint in the 
\acrshort{SDP}
\begin{subequations}
    \label{eq:SDPFormulationLocalAsymptoticStability}
    \begin{alignat}{4}
        &\span\span \text{find:} \ & \sigma^V, \, \sigma^V_{\textrm{ineq}}, \ \ \, \,
        & \!\!\! \! \! \! \! \sigma^{\Delta V},  \, \sigma^{\Delta V}_{\textrm{ineq}}, \,  p^{\Delta V}_{\textrm{eq}} \span \nonumber \\
        &\span\span \text{s.t.} \ & \eqref{eq:LyapunovParam}, \ & \eqref{eq:LocalLyapunovDecreaseCond}, \span \\
        &\span\span & \sigma^V, \sigma^V_{\textrm{ineq}}, \sigma^{\Delta V}, \sigma^{\Delta V}_{\textrm{ineq}} \quad & \text{\acrshort{SOS} polynomials,} \label{eq:LocalAsymptoticStabilitySOS} \\
        &\span\span & 
        p^{\Delta V}_{\textrm{eq}} \quad & \text{arbitrary polynomials.} \label{eq:LocalAsymptoticStabilityP}
    \end{alignat}
\end{subequations}

However, the set $\mathcal{Q}$ is not known a priori to be invariant, and therefore a general solution to \acrshort{SDP} \labelcref{eq:SDPFormulationLocalAsymptoticStability} is not guaranteed to define a valid Lyapunov function (see \cref{exmp:CounterexampleLAS} of \cref{sec:StabilityVerificationContributions}). To recover this guarantee, the general solution space defined by \cref{eq:LyapunovParam} is heavily restricted to a small class of known candidate Lyapunov functions, e.g. $V(x) = x\transpose P x$, $P \succ 0$ \cite{mybibfile:Korda2017,mybibfile:Yin2022,mybibfile:Pauli2021,mybibfile:Newton2022}, which reduces the space of potential solutions for the \acrshort{SDP}.

Once a solution to \acrshort{SDP} \cref{eq:SDPFormulationLocalAsymptoticStability} has been obtained, certifying that this solution defines a valid Lyapunov function for closed-loop system \cref{eq:GeneralClosedLoopSystem} involves finding an invariant set $\mathcal{X} \subseteq \mathcal{Q}$, which will then form an estimate of the system's \acrshort{ROA}.
By \cref{eq:LocalLyapunovDecreaseCond}, a potential trivial class of invariant sets in $\mathcal{Q}$ consists of sublevel sets of $V$, $\mathcal{L}_\gamma(V) \coloneq \big\{ x \in \realsN{n} \mid V(x) \leq \gamma \big\}$. Therefore, a second optimization problem is formulated to find the largest sublevel set of $V$ contained in $\mathcal{Q}$.
Consider the condition
\begin{multline}
    \sigma^{\mathcal{Q}}_q(\zeta) q_\zeta(\zeta) \geq p^{\mathcal{Q}}_{\textrm{eq}}(\zeta)\transpose h^\varphi(\zeta) + \sigma^{\mathcal{Q}}(\zeta) + \sigma^{\mathcal{Q}}_V(\zeta) \big( \gamma - V_\zeta(\zeta) \big) \\ + \sigma^{\mathcal{Q}}_{\textrm{ineq}}(\zeta)\transpose  \begin{bmatrix}
        \mathcal{M}\big(g^\varphi(\zeta), 1 \big) \\
        \mathcal{M}\big(g^\varphi(\zeta), 2 \big) \\
        \vdots        
    \end{bmatrix}, \  \forall \zeta 
    \in \realsN{n_{\zeta}},
    \label{eq:SublevelSetCond}
\end{multline}
with $p^{\mathcal{Q}}_{\textrm{eq}}$ a vector of arbitrary polynomials, $\sigma^{\mathcal{Q}}_q$, $\sigma^{\mathcal{Q}}_V$ and $\sigma^{\mathcal{Q}}$ scalar \acrshort{SOS} polynomials, $\sigma^{\mathcal{Q}}_{\textrm{ineq}}$ a vector of \acrshort{SOS} polynomials, and $V_\zeta$ given by a solution to \acrshort{SDP} \cref{eq:SDPFormulationLocalAsymptoticStability}. By substituting the variables $\zeta_x$, $\zeta_\lambda$, $\zeta_\varphi$ in \cref{eq:SublevelSetCond} with $x$, $\lambda(x)$, $\varphi(x)$, respectively, it follows that satisfaction of \cref{eq:SublevelSetCond} is a sufficient condition to guarantee $q(x) \geq 0$ for all $x\in\mathcal{L}_{\gamma}(V)$, i.e. $\mathcal{L}_\gamma(V) \subseteq \mathcal{Q}$.
Viewing \cref{eq:SublevelSetCond} as an \acrshort{SOS} constraint leads to the optimization problem
\begin{subequations}
    \label{eq:SDPFormulationLargestSublevelSet}
    \begin{alignat}{4}
        &\span\span  \underset{ \substack{ \gamma, \; \sigma^{\mathcal{Q}}_q, \, \sigma^{\mathcal{Q}}, \\ \sigma^{\mathcal{Q}}_V, \, \sigma^{\mathcal{Q}}_\textrm{ineq}, \, p^{\mathcal{Q}}_{\textrm{eq}}}}{\textrm{maximize:}} \ &   & \! \! \! \! \! \! \! \! \gamma  \span \nonumber \\
        &\span\span \text{s.t.} \ &  & \! \! \! \! \! \! \! \! \! \! \! \! \!   \eqref{eq:SublevelSetCond}, \span \label{eq:SDPFormulationLargestSublevelSet_SublevelSetCond} \\
        &\span\span & \sigma^{\mathcal{Q}}_q, \sigma^{\mathcal{Q}}, \sigma^{\mathcal{Q}}_V, \sigma^{\mathcal{Q}}_{\textrm{ineq}} \quad & \text{\acrshort{SOS} polynomials,} \label{eq:test_2} \\
        &\span\span & p^{\mathcal{Q}}_{\textrm{eq}} \quad & \text{arbitrary polynomials,} \label{eq:test_3}
    \end{alignat}
\end{subequations}
which can be solved as an \acrshort{SDP} by i) taking $\sigma^{\mathcal{Q}}_V$ to be a fixed \acrshort{SOS} polynomial, e.g. $\|\zeta\|^2$, or ii) applying a line search method on $\gamma$. Thus, consecutively solving \acrshortpl{SDP} \cref{eq:SDPFormulationLocalAsymptoticStability,eq:SDPFormulationLargestSublevelSet} with $V(0) < \gamma$ defines a local Lyapunov function $V$ in $\mathcal{L}_\gamma(V)$ for closed-loop system \cref{eq:GeneralClosedLoopSystem}.

\begin{thm}
    \label{thm:LASSDPProof}
     Under \cref{assume:ContinuityOfV}, any set of consecutive solutions to \acrshortpl{SDP} \cref{eq:SDPFormulationLocalAsymptoticStability,eq:SDPFormulationLargestSublevelSet} satisfying $V(0) < \gamma$ defines a local Lyapunov function in $\mathcal{L}_\gamma(V)$ for closed-loop system \cref{eq:GeneralClosedLoopSystem}, proving $\mathcal{L}_\gamma(V)$ forms part of this system's \acrshort{ROA}.
\end{thm}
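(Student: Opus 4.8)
The plan is to verify that the pair $(V,\mathcal{X})$, with $\mathcal{X}\coloneq\mathcal{L}_\gamma(V)$, meets every requirement of \cref{defn:LyapunovFunction}; \acrshort{LAS} of \cref{eq:GeneralClosedLoopSystem} and the claim that $\mathcal{L}_\gamma(V)$ is an inner \acrshort{ROA} estimate then follow from standard Lyapunov theory (\cite[Thm~B.13]{mybibfile:Rawlings2017}, cf.\ the discussion following \cref{eq:V_conditions}) together with the assumed local boundedness of the closed-loop map. Concretely I would establish, in order: (i) continuity and global nonnegativity of $V$; (ii) $\mathcal{L}_\gamma(V)\subseteq\mathcal{Q}$ and $0\in\textrm{int}\big(\mathcal{L}_\gamma(V)\big)$; (iii) the pointwise decrease \cref{eq:V_conditions_c} on $\mathcal{Q}$ — hence on $\mathcal{X}$ — and positive invariance of $\mathcal{X}$; and (iv) the class-$\mathcal{K}_\infty$ bounds \cref{eq:V_conditions_a,eq:V_conditions_b} on $\mathcal{X}$.

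For (i), \cref{assume:ContinuityOfV} supplies continuity, and I would evaluate the parameterization \cref{eq:LyapunovParam} along the true network, i.e.\ at $\big(x,\lambda(x),\varphi(x)\big)\in\graph(\varphi)\subseteq\mathbf{K}_\varphi$: there $g^\varphi\big(x,\lambda(x),\varphi(x)\big)\geq 0$, and since each entry of $g^V$ is a product of entries of this nonnegative vector and $\sigma^V,\sigma^V_{\textrm{ineq}}$ are \acrshort{SOS}, $V(x)\geq 0$ for all $x\in\realsN{n_x}$. For (ii) I would invoke the argument given directly below \cref{eq:SublevelSetCond}: any feasible point of \acrshort{SDP} \cref{eq:SDPFormulationLargestSublevelSet} certifies $\mathcal{L}_\gamma(V)\subseteq\mathcal{Q}$; combined with continuity of $V$ and $V(0)<\gamma$, the open set $\{x:V(x)<\gamma\}$ lies in $\mathcal{L}_\gamma(V)$ and contains the origin, so $0\in\textrm{int}\big(\mathcal{L}_\gamma(V)\big)$.

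For (iii) I would substitute $(\xi_x,\xi_\lambda,\xi_\varphi,\xi_{x^+},\xi_{\lambda^+},\xi_{\varphi^+})$ in \cref{eq:LocalLyapunovDecreaseCond} by $\big(x,\lambda(x),\varphi(x),x^+(x),\lambda^+(x),\varphi^+(x)\big)$. For every $x\in\mathcal{Q}$ this point lies in the restriction of $\mathbf{K}_L$ to $\mathcal{Q}$, so $h^L=0$ while $q_\zeta$ and $g^L$ are nonnegative; hence $p^{\Delta V}_{\textrm{eq}}\transpose h^L$ vanishes and the remaining subtracted terms — the \acrshort{SOS} polynomial $\sigma^{\Delta V}$ and an \acrshort{SOS}-weighted combination of products of nonnegative quantities — are nonnegative. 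Since the composed-loop construction forces $\big(\lambda^+(x),\varphi^+(x)\big)=\big(\lambda(x^+(x)),\varphi(x^+(x))\big)$ with $x^+(x)=f\big(x,\varphi(x)\big)$, the second evaluation of $V_\zeta$ equals $V\big(f(x,\varphi(x))\big)$, and one obtains $V(x)-V\big(f(x,\varphi(x))\big)\geq\|x\|^2$ for all $x\in\mathcal{Q}$, i.e.\ \cref{eq:V_conditions_c} with $\alpha_3(\|x\|)=\|x\|^2$. Positive invariance of $\mathcal{X}$ then follows at once: for $x\in\mathcal{L}_\gamma(V)\subseteq\mathcal{Q}$ we get $V\big(f(x,\varphi(x))\big)\leq V(x)-\|x\|^2\leq\gamma$, so the successor stays in $\mathcal{L}_\gamma(V)$; inductively, every trajectory initialized in $\mathcal{X}$ remains in $\mathcal{X}\subseteq\mathcal{Q}$, the decrease applies at each step, and summing it (using $V\geq 0$) yields $\|x(k)\|\to 0$.

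For (iv) I would first upgrade nonnegativity to positive definiteness on $\mathcal{Q}$: if $x\in\mathcal{Q}$ and $V(x)=0$, then $0=V(x)\geq V\big(f(x,\varphi(x))\big)+\|x\|^2\geq\|x\|^2$ (using $V\geq 0$), so $x=0$; along any trajectory in $\mathcal{X}$ the decreasing, bounded-below sequence $V(x(k))$ converges, by continuity and $x(k)\to 0$, to $V(0)$, whence $V(0)=\min_{\mathcal{X}}V$ and $V-V(0)$ serves as the Lyapunov function of \cref{defn:LyapunovFunction} (one customarily imposes $V(0)=0$ outright). Continuity and positive definiteness of $V$ on the closed set $\mathcal{X}=\mathcal{L}_\gamma(V)$ (closed by continuity of $V$) then furnish $\alpha_1,\alpha_2\in\mathcal{K}_\infty$ with $\alpha_1(\|x\|)\leq V(x)\leq\alpha_2(\|x\|)$ on $\mathcal{X}$ by the standard construction. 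This makes $V$ a local Lyapunov function in $\mathcal{X}=\mathcal{L}_\gamma(V)$, and \cite[Thm~B.13]{mybibfile:Rawlings2017} delivers \acrshort{LAS} and that $\mathcal{L}_\gamma(V)$ is part of the \acrshort{ROA}. I expect the main obstacle to lie in (iii)--(iv): the careful bookkeeping of the substitution step (matching $\lambda^+(x),\varphi^+(x)$ with $\lambda,\varphi$ evaluated at the successor state and checking that the substituted point satisfies every (in)equality defining $\mathbf{K}_L$ restricted to $\mathcal{Q}$), and reconciling the \emph{constructive} guarantees of \cref{eq:LyapunovParam} — positive semidefiniteness of $V$, but neither positive definiteness, nor $V(0)=0$, nor boundedness of $\mathcal{L}_\gamma(V)$ — with the strict class-$\mathcal{K}_\infty$ sandwich demanded by \cref{defn:LyapunovFunction}.
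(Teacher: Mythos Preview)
Your proposal is correct and follows essentially the same route as the paper, whose proof simply defers to that of \cref{thm:ValidityOptimProblemInvariantLocalAsymptoticStability} with $\mathcal{Q}$ replaced by the (by-construction invariant) set $\mathcal{L}_\gamma(V)$; your steps (i)--(iv) reproduce that argument. The only minor deviation is in showing $V(0)=\min_{\mathcal{X}}V$: you obtain it from trajectory convergence $x(k)\to 0$ plus continuity of $V$, whereas the paper splits the invariant set into $\{\|x\|^2\le V(0)\}$ and its complement and argues by compactness and contradiction --- both work, and your variant is arguably more direct since you have already established $x(k)\to 0$ in (iii).
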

\begin{proof}
    By construction, consecutive solutions to \acrshortpl{SDP} \cref{eq:SDPFormulationLocalAsymptoticStability,eq:SDPFormulationLargestSublevelSet} satisfying $V(0) < \gamma$ verify $\mathcal{L}_\gamma(V)$ is an invariant set with $0 \in \textrm{int}\big(\mathcal{L}_\gamma(V)\big)$. Application of the proof presented by Korda \cite[Thm. 2]{mybibfile:Korda2022} shows how closed-loop system \cref{eq:GeneralClosedLoopSystem} is \acrshort{LAS}. In addition, 
    Korda refers to $V$ as a Lyapunov function in his work.

    For an additional formal proof that any solution $V$ defined by a solution to \acrshort{SDP} \cref{eq:SDPFormulationGlobalAsymptoticStability} constitutes a local Lyapunov function in $\mathcal{L}_\gamma(V)$ according to \cref{defn:LyapunovFunction}, consider the proof of \cref{thm:ValidityOptimProblemInvariantLocalAsymptoticStability} applied to solutions of \acrshort{SDP} \cref{eq:SDPFormulationLocalAsymptoticStability} where $\mathcal{Q}$ should be interpreted as $\mathcal{L}_\gamma(V)$ defined by the solution to \acrshort{SDP} \cref{eq:SDPFormulationLargestSublevelSet}, which is by construction invariant. 
\end{proof}

\subsection{Limitations of the State-of-the-Art Procedure}
\label{sec:SoAReview_LimitationsOfSoA}
Whilst the current state-of-the-art procedure reviewed in this section has shown promising results in both global \cite{mybibfile:Korda2022} and local \cite{mybibfile:Newton2022} stability analyses for \acrshort{NNC} consisting of up to several hundred neurons \cite{mybibfile:Korda2022}, this stability verification method also suffers from several limitations, including a)  the inexact semialgebraic system model of any \acrshort{NNC} utilizing transcendental activation functions, b) the assumed feedforward architecture of \acrshort{NNC} $\varphi$, c) the heavily restricted solution space when solving \acrshort{SDP} \cref{eq:SDPFormulationLocalAsymptoticStability} in the search for a local Lyapunov function, and d) the indirect optimization of the \acrshort{ROA} estimate when solving \acrshortpl{SDP} \cref{eq:SDPFormulationLocalAsymptoticStability,eq:SDPFormulationLargestSublevelSet}. 

The remainder of this paper presents our contributions designed to address these limitations.
\Cref{sec:ModelingContributions} details improvements to the semialgebraic-set-based modeling procedure to improve the fidelity of semialgebraic-set-based models of \acrshortpl{NNC} and expands the class of \acrshort{NNC} that are suitable for analysis. \cref{sec:StabilityVerificationContributions} introduces improved \acrshort{SDP} formulations that ease the restriction on the solution space when searching for a local Lyapunov function and allow the size of the \acrshort{ROA} estimate to be optimized directly.

\section{Improved Modeling with Semialgebraic Sets}
\label{sec:ModelingContributions}
In this section we present our contributions to the semialgebraic-set-based modeling procedure reviewed in \cref{sec:SoAReview_SAModeling}, which expand the class of \acrshortpl{NNC} that can be described exactly using a semialgebraic-set-based model.
\Cref{sec:NewActivationFunctions} details novel, alternative activation functions that share the fundamental properties of transcendental activation functions $\text{softplus}$ and $\text{tanh}$ whilst admitting an exact semialgebraic representation. \Cref{sec:NewNNArchitectures} relaxes the assumption that \acrshort{NNC} $\varphi$ is an $\ell$-layer feedforward network described by \cref{eq:FFNNdefinition}, and shows that the semialgebraic-set-based modeling procedure can still be applied if the \acrshort{NNC} is part of the more general class of \acrfullpl{REN}.

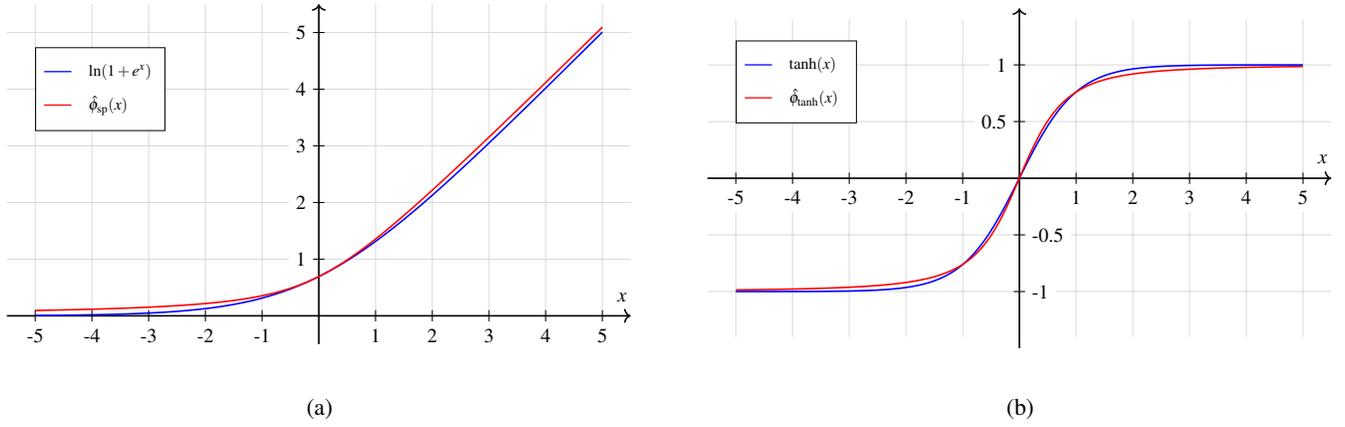
\begin{figure*}[t]
    \centering
    \begin{subfigure}{0.47\textwidth} 
        \centering
         \resizebox{\linewidth}{!}{ 
\begin{tikzpicture}
    \draw[step=1cm, gray!30, very thin] (-5.5, -0.5) grid (5.5, 5.5);
    
    \draw[thick, ->] (-5.5, 0) -- (5.5, 0) node[above=0.125cm, xshift=-0.15cm] {$x$};
    \draw[thick, ->] (0, -0.5) -- (0, 5.5); 
    
    \draw[domain=-5:5, smooth, thick, samples=100, color=blue] 
        plot (\x, {ln(1 + exp(\x))}); 

    \draw[domain=-5:5, smooth, thick, samples=100, color=red] 
        plot (\x, {\x/2 + sqrt(0.480453 + (\x)^2/4)}); 

    \foreach \x in {-5, -4, -3, -2, -1, 1, 2, 3, 4, 5} 
        \draw (\x, 0.1) -- (\x, -0.1) node[below, fill=white] {\x};
    \foreach \y in {1, 2, 3, 4, 5} 
        \draw (0.1, \y) -- (-0.1, \y) node[left, fill=white] {\y};

    \begin{scope}[shift={(-5, 4)}] 
        \node[draw, fill=white, inner sep=5pt, anchor=west, scale=0.8] at (0, 0) { \begin{minipage}{2.5cm} 
            \begin{tikzpicture}[baseline]
                \draw[blue, thick] (0, 0) -- (0.6, 0); 
                \node[anchor=west] at (0.8, 0) {$\ln(1 + e^x)$}; 
            \end{tikzpicture}
    
            \begin{tikzpicture}[baseline]
                \draw[red, thick] (0, 0) -- (0.6, 0); 
                \node[anchor=west] at (0.8, 0) {$\hat{\phi}_{\textrm{sp}}(x)$}; 
            \end{tikzpicture}
        \end{minipage}
        };

    \end{scope}
\end{tikzpicture}
}
        \caption{}
        \label{subfig:softplus_approximation}
    \end{subfigure}
    \hspace{0.03\textwidth}
    \begin{subfigure}{0.47\textwidth} 
      \centering
      \resizebox{\linewidth}{!}{ 
\begin{tikzpicture}
    \draw[step=1cm, gray!30, very thin] (-5.5, -2.8) grid (5.5, 2.8);
    
    \draw[thick, ->] (-5.5, 0) -- (5.5, 0) node[above=0.125cm, xshift=-0.15cm] {$x$};
    \draw[thick, ->] (0, -3) -- (0, 3); 
    
    \draw[domain=-5:5, smooth, thick, samples=100, color=blue] 
        plot (\x, {2*(exp(\x) - exp(-\x))/((exp(\x) + exp(-\x))}); 

    \draw[domain=-5:5, smooth, thick, samples=100, color=red] 
        plot (\x, {2*(1.171*\x) / (sqrt(1 + (1.171*\x)^2))}); 



    \foreach \x in {-5, -4, -3, -2, -1, 1, 2, 3, 4, 5} 
        \draw (\x, 0.1) -- (\x, -0.1) node[below, fill=white] {\x};
    \foreach \y in {-1, -0.5} 
        \draw (-0.1, {2*\y}) -- (0.1, {2*\y}) node[right, fill=white] {\y};
    \foreach \y in {0.5, 1} 
        \draw (0.1, {2*\y}) -- (-0.1, {2*\y}) node[left, fill=white] {\y};

    \begin{scope}[shift={(-5, 1.7)}] 
        \node[draw, fill=white, inner sep=5pt, anchor=west, scale=0.8] at (0, 0) { \begin{minipage}{2.3cm} 
            \begin{tikzpicture}[baseline]
                \draw[blue, thick] (0, 0) -- (0.6, 0); 
                \node[anchor=west] at (0.8, 0) {$\textrm{tanh}(x)$}; 
            \end{tikzpicture}
    
            \begin{tikzpicture}[baseline]
                \draw[red, thick] (0, 0) -- (0.6, 0); 
                \node[anchor=west] at (0.8, 0) {$\hat{\phi}_{\textrm{tanh}}(x)$}; 
            \end{tikzpicture}
        \end{minipage}
        };

    \end{scope}
\end{tikzpicture}
}
      \caption{}
      \label{subfig:tanh_approximation}
    \end{subfigure}
    \caption{Comparison of \subref{subfig:softplus_approximation} $\text{softplus}(x)$ and the semialgebraic function $\hat{\phi}_{\textrm{sp}}(x)$ over the interval $[-5, 5]$ for $c_{\textrm{sp}} = \ln(2)^2 \approx 0.480$ and \subref{subfig:tanh_approximation} $\text{tanh}(x)$ and the semialgebraic function $\hat{\phi}_{\textrm{tanh}}(x)$ over the interval $[-5, 5]$ for $c_{\textrm{tanh}} = 1.171$.}
    \centering
    \label{fig:ApproximateActivationFunctions}
\end{figure*}

\subsection{Novel Semialgebraic Activation Functions}
\label{sec:NewActivationFunctions}
The review presented in \cref{sec:SoAReview_SAModeling} highlights how the fidelity of the semialgebraic-set-based system model $(\mathbf{K}_\varphi, \mathbf{K}_L)$
is determined by the accuracy with which the graphs of the activation functions $\phi_i$ can be described. Some common activation functions are transcendental and utilize approximate semialgebraic-set-based descriptions of their graph, as illustrated in \cref{exmp:SAset_SingleSigmoid}.
The resulting system models fail to uniquely describe how the states of the closed-loop system change over time, and as a result any function $V_\zeta$ satisfying constraints \cref{eq:LyapunovDecreaseCond} or \cref{eq:LocalLyapunovDecreaseCond} simultaneously decreases in value along \textit{all} model-admissible state trajectories, thereby introducing conservatism in the stability verification procedure.

As an alternative to the inherently approximative approaches currently used to describe transcendental activation functions, we introduce novel activation functions with an exact semialgebraic description that can replace these functions. 
Consider the following alternative to the $\textrm{softplus}$ activation function $\phi_{\textrm{sp}}(x) = \ln(1+e^{x})$,
\begin{equation}
    \label{eq:softplusSAapprox}
    \hat{\phi}_{\textrm{sp}}(x) = \frac{x}{2} + \sqrt{c_{\textrm{sp}} + \Big(\frac{x}{2}\Big)^2} 
\end{equation}
with $c_{\textrm{sp}} > 0$, shown in \cref{subfig:softplus_approximation} for $c_{\textrm{sp}} = \ln(2)^2$ . The function $\hat{\phi}_{\textrm{sp}}$ shares several key properties with $\textrm{softplus}$: both are non-negative, smooth, monotonically increasing, unbounded functions with asymptotes at $0$ and $x$. However, contrary to $\textrm{softplus}$, the graph of $\hat{\phi}_{\textrm{sp}}$ is exactly described by the semialgebraic set 
\begin{equation}
    \label{eq:SAset_SAsoftplus}
    \mathbf{K}_{\hat{\phi}_{\textrm{sp}}} = \left\{ (x, \hat{\phi}_{\textrm{sp}}) \, \left\vert \:
    \begin{alignedat}{1}
         \span \hat{\phi}_{\textrm{sp}} - t_{\textrm{sp}}(x;a) \geq 0 \\
        \span \hat{\phi}_{\textrm{sp}} \big( \hat{\phi}_{\textrm{sp}} - x\big) - c_{\textrm{sp}} = 0
    \end{alignedat}
    \right. \right\}, \ \forall a \in \realsN{},
\end{equation}
with $t_{\textrm{sp}}(x;a) = \od{\hat{\phi}_{\textrm{sp}}(x)}{x}\big\rvert_{x=a}(x-a) + \hat{\phi}_{\textrm{sp}}(a)$ defining the tangent line to $\hat{\phi}_{\textrm{sp}}$ at $x = a$. Likewise, consider the following alternative to the $\textrm{tanh}$ activation function $\phi_{\tanh}$,
\begin{equation}
    \label{eq:tanhSAapprox}
    \hat{\phi}_{\textrm{tanh}}(x) = \frac{c_{\textrm{tanh}}x}{\sqrt{1 + (c_{\textrm{tanh}}x)^2}},
\end{equation}
with $c_{\textrm{tanh}} > 0$, shown in \cref{subfig:tanh_approximation} for $c_{\textrm{tanh}} = 1.171$. Both $\hat{\phi}_{\textrm{tanh}}$ and $\textrm{tanh}$ are odd, smooth, monotonically increasing, bounded functions with asymptotes at $-1$ and $1$.
For all $x \in \realsN{}$, using one lifting variable $\lambda_{\hat{\phi}_{\textrm{tanh}}}(x) = (c_{\mathrm{tanh}}^{-2}
+ x^2)^{0.5}$, the graph of $\hat{\phi}_{\textrm{tanh}}$ is exactly described 
by the semialgebraic set
\begin{equation}
    \label{eq:SAset_SAtanh}
    \mathbf{K}_{\hat{\phi}_{\textrm{tanh}}} =  \left\{ \Bigg(x, \begin{bmatrix} \lambda_{\hat{\phi}_{\textrm{tanh}}} \\ \hat{\phi}_{\textrm{tanh}} \end{bmatrix} \Bigg) \, \left\vert \ 
        \begin{alignedat}{1}
            \span \lambda_{\hat{\phi}_{\textrm{tanh}}}^2 - x^2 - {\textstyle\frac{1}{c_{\textrm{tanh}}^2}} = 0  
            \\
            \span \lambda_{\hat{\phi}_{\textrm{tanh}}} - t_{\textrm{tanh}}(x; a) \geq 0 \\ 
             \span \hat{\phi}_{\textrm{tanh}}\lambda_{\hat{\phi}_{\textrm{tanh}}} - x = 0
        \end{alignedat}
        \right. \right\}, \ \forall a \in \realsN{},
\end{equation}
with $t_{\textrm{tanh}}(x;a) = \od{\lambda_{\hat{\phi}_{\textrm{tanh}}}(x)}{x}\Big\rvert_{x=a}(x-a) + \lambda_{\hat{\phi}_{\textrm{tanh}}}(a)$ defining the tangent line to $\lambda_{\hat{\phi}_{\textrm{tanh}}}$ at $x = a$.

Consequently, \acrshortpl{NNC} synthesized using $\hat{\phi}_{\textrm{sp}}$ and $\hat{\phi}_{\textrm{tanh}}$ instead of $\textrm{softplus}$ and $\tanh$, respectively, form a highly expressive class of controllers whose stability properties can be analyzed with minimal conservatism using the state-of-the-art stability verification procedure reviewed in \cref{sec:SoAReview}.

In addition to the direct use of these new activation functions in newly synthesized \acrshortpl{NNC}, the stability properties of \acrshortpl{NNC} synthesized using $\text{softplus}$ and $\text{tanh}$ activation functions may be analyzed with reduced conservatism by utilizing approximate semialgebraic descriptions derived from \cref{eq:softplusSAapprox,eq:tanhSAapprox}, respectively. A closer examination of the approximative use of these functions will be the subject of a future publication.

\begin{rem}
    Variants of these functions can also be used to approximate other transcendental activation functions, e.g. the logistic function or $\text{arctan}$.
\end{rem}

\subsection{Semialgebraic Modeling of RENs}
\label{sec:NewNNArchitectures}
Previous works utilizing the optimization-based stability verification reviewed in \cref{sec:SoAReview} restricted memoryless \acrshort{NNC} $\varphi$ to be a deep, feedforward neural network. In this section, this assumption is relaxed and compatibility of \acrshortpl{NNC} from the more general class of \acrshortpl{REN} with the semialgebraic-set-based modeling and \acrshort{SDP} formulation procedures of \cref{sec:SoAReview} is proven, which enlarges the class of \acrshort{NNC} $\varphi$ that can be analyzed using the stability verification procedure reviewed in \cref{sec:SoAReview}.

\begin{figure*}[t]
    \centering
    \begin{subfigure}{0.35\textwidth} 
        \centering
\begingroup%
  \makeatletter%
  \providecommand\color[2][]{%
    \errmessage{(Inkscape) Color is used for the text in Inkscape, but the package 'color.sty' is not loaded}%
    \renewcommand\color[2][]{}%
  }%
  \providecommand\transparent[1]{%
    \errmessage{(Inkscape) Transparency is used (non-zero) for the text in Inkscape, but the package 'transparent.sty' is not loaded}%
    \renewcommand\transparent[1]{}%
  }%
  \providecommand\rotatebox[2]{#2}%
  \newcommand*\fsize{\dimexpr\f@size pt\relax}%
  \newcommand*\lineheight[1]{\fontsize{\fsize}{#1\fsize}\selectfont}%
  \ifx\svgwidth\undefined%
    \setlength{\unitlength}{195.70561451bp}%
    \ifx\svgscale\undefined%
      \relax%
    \else%
      \setlength{\unitlength}{\unitlength * \real{\svgscale}}%
    \fi%
  \else%
    \setlength{\unitlength}{\svgwidth}%
  \fi%
  \global\let\svgwidth\undefined%
  \global\let\svgscale\undefined%
  \makeatother%
  \begin{picture}(1,0.63327518)%
    \lineheight{1}%
    \setlength\tabcolsep{0pt}%
    \put(0,0){\includegraphics[width=\unitlength,page=1]{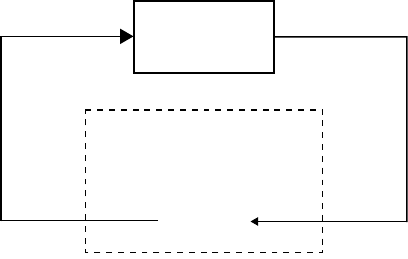}}%
    \put(0.38344241,0.53285429){\makebox(0,0)[lt]{\lineheight{1.25}\smash{\begin{tabular}[t]{l}$x^+ = f(x,u)$\end{tabular}}}}%
    \put(0.05577581,0.12232683){\makebox(0,0)[lt]{\lineheight{1.25}\smash{\begin{tabular}[t]{l}$\varphi(x)$\end{tabular}}}}%
    \put(0.7519792,0.56566617){\makebox(0,0)[lt]{\lineheight{1.25}\smash{\begin{tabular}[t]{l}$x$\end{tabular}}}}%
    \put(0,0){\includegraphics[width=\unitlength,page=2]{RENTransformation_Before.pdf}}%
    \put(0.48412852,0.24785409){\makebox(0,0)[lt]{\lineheight{1.25}\smash{\begin{tabular}[t]{l}$\phi$\end{tabular}}}}%
    \put(0.47730358,0.10231663){\makebox(0,0)[lt]{\lineheight{1.25}\smash{\begin{tabular}[t]{l}{\large $G_\varphi$}\end{tabular}}}}%
    \put(0.23367888,0.19787366){\makebox(0,0)[lt]{\lineheight{1.25}\smash{\begin{tabular}[t]{l}$v$\end{tabular}}}}%
    \put(0.71374434,0.19787366){\makebox(0,0)[lt]{\lineheight{1.25}\smash{\begin{tabular}[t]{l}$\lambda$\end{tabular}}}}%
  \end{picture}%
\endgroup%

        \caption{}
        \label{subfig:PreRENTransformation}
    \end{subfigure}
    \hspace{0.15\textwidth}
    \begin{subfigure}{0.35\textwidth} 
      \centering
\begingroup%
  \makeatletter%
  \providecommand\color[2][]{%
    \errmessage{(Inkscape) Color is used for the text in Inkscape, but the package 'color.sty' is not loaded}%
    \renewcommand\color[2][]{}%
  }%
  \providecommand\transparent[1]{%
    \errmessage{(Inkscape) Transparency is used (non-zero) for the text in Inkscape, but the package 'transparent.sty' is not loaded}%
    \renewcommand\transparent[1]{}%
  }%
  \providecommand\rotatebox[2]{#2}%
  \newcommand*\fsize{\dimexpr\f@size pt\relax}%
  \newcommand*\lineheight[1]{\fontsize{\fsize}{#1\fsize}\selectfont}%
  \ifx\svgwidth\undefined%
    \setlength{\unitlength}{195.70527435bp}%
    \ifx\svgscale\undefined%
      \relax%
    \else%
      \setlength{\unitlength}{\unitlength * \real{\svgscale}}%
    \fi%
  \else%
    \setlength{\unitlength}{\svgwidth}%
  \fi%
  \global\let\svgwidth\undefined%
  \global\let\svgscale\undefined%
  \makeatother%
  \begin{picture}(1,0.63338911)%
    \lineheight{1}%
    \setlength\tabcolsep{0pt}%
    \put(0,0){\includegraphics[width=\unitlength,page=1]{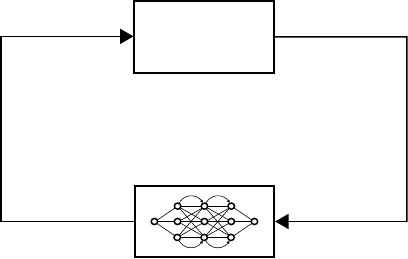}}%
    \put(0.36307978,0.5327426){\makebox(0,0)[lt]{\lineheight{1.25}\smash{\begin{tabular}[t]{l}$\tilde{x}^+ = \tilde{f}(\tilde{x},\tilde{\varphi})$\end{tabular}}}}%
    \put(0.05376934,0.12024305){\makebox(0,0)[lt]{\lineheight{1.25}\smash{\begin{tabular}[t]{l}$\tilde{\varphi}(\tilde{x})$\end{tabular}}}}%
    \put(0.75198121,0.56578001){\makebox(0,0)[lt]{\lineheight{1.25}\smash{\begin{tabular}[t]{l}$\tilde{x}$\end{tabular}}}}%
  \end{picture}%
\endgroup%

      \caption{}
      \label{subfig:PostRENTransformation}
    \end{subfigure}
    \caption{Block diagrams of \subref{subfig:PreRENTransformation} an open-loop system $x^+ = f(x,u)$ in closed-loop with a \acrshort{REN}-based \acrshort{NNC} shown via fractional transformation in the dotted box, and \subref{subfig:PostRENTransformation} the equivalent system obtained by augmenting the state of and input to the dynamical system with the \acrshort{REN}'s internal state variable $x_\varphi$ and hidden variable $\lambda$, respectively.}
    \centering
    \label{fig:RENTransformation}
\end{figure*}

\subsubsection{REN Architecture}
In their most general form, \acrshortpl{REN} are defined by a feedback interconnection consisting of a linear system $G_{\varphi}$ and a memoryless nonlinear operator $\phi$ \cite{mybibfile:Revay2024}, as shown in the dashed box in \cref{subfig:PreRENTransformation}. 
Letting the inputs, outputs and internal state variable associated with linear system $G_{\varphi}$ be denoted by $\bigl[\!\begin{smallmatrix} \lambda \\ x \end{smallmatrix}\!\bigr]$, $\bigl[\!\begin{smallmatrix} v \\ \varphi \end{smallmatrix}\!\bigr]$ and $x_{\varphi}$, respectively, a general \acrshort{REN} is described by
\begin{subequations}
    \label{eq:RENdescription}
    \begin{gather}
        \label{eq:RENSSdescription}
        \begin{bNiceArray}{c}
            x_{\varphi}^+ \\ 
            v \\
            \varphi            
        \end{bNiceArray}
        =
        \begin{bNiceArray}{c|c@{\hskip 5pt}c}
            A & B_{1} & B_{2} \\[1pt] 
            C_{1} & D_{11} & D_{12} \\
            C_{2} & D_{21} & D_{22} 
            \CodeAfter
            \tikz \draw [transform canvas={yshift=1pt}, shorten > = 0.35em, shorten < = 0.35em](2-|1) -- (2-|last) {};
        \end{bNiceArray}
        \begin{bNiceArray}{c}
            x_{\varphi} \\ 
            \lambda \\
            x            
        \end{bNiceArray}
        +
        \begin{bNiceArray}{c}
            b_{x_{\varphi}} \\ 
            b_{v} \\
            b_{\varphi}
        \end{bNiceArray},
        \\
        \label{eq:RENActivationFunction}
        \lambda_i = \phi_i(v_i) \quad \forall i \in [n_\lambda],
    \end{gather}
\end{subequations}
where $n_\lambda$ represents the number of neurons in the \acrshort{REN}. 

The \acrshort{REN} model class encapsulates various standard neural network architectures. Consider the set of \acrshortpl{REN} defined by $b_{x_{\varphi}} = 0$, $A = B_1 = B_2 = C_1 = C_2 = 0$,
\begin{gather}
    \begin{alignedat}{2}
        D_{11} &=
        {
        \setlength{\arraycolsep}{3pt}
        \begin{bNiceArray}{@{}ccccc@{}}[margin,columns-width=1.2em]
            0 		& \mathrlap{\Cdots}	& \Cdots & 	\Cdots \;	&  0  \\
            W_2 	& 0 		& \Cdots &  \Cdots	&  0      \\
            0 		& \Ddots 	& \Ddots &  	&  \Vdots      \\
            \Vdots  & \Ddots  	& \Ddots &  \Ddots  & \Vdots \\
            0 		& \Cdots 	&  0 	 & W_{\ell} 	& 0
        \end{bNiceArray}
        }, \ \ 
        & D_{12} &= 
        {
        \setlength{\arraycolsep}{0.0em}
        \begin{bNiceArray}{@{}c@{}}[margin,columns-width=0.5pt]
            W_1 \\
            0   \\
            \Vdots 	\\
            \Vdots   \\
            0 
        \end{bNiceArray}
        },
        \\
        D_{21} &= 
        {
        \setlength{\arraycolsep}{3pt}
        \begin{bNiceArray}{@{}ccccc@{}}[margin,columns-width=1.2em]
            0 & \Cdots & \Cdots & 0 & W_{\ell+1}
        \end{bNiceArray}
        }, \ \
        & D_{22} &= \;
        {
        \setlength{\arraycolsep}{0.1em}    
        \begin{bNiceArray}{@{}c@{}}[margin,columns-width=0.5pt]
            \hspace{0.3em} 0   
        \end{bNiceArray}
        }, 
        \\
        b_v\transpose &= 
        {
        \setlength{\arraycolsep}{3pt}
        \begin{bNiceArray}{@{}ccccc@{}}[margin,columns-width=1.15em]
            b_1\transpose & \Cdots & \Cdots & b_{\ell-1}\transpose &   b_{\ell}\transpose 
        \end{bNiceArray}
        }, \ \ 
        & b_\varphi\transpose &= \;
        b_{\ell + 1}\transpose,
    \end{alignedat}
\end{gather}
and recall \cref{eq:FFNNdefinition}, \cref{eq:FFNNLifted}:
it follows that the general \acrshort{REN} definition contains all deep, feedforward neural networks. In fact, the general description of \cref{eq:RENdescription} also contains recurrent neural networks such as \acrshort{LSTM} networks, convolutional networks, and more \cite{mybibfile:Revay2024}. 

As a result of their versatility, \acrshortpl{REN} may not always define a well-posed mapping from $\realsN{n_x} \times \realsN{n_{x_\varphi}}$ to $\realsN{n_u}$. As shown in \cite{mybibfile:Revay2020_EquilibriumNet}, sufficient conditions to guarantee the existence and uniqueness of $[(x_{\varphi}^+)\transpose, \, v\transpose,\, \lambda\transpose,\,\varphi\transpose]\transpose$ for all values of $\bigl[\!\begin{smallmatrix} x \\ x_\varphi \end{smallmatrix}\!\bigr]$ are:
\begin{enumerate}
    \item All scalar functions comprising $\phi$, i.e. $\phi_i$ for $i \in [n_{\lambda}]$, are monotone and slope-restricted in $[0, 1]$.
    \item There exists a positive definite, diagonal matrix $\Lambda$ such that
\begin{equation}
    \label{eq:WellPosedREN}
    2\Lambda - \Lambda D_{11} - D_{11}\transpose\Lambda \succ 0.
\end{equation}
\end{enumerate}

\subsubsection{Formulation of Equivalent System}
\label{sec:REN_EquivalentSystem}
Consider now an open-loop system with polynomial dynamics, $f(x,u)\colon \realsN{n_x} \times \realsN{n_u} \mapsto \realsN{n_x}$, controlled by a well-posed, \acrshort{REN}-based \acrshort{NNC}, $u(x) = \varphi(x, x_{\varphi})\colon \realsN{n_x} \times \realsN{n_{x_\varphi}} \mapsto \realsN{n_u}$ described by \cref{eq:RENdescription}. This closed-loop system, described by \cref{eq:RENdescription} and
\begin{equation}
    \label{eq:GeneralRENClosedLoopSystem}
    x^+ = f\big(x, \varphi(x, x_{\varphi})\big)
\end{equation}
is shown in \cref{subfig:PreRENTransformation}. Without loss of generality this system is assumed to have an equilibrium point at the origin, i.e. $f\big(0, \varphi(0, 0)\big) = 0$ and $B_1\lambda(x = 0, x_\varphi = 0) + b_{x_{\varphi}} = 0$. 

To analyze the stability properties of closed-loop system \cref{eq:GeneralRENClosedLoopSystem}, we formulate an equivalent closed-loop system compatible with the state-of-the-art stability verification procedure of \cref{sec:SoAReview}. 
Consider a closed-loop system with augmented state $\tilde{x} = \bigl[\!\begin{smallmatrix} x \\ x_\varphi \end{smallmatrix}\!\bigr]$, augmented control input $\tilde{\varphi} = \bigl[\!\begin{smallmatrix} \lambda \\ \varphi \end{smallmatrix}\!\bigr]$ generated by a memoryless \acrshort{NNC} $\tilde{\varphi} \colon \realsN{n_x + n_{x_\varphi}} \mapsto \realsN{n_\lambda + n_u}$ and augmented dynamics $\tilde{f} \colon \realsN{n_x + n_{x_{\varphi}}} \times \realsN{n_\lambda + n_{u}} \mapsto \realsN{n_x + n_{x_\varphi}}$.
This closed-loop system, shown in \cref{subfig:PostRENTransformation}, is governed by
\begin{subequations}
    \label{eq:EquivalentRENSystem}
    \begin{alignat}{1}
        \label{eq:EquivalentRENSystem_Dynamics}
        \tilde{x}^+ &= \tilde{f}(\tilde{x}, \tilde{\varphi}(\tilde{x})\big) = 
        \begin{bmatrix}
            f\big(x, \varphi(\tilde{x})\big) \\
            \begin{bmatrix} B_2 & A \end{bmatrix} \tilde{x} + B_1\lambda(\tilde{x}) + b_{x_{\varphi}}
        \end{bmatrix},
        \\
        \label{eq:EquivalentRENSystem_NNC}
        \ \tilde{\varphi}(\tilde{x}) &
        = 
        \left\{
        \begin{aligned}
            & \begin{bmatrix} 
                \lambda \\
                \varphi
            \end{bmatrix} 
            = 
            \begin{bmatrix} 
            I & 0 \\
            D_{21} & \begin{bmatrix} D_{22} & C_2 \end{bmatrix}
            \end{bmatrix}
            \begin{bmatrix}
                \lambda \\ \tilde{x}
            \end{bmatrix} +
            \begin{bmatrix}
                0 \\
                b_{\varphi}
            \end{bmatrix},
            \\
            \span \textrm{with} \\
            & \ \lambda 
            = \phi(v) = 
            \phi\big(D_{11}\lambda + \begin{bmatrix} D_{12} & C_1\end{bmatrix} \tilde{x} + b_v \big),
        \end{aligned}
        \right.
    \end{alignat}
\end{subequations}
making it equivalent to closed-loop system \cref{eq:GeneralRENClosedLoopSystem}. 
As a result, the stability properties of closed-loop system \cref{eq:GeneralRENClosedLoopSystem} are examined via a stability analysis of closed-loop system \cref{eq:EquivalentRENSystem}.

\subsubsection{Stability Verification of Equivalent System}
The semialgebraic-set-based modeling procedure of \cref{sec:SoAReview_SAModeling} can be applied to obtain 
a model of \acrshort{NNC} $\tilde{\varphi}$.

\begin{thm}[Semialgebraic Modeling of REN \protect\texorpdfstring{\ensuremath{\tilde{\varphi}}}{phi tilde}]
    \label{thm:REN_SAModeling}
    Consider a memoryless, well-posed \acrshort{REN} $\tilde{\varphi}$ defined by \cref{eq:EquivalentRENSystem_NNC} and comprised of activation functions $\phi_i$ that permit a possibly inexact semialgebraic description of their graph. The graph of $\realsN{n_x + n_{x_\varphi}} \ni \tilde{x}
    \mapsto \tilde{\varphi}(\tilde{x})$ permits a semialgebraic description.
\end{thm}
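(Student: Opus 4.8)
The plan is to reproduce, for the implicit feedback map $\tilde{\varphi}$ of \cref{eq:EquivalentRENSystem_NNC}, the construction that produced $\mathbf{K}_\varphi$ in \cref{eq:SemialgebraicNetworkSet} for feedforward networks: instead of composing single-neuron graphs across successive layers, one writes down the single algebraic loop coupling $\lambda$, $v$ and $\tilde{x}$. First I would fix notation: by hypothesis every scalar activation $\phi_i$, $i\in[n_\lambda]$, permits a (possibly inexact) semialgebraic description of its graph, i.e. there exist polynomial maps $g^{\phi_i}$, $h^{\phi_i}$ and, where needed, lifting variables $\mu_i$ with $\graph(\phi_i)\subseteq\mathbf{K}_{\phi_i}:=\{(v_i,\lambda_i)\mid \exists\,\mu_i,\ g^{\phi_i}(v_i,\lambda_i,\mu_i)\geq 0,\ h^{\phi_i}(v_i,\lambda_i,\mu_i)=0\}$, with equality whenever the description is exact (as in \cref{eq:SingleReLUNetwork} for \acrshort{ReLU} or \cref{eq:SAset_SAtanh} for $\hat{\phi}_{\textrm{tanh}}$, and inexactly as in \cref{eq:SigmoidGraphApproximation}).

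Second, I would eliminate the internal signal $v$. From \cref{eq:EquivalentRENSystem_NNC} the maps $v=D_{11}\lambda+\begin{bmatrix}D_{12}&C_1\end{bmatrix}\tilde{x}+b_v$ and $\varphi=D_{21}\lambda+\begin{bmatrix}D_{22}&C_2\end{bmatrix}\tilde{x}+b_\varphi$ are affine, hence polynomial, in $(\tilde{x},\lambda)$. Substituting the $i$-th component of the affine expression for $v$ into $\mathbf{K}_{\phi_i}$ leaves each activation constraint polynomial in $(\tilde{x},\lambda,\mu_i)$, since a polynomial composed with an affine map is again a polynomial. Stacking these over $i\in[n_\lambda]$, writing $\mu=[\mu_1\transpose,\dots,\mu_{n_\lambda}\transpose]\transpose$, and appending the output equality yields the candidate semialgebraic set \[ \mathbf{K}_{\tilde{\varphi}}=\Big\{\big(\tilde{x},\,[\lambda\transpose\ \varphi\transpose]\transpose\big)\ \Big|\ \exists\,\mu:\ g^{\tilde{\varphi}}(\tilde{x},\lambda,\mu,\varphi)\geq 0,\ h^{\tilde{\varphi}}(\tilde{x},\lambda,\mu,\varphi)=0\Big\}, \] with $g^{\tilde{\varphi}}$ collecting the substituted activation inequalities and $h^{\tilde{\varphi}}$ collecting the substituted activation equalities together with $\varphi-D_{21}\lambda-\begin{bmatrix}D_{22}&C_2\end{bmatrix}\tilde{x}-b_\varphi=0$. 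Here $[\lambda\transpose\ \varphi\transpose]\transpose$ is exactly the augmented control output $\tilde{\varphi}$, and $\mu$ are auxiliary lifting variables present only if some $\phi_i$ was described inexactly.

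Third, I would check correctness. For $\graph(\tilde{\varphi})\subseteq\mathbf{K}_{\tilde{\varphi}}$: for any $\tilde{x}$ the well-posed \acrshort{REN} returns $\lambda^\star=\phi(v^\star)$ with $v^\star$ the affine expression above; each $(v_i^\star,\lambda_i^\star)\in\graph(\phi_i)\subseteq\mathbf{K}_{\phi_i}$ supplies an admissible $\mu_i^\star$, and the output equality holds by definition, so $\big(\tilde{x},[(\lambda^\star)\transpose\ (\varphi^\star)\transpose]\transpose\big)\in\mathbf{K}_{\tilde{\varphi}}$. Conversely, if every $\phi_i$-description is exact, any point of $\mathbf{K}_{\tilde{\varphi}}$ has $\lambda$ solving the equilibrium equation $\lambda=\phi\big(D_{11}\lambda+\begin{bmatrix}D_{12}&C_1\end{bmatrix}\tilde{x}+b_v\big)$; well-posedness --- monotone, $[0,1]$-slope-restricted $\phi_i$ together with a diagonal $\Lambda\succ 0$ satisfying \cref{eq:WellPosedREN}, cf. \cite{mybibfile:Revay2020_EquilibriumNet} --- forces this $\lambda$ to be unique and equal to $\lambda(\tilde{x})$, whence $\varphi=\varphi(\tilde{x})$ and $\mathbf{K}_{\tilde{\varphi}}=\graph(\tilde{\varphi})$. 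If some descriptions are inexact, $\mathbf{K}_{\tilde{\varphi}}$ still encloses $\graph(\tilde{\varphi})$ as the graph of a multivalued map, which is the sense of ``semialgebraic description'' used in the paper.

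The step I expect to be the main obstacle is not the polynomial bookkeeping but establishing that the self-referential structure of \cref{eq:EquivalentRENSystem_NNC} --- $\lambda$ entering its own defining constraints through $v$ --- does not make $\mathbf{K}_{\tilde{\varphi}}$ spuriously multivalued over $\tilde{x}$ even when all activation descriptions are exact. This is precisely where well-posedness is used: condition \cref{eq:WellPosedREN}, together with the monotonicity and slope restriction of the $\phi_i$, is exactly the certificate that the substituted algebraic loop has a unique solution $\lambda$ for each $\tilde{x}$, so that projecting $\mathbf{K}_{\tilde{\varphi}}$ onto $\tilde{x}$ is single-valued and coincides with $\tilde{\varphi}$. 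A secondary consistency check I would carry out is to specialize the construction to the block strictly lower triangular $D_{11}$ of the feedforward-into-\acrshort{REN} embedding recalled before the theorem, verifying that $\mathbf{K}_{\tilde{\varphi}}$ then collapses to the layer-wise set \cref{eq:SemialgebraicNetworkSet}, confirming the result genuinely generalizes the feedforward case.
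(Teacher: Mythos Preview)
Your proposal is correct and follows essentially the same construction as the paper: substitute the affine expression for $v$ into each neuron's semialgebraic description, stack the resulting polynomial constraints over $i\in[n_\lambda]$, and append the affine output equality to obtain $\mathbf{K}_{\tilde{\varphi}}\supseteq\graph(\tilde{\varphi})$, with well-posedness guaranteeing that $\tilde{\varphi}(\tilde{x})$ is a well-defined single-valued map to begin with. The paper's proof stops at this inclusion (the exact-description converse you argue in your third step is separated out as \cref{cor:REN_SAModeling}), does not carry explicit activation-level lifting variables $\mu_i$ (this is deferred to a remark), and in keeping with the paper's convention would treat any such auxiliaries as additional output coordinates rather than projecting them out via $\exists\,\mu$.
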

\begin{proof}
    By assumption, there exist possibly inexact semialgebraic descriptions of each activation function, i.e. for all $i \in [n_\lambda]$ it holds $\graph(\phi_i) = \big\{ \big(v_i, \phi_i(v_i)\big) \mid v_i \in \realsN{} \big\} \subseteq \mathbf{K}_{\phi_i}$ defined by
    \begin{equation}
        \label{eq:REN_Semialgebraicity_ActivationFunctions}
        \mathbf{K}_{\phi_i} = \big\{ (v_i, \lambda_i) \mid g^{\varphi_i}(v_i, \lambda_i) \geq 0, \, h^{\varphi_i}(v_i, \lambda_i) = 0 \big\}.
    \end{equation}
    For any well-posed \acrshort{REN} $\tilde{\varphi}(\tilde{x})$ exists and is unique. Thus, as any pair $\big(\tilde{x}, \tilde{\varphi}(\tilde{x})\big)$ satisfies \cref{eq:EquivalentRENSystem_NNC}, it follows by construction that $\graph(\tilde{\varphi}) \subseteq \mathbf{K}_{\tilde{\varphi}} = \big\{ \big( \tilde{x}, \bigl[\!\begin{smallmatrix} \lambda \\ \varphi \end{smallmatrix}\!\bigr] \big) \big\vert \, g^{\tilde{\varphi}}(\tilde{x}, \lambda, \varphi) \geq 0, \ h^{\tilde{\varphi}}(\tilde{x}, \lambda, \varphi) = 0 \big\}$ with $g^{\tilde{\varphi}}(\tilde{x}, \lambda, \varphi) =$
    \begin{gather}
        \begin{bmatrix}
            g^{\varphi_1}(\bigl[D_{11}\bigr]_{1} \lambda + \bigl[ D_{12} \ C_1  \bigr]_{1} \tilde{x} + b_{v,1}, \, \lambda_1) \\
            \vdots 
            \\
            g^{\varphi_{n_\lambda}}(\bigl[D_{11}\bigr]_{n_\lambda} \lambda + \bigl[ D_{12} \ C_1  \bigr]_{n_\lambda} \tilde{x} + b_{v,n_\lambda}, \, \lambda_{n_\lambda})
        \end{bmatrix},
        \intertext{and $h^{\tilde{\varphi}}(\tilde{x}, \lambda, \varphi) =$}
        \begin{bmatrix}
            h^{\varphi_1}(\bigl[D_{11}\bigr]_{1} \lambda + \bigl[ D_{12} \ C_1  \bigr]_{1} \tilde{x} + b_{v,1}, \lambda_1) \\
            \vdots 
            \\
            h^{\varphi_{n_\lambda}}(\bigl[D_{11}\bigr]_{n_\lambda} \lambda + \bigl[ D_{12} \ C_1  \bigr]_{n_\lambda} \tilde{x} + b_{v,n_\lambda}, \lambda_{n_\lambda})
            \\
            \varphi - D_{21}\lambda - \bigl[ D_{22} \ C_2  \bigr]^{\phantom{|}}  \!\tilde{x} - b_{\varphi}
        \end{bmatrix}.
    \end{gather}
\end{proof}

The proof of \cref{thm:REN_SAModeling} outlines how the semialgebraic set $\mathbf{K}_{\tilde{\varphi}}$ 
can be constructed. In case the activation functions $\phi_i$ for $i \in [n_\lambda]$ comprising
$\tilde{\varphi}$ permit an exact semialgebraic description, e.g. all $\phi_i$ are $\textrm{ReLU}$ or any of the novel semialgebraic activation functions presented in \cref{sec:NewActivationFunctions}, the semialgebraic description $\mathbf{K}_{\tilde{\varphi}}$ is exact.

\begin{cor}
    \label{cor:REN_SAModeling}
    Consider the memoryless, well-posed \acrshort{REN} $\tilde{\varphi}$ defined by \cref{eq:EquivalentRENSystem_NNC} and comprised of activation functions $\phi_i$ that permit an exact semialgebraic description of their graph.
    The graph of $\realsN{n_x + n_{x_\varphi}} \ni \tilde{x} 
    \mapsto \tilde{\varphi}(\tilde{x})$ is a semialgebraic set that can be represented exactly.
\end{cor}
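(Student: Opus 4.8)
The plan is to refine the argument already given for \cref{thm:REN_SAModeling}. That theorem constructs the polynomial data $g^{\tilde{\varphi}}, h^{\tilde{\varphi}}$ and establishes only the inclusion $\graph(\tilde{\varphi}) \subseteq \mathbf{K}_{\tilde{\varphi}}$, because the per-neuron descriptions $\mathbf{K}_{\phi_i}$ may be inexact. To prove the corollary it remains to establish the reverse inclusion $\mathbf{K}_{\tilde{\varphi}} \subseteq \graph(\tilde{\varphi})$ under the extra hypothesis that each activation admits an \emph{exact} semialgebraic description, i.e. $\graph(\phi_i) = \mathbf{K}_{\phi_i}$ for all $i \in [n_\lambda]$ rather than merely $\graph(\phi_i) \subseteq \mathbf{K}_{\phi_i}$. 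Combining the two inclusions then yields $\graph(\tilde{\varphi}) = \mathbf{K}_{\tilde{\varphi}}$, which is the claimed exact representation.

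First I would take an arbitrary point $\big(\tilde{x}, [\lambda\transpose, \varphi\transpose]\transpose\big) \in \mathbf{K}_{\tilde{\varphi}}$ and unpack the constraints $g^{\tilde{\varphi}}(\tilde{x},\lambda,\varphi) \geq 0$, $h^{\tilde{\varphi}}(\tilde{x},\lambda,\varphi) = 0$ as constructed in the proof of \cref{thm:REN_SAModeling}. Writing $v_i = [D_{11}]_i \lambda + [D_{12}\ C_1]_i \tilde{x} + b_{v,i}$, the inequality block and the first $n_\lambda$ equality blocks state precisely that $g^{\varphi_i}(v_i,\lambda_i) \geq 0$ and $h^{\varphi_i}(v_i,\lambda_i) = 0$, i.e. $(v_i,\lambda_i) \in \mathbf{K}_{\phi_i}$ for every $i$. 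By the exactness hypothesis $\mathbf{K}_{\phi_i} = \graph(\phi_i)$, hence $\lambda_i = \phi_i(v_i)$, which is exactly the equilibrium relation $\lambda = \phi\big(D_{11}\lambda + [D_{12}\ C_1]\tilde{x} + b_v\big)$ of \cref{eq:EquivalentRENSystem_NNC}.

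Next I would invoke well-posedness of $\tilde{\varphi}$: since conditions~1) and~2) of \cref{sec:NewNNArchitectures} (in particular \cref{eq:WellPosedREN}) hold, the solution of this equilibrium equation is unique for the given $\tilde{x}$ \cite{mybibfile:Revay2020_EquilibriumNet}, so the $\lambda$-component of our point coincides with the genuine hidden variable $\lambda(\tilde{x})$. The remaining equality block $\varphi - D_{21}\lambda - [D_{22}\ C_2]\tilde{x} - b_\varphi = 0$ then forces $\varphi = \varphi(\tilde{x})$. Therefore $\big(\tilde{x}, [\lambda\transpose, \varphi\transpose]\transpose\big) = \big(\tilde{x}, \tilde{\varphi}(\tilde{x})\big) \in \graph(\tilde{\varphi})$, which gives $\mathbf{K}_{\tilde{\varphi}} \subseteq \graph(\tilde{\varphi})$ and closes the argument.

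The step carrying the weight is the appeal to well-posedness for uniqueness of the equilibrium $\lambda$: without it, $\mathbf{K}_{\tilde{\varphi}}$ could contain spurious pairs $(\tilde{x}, \lambda)$ that solve all the activation equations yet are not the value actually produced by the REN, and the reverse inclusion would fail. Everything else is bookkeeping — exactness is what promotes $\graph(\phi_i) \subseteq \mathbf{K}_{\phi_i}$ to an equality neuron by neuron, and the explicit affine output map in \cref{eq:EquivalentRENSystem_NNC} makes the recovery of $\varphi$ from $\lambda$ and $\tilde{x}$ immediate. I would also note in passing that $\textrm{ReLU}$ and the activation functions $\hat{\phi}_{\textrm{sp}}, \hat{\phi}_{\textrm{tanh}}$ of \cref{sec:NewActivationFunctions} all satisfy the exactness hypothesis, so the corollary applies to RENs built from them.
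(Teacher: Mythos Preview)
Your proof is correct and follows essentially the same approach as the paper: invoke \cref{thm:REN_SAModeling} for $\graph(\tilde{\varphi}) \subseteq \mathbf{K}_{\tilde{\varphi}}$, then use exactness of each $\mathbf{K}_{\phi_i}$ to show that any point of $\mathbf{K}_{\tilde{\varphi}}$ solves \cref{eq:EquivalentRENSystem_NNC} and hence lies in $\graph(\tilde{\varphi})$. The paper's proof is terser and leaves the appeal to well-posedness (uniqueness of the equilibrium $\lambda$) implicit in the passage from ``is a solution to \cref{eq:EquivalentRENSystem_NNC}'' to ``$\graph(\tilde{\varphi}) \supseteq \mathbf{K}_{\tilde{\varphi}}$''; you make this explicit, which is arguably clearer.
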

\begin{proof}
    By \cref{thm:REN_SAModeling} it follows $\graph(\tilde{\varphi}) \subseteq \mathbf{K}_{\tilde{\varphi}}$. If $\graph(\phi_i) = \mathbf{K}_{\phi_i}$ for all $i \in [n_\lambda]$, it follows that any pair $\big( \tilde{x}, \bigl[\!\begin{smallmatrix} \lambda \\ \varphi \end{smallmatrix}\!\bigr] \big) \in \mathbf{K}_{\tilde{\varphi}}$ is a solution to \cref{eq:EquivalentRENSystem_NNC}, i.e. $\graph(\tilde{\varphi}) \supseteq \mathbf{K}_{\tilde{\varphi}}$, thereby proving $\graph(\tilde{\varphi}) = \mathbf{K}_{\tilde{\varphi}}$.
\end{proof}

In addition to the semialgebraic-set-based model $\mathbf{K}_{\tilde{\varphi}}$ of $\tilde{\varphi}$, a semialgebraic-set-based model $\mathbf{K}_{\tilde{L}}$ of $\tilde{L} = \tilde{\varphi} \circ \tilde{f} \circ (\textrm{id}, \tilde{\varphi})$ can also be constructed as \cref{eq:EquivalentRENSystem_Dynamics} remains polynomial. Thus, a semialgebraic-set-based system model $\big(\mathbf{K}_{\tilde{\varphi}}, \mathbf{K}_{\tilde{L}}\big)$ of closed-loop system \cref{eq:EquivalentRENSystem} can be constructed that is directly compatible with the \acrshort{SDP} formulation procedure reviewed in \cref{sec:SoAReview_SDPFormulation}, enabling the stability properties of closed-loop systems \cref{eq:GeneralRENClosedLoopSystem,eq:EquivalentRENSystem} to be analyzed. 

\begin{rem}
    The construction of system model $\big( \mathbf{K}_{\tilde{\varphi}}, \mathbf{K}_{\tilde{L}} \big)$ can easily be adapted, e.g. to obtain a local system model, to obtain a system model utilizing semialgebraic sets $\mathbf{K}_{\phi_i}$ with lifting variables as in \cref{eq:SAset_SAtanh}, etc.
\end{rem}

Assuming continuity of $\phi_i$ for all $i \in [n_\lambda]$, the conditions guaranteeing well-posedness of a \acrshort{REN} are also sufficient to ensure that $\lambda$ and $\varphi$ are continuous functions in $\tilde{x}$, allowing the general form of \cref{eq:LyapunovParam} to be used to parameterize functions $V \colon \realsN{n_x} \mapsto \realsN{}_{\geq 0}$. See Appendix \labelcref{sec:AppendixRENContinuity} for a proof.

In conclusion, the modeling improvements presented in this section 
allow an exact semialgebraic-set-based system model to be constructed for a larger class of \acrshort{NNC}, 
including \acrshort{NNC} utilizing the newly introduced activation functions $\hat{\phi}_{\textrm{sp}}$ and $\hat{\phi}_{\textrm{tanh}}$, well-posed, \acrshort{REN}-based \acrshort{NNC}, and combinations thereof. This allows the stability properties of systems controlled by this larger class of \acrshort{NNC} to be analyzed with minimal conservatism using the state-of-the-art procedure reviewed in \cref{sec:SoAReview}.

\section{Improved SDP formulations for local stability}
\label{sec:StabilityVerificationContributions}
In this section we present our contributions to the \acrshort{SDP} formulation step as reviewed in \cref{sec:SoAReview_SDPFormulation}. We specifically address the search for local Lyapunov functions by presenting improved \acrshort{SDP} formulations searching for this type of local stability certificate. Such formulations are of great practical importance as a globally stabilizing \acrshort{NNC} is often not required and may not even exist due to state and/or input constraints. 

The current state-of-the-art \acrshort{SDP} formulation searching for local Lyapunov functions reviewed in \cref{sec:SOAReview_SDPFormulation_LocalStability} suffers from two important drawbacks:
\begin{enumerate}
    \item The parameterization \cref{eq:LyapunovParam} 
    in \acrshort{SDP} \cref{eq:SDPFormulationLocalAsymptoticStability} is restricted to a small class of known candidate Lyapunov functions $V$ to ensure any solution defines a local Lyapunov function 
    for closed-loop system \cref{eq:GeneralClosedLoopSystem}.
    \item The inner estimate of the \acrshort{ROA} provided via the current stability verification procedure is not optimized directly.
\end{enumerate}
These drawbacks are addressed in \cref{sec:RefinedParameterizationOfSolutionSpace,sec:ImprovedSDPFormulations} by presenting a class of local Lyapunov functions compatible with \acrshort{SDP} \cref{eq:SDPFormulationLocalAsymptoticStability} richer than any previously reported and the development of a novel \acrshort{SDP} formulation directly optimizing an inner estimate of the \acrshort{ROA} for a general closed-loop system \cref{eq:GeneralClosedLoopSystem}, respectively.

\subsection{Refined Parameterization of Solution Space}
\label{sec:RefinedParameterizationOfSolutionSpace}
Given a semialgebraic-set-based system model $\big( \mathbf{K}_\varphi, \mathbf{K}_L\big)$, the most general parameterization of functions $V \colon \realsN{n_x} \mapsto \realsN{}_{\geq 0}$ is defined by \cref{eq:LyapunovParam}.
Whilst this general form encompasses a rich class of candidate Lyapunov functions, it also parameterizes other non-negative functions. This does not present a problem when solving \acrshort{SDP} \cref{eq:SDPFormulationGlobalAsymptoticStability}, as it can be shown that all solutions to this \acrshort{SDP} are Lyapunov functions. However, the same property does not hold for \acrshort{SDP} \cref{eq:SDPFormulationLocalAsymptoticStability}, which forms part of the current state-of-the-art \acrshort{SDP} formulation that is used to find a local Lyapunov function for closed-loop system \cref{eq:GeneralClosedLoopSystem}. This is illustrated in the following example.

\begin{exmp}
    \label{exmp:CounterexampleLAS}
    Consider an analysis of the local stability properties of the closed-loop system $x^+ = 2x$ over the set $\mathcal{Q} = \big\{ x \in \realsN{} \mid x^2 \leq \frac{1}{4} \big\}$. Despite the closed-loop system being trivially unstable, it will be shown that solutions to \acrshort{SDP} \labelcref{eq:SDPFormulationLocalAsymptoticStability} exist.

    One such solution can be found by considering the \acrshort{SOS} function $V(x) = \frac{1}{5}(x-2)^2(x+2)^2$, which is contained in the parameterization of \cref{eq:LyapunovParam}.
    As a result of the local maximum of $V$ at the equilibrium point, constraint \cref{eq:LocalLyapunovDecreaseCond} is satisfied for this function $V$ over the set $\mathcal{Q}$. This can be proven by setting $\sigma^{\Delta V} = 0$, $\sigma^{\Delta V}_{\textrm{ineq}} = x^4$ and directly substituting $x^+$ according to its definition. \Cref{eq:LocalLyapunovDecreaseCond} then reads
    \begin{equation}
         V(x) - V(2x) - \|x\|^2  - x^4 \left( \frac{1}{4} - x^2 \right) \geq 0,
    \label{eq:CounterexampleLASLyapunovDecrease}
    \end{equation}
    which is satisfied for this choice of $V$ since the left-hand side of \cref{eq:CounterexampleLASLyapunovDecrease} possesses the \acrshort{SOS} decomposition
    \begin{equation}
        \left(\sqrt{\frac{95}{25}}x -\sqrt{\frac{4655}{5776}}x^3 \right)^2 + \left(\frac{1}{2}x^2\right)^2 + \left(\sqrt{\frac{1121}{5776}}x^3\right)^2.
    \end{equation}
    Thus, this example illustrates that a solution to \acrshort{SDP} \labelcref{eq:SDPFormulationLocalAsymptoticStability} is not sufficient to guarantee the system is \acrshort{LAS}. 
\end{exmp}

To guarantee that any solution to \acrshort{SDP} \cref{eq:SDPFormulationLocalAsymptoticStability} defines a valid local Lyapunov function, state-of-the-art local stability analyses restrict the general parameterization of \cref{eq:LyapunovParam} to consist of a class of known candidate Lyapunov functions, e.g. $V(x) = x\transpose Px$, $P\succ0$ \cite{mybibfile:Korda2017,mybibfile:Yin2022,mybibfile:Pauli2021,mybibfile:Newton2022}. 

Under the assumption that a semialgebraic-set-based system model $( \mathbf{K}_{\varphi}, \mathbf{K}_L)$ is available, we present a class of explicit candidate Lyapunov functions compatible with \acrshort{SDP} \cref{eq:SDPFormulationLocalAsymptoticStability} that is richer than any previously reported. 

Recall the general parameterization of \cref{eq:LyapunovParam} as 
$V_\zeta = \sigma^V + (\sigma^{V}_{\textrm{ineq}})\transpose g^V$, where $g^V$
consists of unique products generated by the entries of $g^{\varphi}$. A parameterization consisting only of candidate Lyapunov functions is obtained by minimally restricting the form of $\sigma^V$, $\sigma^V_{\textrm{ineq}}$ such that each term comprising $V$ has a minimum at $x=0$. Let 
$\mathcal{I}^{g^\varphi}_{>0}$
be the index set of entries $g^{\varphi}$ that evaluate to a positive value for $x = 0$,
\begin{equation}
    \label{eq:Vparam_Sdef}
    \mathcal{I}^{g^\varphi}_{>0} = \Big\{ i \in [n_{g^\varphi}] \: \big\vert \: g^{\varphi}_i\big(0, \lambda(0), \varphi(0)\big) > 0\Big\},
\end{equation}
define the set of all entries of $g^V$ generated using only entries $\mathcal{I}^{g^\varphi}_{>0}$ of $g^\varphi$ as $\mathcal{G}^V_{>0}$,
\begin{equation}
    \label{eq:Vparam_Ggeq0def}
    \mathcal{G}^V_{>0}  = \bigg\{ \prod_{j \in \mathcal{J}} g^{\varphi}_j \mid \mathcal{J} \subseteq \mathcal{I}^{g^\varphi}_{>0} \bigg\},
\end{equation}
and let
\begin{equation}
    \Delta\zeta = \zeta - 
    {\renewcommand{\arraystretch}{0.9}%
        \begin{bmatrix}
        0 \\
        \lambda(0) \\
        \varphi(0)
        \end{bmatrix}
    }.
\end{equation} 
Following the notation introduced in \cref{eq:SOSDefinition},
let $\nu_{\geq 1}(\Delta \zeta)$ denote a vector of monomial terms generated by entries of $\Delta \zeta$ that are at least of degree one.

Now consider the parameterization of \cref{eq:LyapunovParam} where 
\begin{subequations}
    \label{eq:LyapunovParam_LocalStability}
    \begin{align}
        \sigma^V &= x\transpose P x + \nu_{\geq 1}( \Delta\zeta)\transpose Q \nu_{\geq 1}( \Delta\zeta),
        \\
        \sigma^V_{\textrm{ineq},i} &= 
        \begin{cases}
            \nu_{\geq 1}( \Delta\zeta )\transpose Q_i \nu_{\geq 1}( \Delta\zeta ) 
            & 
            \text{if }
                g^{V}_i \in \mathcal{G}^V_{>0},
            \\
            \nu( \Delta\zeta )\transpose Q_i \nu( \Delta\zeta )
            & \text{otherwise, }
        \end{cases}
    \end{align}
\end{subequations}
with parameters $P \succ 0$ and $Q, \, Q_i \succeq 0$ for all $i$. By construction, for all valid parameters $P, \, Q, \, Q_i$, any continuous function $V$ specified by \cref{eq:LyapunovParam,eq:LyapunovParam_LocalStability} 
is a valid candidate Lyapunov function. 
The following example illustrates the expressiveness of this function class and highlights how this parameterization is well-suited for finding (low-degree) stability certificates when analyzing \acrshort{NNC} $\varphi$, particularly those using $\textrm{ReLU}$ activation functions and/or any of the newly introduced activation functions $\hat\phi_{\textrm{sp}}$, $\hat\phi_{\textrm{tanh}}$.

\begin{exmp}
    Consider an open-loop, polynomial dynamical system $x^+ = f(x,u)$
    with $f \colon \realsN{2} \times \realsN{2} \mapsto \realsN{2}$ subject to the \acrshort{NNC}
    \begin{equation}
        \varphi(x) = \begin{bmatrix}
            \phi_{\textrm{ReLU}}(x_1 - 1) 
            \\
            \hat{\phi}_{\textrm{sp}}(x_2)
        \end{bmatrix}.
    \end{equation}
    By \cref{eq:SAset_SAsoftplus,eq:SingleReLUNetwork}, the vector-valued, polynomial function $g^\varphi(x, \varphi)$ defining the inequalities of the semialgebraic set $\mathbf{K}_\varphi = \graph(\varphi)$ is then given as 
    \begin{equation}
        g^\varphi(x, \varphi) = \begin{bmatrix}
            \varphi_1 \\
            \varphi_1 - x_1 + 1 \\
            \varphi_2 - t_{\textrm{sp}}(x_2; a)
        \end{bmatrix}, \ \forall a \in \realsN{}.
    \end{equation}
    By choosing $a = 0$, from \cref{eq:Vparam_Sdef,eq:Vparam_Ggeq0def} it follows $\mathcal{I}_{>0}^{g^\varphi} = \{2\}$, $\mathcal{G}^V_{>0} = \{ \varphi_1 - x_1 + 1 \}$. Thus, considering candidate Lyapunov functions quadratic in $x$, $\varphi(x)$, by \cref{eq:LyapunovParam,eq:LyapunovParam_LocalStability} we obtain
    \begin{equation}
        \begin{alignedat}{1}
            V(x) &= 
            \begin{bmatrix}
            x \\
            \varphi(x) - \varphi(0)
            \end{bmatrix}\transpose
            \begin{bmatrix}
                P + Q_{11} & Q_{12} \\
                Q_{12}     & Q_{22}
            \end{bmatrix}
            \begin{bmatrix}
                * \\
                *
            \end{bmatrix}
            + Q_1 \varphi_1(x)
            \\ & \quad \ \ 
            + Q_2 \big(\varphi_2(x) - t_{\textrm{sp}}(x_2;0)\big) 
            + Q_3 \varphi_1(x)\big( \varphi_2(x) - t_{\textrm{sp}}(x_2;0) \big) 
            \\ & \quad \ \
            + Q_4 \big( \varphi_2(x) - t_{\textrm{sp}}(x_2;0) \big) \big( \varphi_1(x) - x_1 + 1 \big),
        \end{alignedat}
    \end{equation}
    with $P \succ 0$, $Q = \bigl[ \begin{smallmatrix} Q_{11} & Q_{12} \\ Q_{12} & Q_{22} \end{smallmatrix} \bigr] \succeq 0$, $Q_1, Q_2, Q_3, Q
    _4 \geq 0$. This parameterization can be used in \acrshort{SDP} \cref{eq:SDPFormulationLocalAsymptoticStability} to search for a local Lyapunov function for $x^+ = f\big(x, \varphi(x)\big)$.
\end{exmp}

Thus, the parameterization of \cref{eq:LyapunovParam,eq:LyapunovParam_LocalStability} defines a class of candidate Lyapunov functions $V$ compatible with the state-of-the-art semialgebraic-set-based local stability verification procedure reviewed in \cref{sec:SoAReview}.
Relative to prior formulations \cite{mybibfile:Korda2017,mybibfile:Yin2022,mybibfile:Pauli2021,mybibfile:Newton2022}, this parameterization defines a larger finite-dimensional function space of candidate Lyapunov functions, allowing \acrshort{SDP} \cref{eq:SDPFormulationLocalAsymptoticStability} to search through a larger set of possible solutions whilst still ensuring that any solution defines a valid local Lyapunov function for closed-loop system \cref{eq:GeneralClosedLoopSystem}.

\subsection{Novel SDP Formulation for Local Stability}
\label{sec:ImprovedSDPFormulations}
The current state-of-the-art \acrshort{SDP} formulation used to find a local Lyapunov function for closed-loop system \cref{eq:GeneralClosedLoopSystem} does not directly optimize the size of the resulting \acrshort{ROA} estimate, despite 
this being of great practical importance. We therefore present an alternative formulation of \acrshort{SDP} problems that directly optimize this local stability property. This is achieved by first formulating a nonconvex, augmented optimization problem, any solution to which certifies $\mathcal{Q}$ of \cref{eq:LocalRegionQDefinition} to directly be an estimate of the \acrshort{ROA}. Next, in an attempt to find the largest set $\mathcal{Q}$ for which this augmented optimization problem is feasible, a sequence of \acrshortpl{SDP} is formulated to solve this augmented optimization problem with the aim of iteratively increasing the size of $\mathcal{Q}$.

\subsubsection{Formulation of Augmented Optimization Problem}
First, an augmented optimization problem is formulated whose solutions directly yield an \acrshort{ROA} estimate. This is achieved by allowing the local region of the state space being examined, $\mathcal{Q}$, to be an optimization variable. Specifically, by \cref{eq:LocalRegionQDefinition}, consider $\mathcal{Q}$ to be defined by a function $q$ of the form
\begin{equation}
    \label{eq:RealizationOfSetQ}
    q(x) = q_{\zeta}\big(x, \lambda(x), \varphi(x)\big) = \alpha - \sigma_q\big(x, \lambda(x), \varphi(x)\big)
\end{equation}
where $\alpha$ is a scalar variable and $\sigma_q$ is constrained to be a \acrshort{SOS} polynomial in $\zeta$ equal to zero at $x=0$. This constraint is denoted $\sigma_q \in \mathcal{Q}_0$ and can be realized by restricting the form of $\sigma_q$ similarly to \cref{eq:LyapunovParam_LocalStability}, e.g. $\sigma_q(\zeta) = \nu_{\geq 1}(\Delta\zeta)\transpose Q \nu_{\geq 1}(\Delta\zeta)$, $Q \succeq 0$. 
By \cref{assume:ContinuityOfQinx} it follows $0 \in \textrm{int}(\mathcal{Q})$ for $\alpha > 0$.

Next, the set $\mathcal{Q}$ defined by the optimization variables of \cref{eq:RealizationOfSetQ} is constrained to be invariant. Based on the concept of a discrete-time barrier function \cite{mybibfile:Agrawal2017}, consider the constraint
\begin{multline}
    \label{eq:BarrierCond}
    \| \xi \|^{2k} q_\zeta(\xi_{x^+}, \xi_{\lambda^+}, \xi_{\varphi^+}) \geq 
    p^{\Delta \mathcal{Q}}_{\textrm{eq}}(\xi)\transpose h^L(\xi) + \sigma^{\Delta \mathcal{Q}}(\xi) \\
    + \sigma^{\Delta \mathcal{Q}}_{\textrm{ineq}}(\xi) \transpose
    \begin{bmatrix}
        \mathcal{M}\Big( \left[\!\begin{smallmatrix} q_\zeta(\xi_{x}, \xi_{\lambda}, \xi_{\varphi}) \\ g^L(\xi) \end{smallmatrix}\!\right]^{\phantom{|}}\!\!, 1 \Big)
        \\
        \mathcal{M}\Big( \left[\!\begin{smallmatrix} q_\zeta(\xi_{x}, \xi_{\lambda}, \xi_{\varphi}) \\ g^L(\xi) \end{smallmatrix}\!\right]^{\phantom{|}}\!\!, 2 \Big)
        \\
        \vdots
    \end{bmatrix}, \  \forall \xi \in \realsN{n_\xi}
\end{multline}
with $\sigma^{\Delta \mathcal{Q}}$ a scalar \acrshort{SOS} polynomial, $\sigma^{\Delta \mathcal{Q}}_{\textrm{ineq}}$ a vector of \acrshort{SOS} polynomials, $p^{\Delta \mathcal{Q}}_{\textrm{eq}}$ a vector of arbitrary polynomials and $k \in \integersN{}_{\geq 0}$. By substituting the variables $\xi_x$, $\xi_\lambda$, $\xi_\varphi$, $\xi_x^+$, $\xi_\lambda^+$, $\xi_\varphi^+$ in \cref{eq:BarrierCond} with $x$, $\lambda(x)$, $\varphi(x)$, $x^+(x)$, $\lambda^+(x)$, $\varphi^+(x)$ it follows that satisfaction of \cref{eq:BarrierCond} is a sufficient condition to guarantee $q\big(x^+(x)\big) \geq 0$ for all $x \in \mathcal{Q}$, thereby ensuring invariance of $\mathcal{Q}$. 

Interpreting \cref{eq:BarrierCond} as an \acrshort{SOS} constraint and combining this with previous constraints leads to the augmented optimization problem
\begin{subequations}
    \label{eq:OptimProblemFormulationInvariantLocalAsymptoticStability}
    \begin{alignat}{4}
        &\span\span \text{find:} \, &  \; \Pi = \big\{\sigma^V, \sigma^V_{\textrm{ineq}},  
        \sigma^{\Delta V},  \sigma^{\Delta V}_{\textrm{ineq}}, &  \sigma^{\Delta \mathcal{Q}}, \sigma^{\Delta \mathcal{Q}}_{\textrm{ineq}}, \sigma_q, p^{\Delta V}_{\textrm{eq}}, p^{\Delta \mathcal{Q}}_{\textrm{eq}}, \alpha \big\} \span \nonumber \\
        &\span\span \text{s.t.} \ & \eqref{eq:LyapunovParam}, \, \, \eqref{eq:LocalLyapunovDecreaseCond},  \ &  \eqref{eq:RealizationOfSetQ}, \, \, \eqref{eq:BarrierCond} \span  \\
        &\span\span & \begin{gathered} 
        \sigma^V, \sigma^V_{\textrm{ineq}}, \sigma^{\Delta V},  
        \\
        \sigma^{\Delta V}_{\textrm{ineq}}, \sigma^{\Delta \mathcal{Q}},\sigma^{\Delta \mathcal{Q}}_{\textrm{ineq}},
        \end{gathered} \quad & \text{\acrshort{SOS} polynomials,} \label{eq:InvariantLocalAsymptoticStabilitySOS} \\
        &\span\span & 
        p^{\Delta V}_{\textrm{eq}}, p^{\Delta \mathcal{Q}}_{\textrm{eq}} \quad & \text{arbitrary polynomials,}  \\
        &\span\span & \sigma_q \in & \ \mathcal{Q}_0.   \\
        &\span\span & \alpha > & \ 0.  \label{eq:InvariantLocalAsymptoticStabilityGammaStrictlyPositive}
    \end{alignat}
\end{subequations}
Unlike \acrshort{SDP} \cref{eq:SDPFormulationLocalAsymptoticStability}, the unrestricted parameterization of \cref{eq:LyapunovParam} can be used in this optimization problem without losing the guarantee that any $V$ forming a solution to this problem defines a local Lyapunov function.

\begin{thm}
    \label{thm:ValidityOptimProblemInvariantLocalAsymptoticStability}
    Under \cref{assume:ContinuityOfV,assume:ContinuityOfQinx}, and assuming $f \circ (\textrm{id}, \varphi) \colon \realsN{n_x} \mapsto \realsN{n_x}$ is locally bounded, any solution to augmented optimization problem \cref{eq:OptimProblemFormulationInvariantLocalAsymptoticStability} defines a Lyapunov function in $\mathcal{Q}$ for closed-loop system \cref{eq:GeneralClosedLoopSystem}, thereby certifying that this system is \acrshort{LAS} and that $\mathcal{Q}$ forms an inner estimate of the \acrshort{ROA}.
\end{thm}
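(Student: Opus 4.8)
\emph{Proof plan.} The plan is to verify, in order, that (i) the set $\mathcal{Q}$ produced by any feasible $\Pi$ is positively invariant, (ii) the function $V$ (up to the additive constant $V(0)$) is a continuous, positive definite function on $\mathcal{Q}$ fulfilling all three conditions \cref{eq:V_conditions_a,eq:V_conditions_b,eq:V_conditions_c} of \cref{defn:LyapunovFunction}, and then (iii) invoke standard Lyapunov theory \cite[Thm.~B.13]{mybibfile:Rawlings2017}, using local boundedness of $f\circ(\textrm{id},\varphi)$, to conclude that \cref{eq:GeneralClosedLoopSystem} is \acrshort{LAS} and that every trajectory starting in $\mathcal{Q}$ converges to the origin, i.e.\ $\mathcal{Q}\subseteq\mathcal{X}_{\textrm{RoA}}$. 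The recurring mechanical device is the substitution already used in \cref{sec:SoAReview}: in \cref{eq:LyapunovParam,eq:LocalLyapunovDecreaseCond,eq:BarrierCond} one replaces the free variables $\xi_x,\xi_\lambda,\xi_\varphi,\xi_{x^+},\xi_{\lambda^+},\xi_{\varphi^+}$ (resp.\ $\zeta_x,\zeta_\lambda,\zeta_\varphi$) by the actual function values $x,\lambda(x),\varphi(x),x^+(x),\lambda^+(x),\varphi^+(x)$. This is legitimate because, by construction of the semialgebraic model, the tuple of these values always lies in $\mathbf{K}_\varphi$ and $\mathbf{K}_L$ (i.e.\ $\graph(\varphi)\subseteq\mathbf{K}_\varphi$), so on $\mathcal{Q}$ the products $\mathcal{M}(\cdot)$ of entries of $q_\zeta$ and $g^L$ (and those of $g^\varphi$) are nonnegative, the \acrshort{SOS} multipliers are nonnegative, and the equality-multiplier terms $(p^{\Delta V}_{\textrm{eq}})^\top h^L$, $(p^{\Delta\mathcal{Q}}_{\textrm{eq}})^\top h^L$, $(p^{\mathcal{Q}}_{\textrm{eq}})^\top h^\varphi$ vanish.

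\emph{Step (i): invariance of $\mathcal{Q}$.} Performing the substitution in \cref{eq:BarrierCond} and using that on $\mathcal{Q}$ the entire right-hand side is a sum of an \acrshort{SOS} term and nonnegative products (the equality term vanishing), one obtains $\|\xi(x)\|^{2k}\,q\big(x^+(x)\big)\ge 0$ for all $x\in\mathcal{Q}$, where $\xi(x)$ denotes the substituted augmented vector. Since $\sigma_q\in\mathcal{Q}_0$ and $\alpha>0$ give $q(0)=\alpha>0$, and \cref{assume:ContinuityOfQinx} gives $0\in\textrm{int}(\mathcal{Q})$ and closedness of $\mathcal{Q}$, the only point where $\|\xi(x)\|^{2k}$ can fail to be positive is $x=0$, at which $x^+(0)=0$ (equilibrium) and $q(0)=\alpha>0$. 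Hence $q\big(x^+(x)\big)\ge 0$ for every $x\in\mathcal{Q}$, so $\mathcal{Q}$ is positively invariant.

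\emph{Step (ii): decrease and positive definiteness.} Substituting in \cref{eq:LocalLyapunovDecreaseCond} and again using nonnegativity of the multiplier terms on $\mathcal{Q}$ yields $V(x)-V\big(x^+(x)\big)\ge\|x\|^2+\sigma^{\Delta V}\big(x,\lambda(x),\varphi(x),\dots\big)\ge\|x\|^2$ for all $x\in\mathcal{Q}$, while \cref{eq:LyapunovParam} gives $V(x)\ge 0$ for all $x\in\realsN{n_x}$ (the term $(\sigma^V_{\textrm{ineq}})^\top g^V$ being nonnegative on $\graph(\varphi)$). Fix $x_0\in\mathcal{Q}$; by invariance the trajectory $\{x(k)\}$ stays in $\mathcal{Q}$, $V(x(k))$ is nonincreasing and bounded below by $0$, hence convergent, so $V(x(k))-V(x(k{+}1))\to 0$ and therefore $\|x(k)\|\to 0$; by \cref{assume:ContinuityOfV}, $V(x(k))\to V(0)$, and monotonicity forces $V(x_0)\ge V(0)$. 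Thus $V(0)=\min_{\mathcal{Q}}V$, and no other point of $\mathcal{Q}$ attains it: $V(x_\ast)=V(0)$ with $x_\ast\in\mathcal{Q}\setminus\{0\}$ would give $V\big(x_\ast^+\big)\le V(x_\ast)-\|x_\ast\|^2<V(0)$, contradicting $x_\ast^+\in\mathcal{Q}$. Hence $W:=V-V(0)$ is continuous, nonnegative on $\mathcal{Q}$, and vanishes exactly at the origin; a standard majorization/minorization of continuous positive definite functions then supplies $\alpha_1,\alpha_2\in\mathcal{K}_\infty$ with $\alpha_1(\|x\|)\le W(x)\le\alpha_2(\|x\|)$ on $\mathcal{Q}$, and $W(x)-W\big(x^+(x)\big)=V(x)-V\big(x^+(x)\big)\ge\|x\|^2=:\alpha_3(\|x\|)$. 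So $W$ — i.e.\ $V$ up to a constant — is a Lyapunov function in $\mathcal{Q}$ in the sense of \cref{defn:LyapunovFunction}.

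\emph{Step (iii) and the main obstacle.} With $\mathcal{Q}$ positively invariant, $0\in\textrm{int}(\mathcal{Q})$, and a Lyapunov function on $\mathcal{Q}$ in hand, \cite[Thm.~B.13]{mybibfile:Rawlings2017} (applicable since $f\circ(\textrm{id},\varphi)$ is locally bounded) yields that \cref{eq:GeneralClosedLoopSystem} is \acrshort{LAS}; together with the convergence $\|x(k)\|\to 0$ shown for every $x_0\in\mathcal{Q}$ in Step (ii), this gives $\mathcal{Q}\subseteq\mathcal{X}_{\textrm{RoA}}$. The same substitution argument specializes verbatim to \cref{eq:SDPFormulationGlobalAsymptoticStability} (take $\mathcal{Q}=\realsN{n_x}$, use \cref{eq:LyapunovDecreaseCond} in place of \cref{eq:LocalLyapunovDecreaseCond}) and to \cref{eq:SDPFormulationLocalAsymptoticStability,eq:SDPFormulationLargestSublevelSet} (take $\mathcal{Q}=\mathcal{L}_\gamma(V)$, invariant by \cref{eq:SublevelSetCond} and $V(0)<\gamma$), as invoked in the proofs of \cref{thm:GASSDPProof,thm:LASSDPProof}. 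The crux — and the reason the unrestricted parameterization \cref{eq:LyapunovParam} is now admissible, unlike for \acrshort{SDP} \cref{eq:SDPFormulationLocalAsymptoticStability} alone — is that the barrier constraint \cref{eq:BarrierCond} certifies invariance of $\mathcal{Q}$ \emph{a priori}, which is precisely what excludes the spurious ``Lyapunov function with an interior maximum'' of \cref{exmp:CounterexampleLAS}; the remaining delicate points are the bookkeeping of the $\|\xi\|^{2k}$ multiplier at the origin and the fact that, since $V$ need not vanish at the origin under \cref{eq:LyapunovParam}, one must pass to $W=V-V(0)$ to match \cref{defn:LyapunovFunction}.
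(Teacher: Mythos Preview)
Your proof is correct and reaches the same conclusion, but the route to positive definiteness of $\tilde V=V-V(0)$ on $\mathcal{Q}$ is genuinely different from the paper's. The paper argues statically: it splits $\mathcal{Q}$ into $\mathcal{Q}_1=\{x\in\mathcal{Q}:\|x\|^2\le V(0)\}$ (compact) and $\mathcal{Q}_2=\{x\in\mathcal{Q}:\|x\|^2>V(0)\}$, uses $V(x)\ge\|x\|^2>V(0)$ directly on $\mathcal{Q}_2$, and on $\mathcal{Q}_1$ invokes compactness to produce a minimizer and derives a contradiction from the decrease inequality and invariance. You instead run a dynamical argument: for any $x_0\in\mathcal{Q}$ the trajectory stays in $\mathcal{Q}$, $V(x(k))$ is nonincreasing and bounded below, hence its increments vanish, forcing $\|x(k)\|\to 0$ and thus $V(x(k))\to V(0)$ by continuity, giving $V(x_0)\ge V(0)$. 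Your approach avoids the case split and compactness altogether and yields attractivity of the origin from $\mathcal{Q}$ as an immediate byproduct; the paper's approach is more self-contained (no telescoping along trajectories) and makes the explicit lower bound $\alpha_1(\|x\|)=\|x\|^2$ drop out directly via $\tilde V(x)\ge\tilde V(x)-\tilde V(x^+)\ge\|x\|^2$, whereas you defer $\alpha_1,\alpha_2$ to a generic majorization. Both establish the same uniqueness of the minimizer by the same one-step contradiction, and both handle the $\|\xi\|^{2k}$ multiplier in \cref{eq:BarrierCond} the same way at the origin.
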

\begin{proof} 
    \label{prf:ProofValidityOptimProblemInvariantLocalAsymptoticStability}
    Given any solution to augmented optimization problem \cref{eq:OptimProblemFormulationInvariantLocalAsymptoticStability}, it will be shown that $\tilde{V}(x) = V(x)-V(0)$ constitutes a valid Lyapunov function in $\mathcal{Q}$ for closed-loop system \cref{eq:GeneralClosedLoopSystem} by constructing functions $\alpha_1, \alpha_2 \in \mathcal{K}_{\infty}$, and continuous, positive definite function $\alpha_3$ of \cref{defn:LyapunovFunction}. By a standard Lyapunov proof, the existence of this local Lyapunov function is a certificate that the system is \acrshort{LAS} and that $\mathcal{Q}$ forms an inner estimate of the \acrshort{ROA}.
    
    Let $V$ be defined by a solution to augmented optimization problem \cref{eq:OptimProblemFormulationInvariantLocalAsymptoticStability}. First, it will be shown that any such function $V$ satisfies
    \begin{equation}
        \argmin_{x \in \mathcal{Q}} \  V\left(x\right) = \{ 0 \}.
        \label{eq:MinimumOfV}
    \end{equation}
    This can be seen by considering 
    \begin{align}
        \mathcal{Q}_1 = \{ x \in \mathcal{Q} \mid \|x\|^2 \leq V(0)\}, \\
        \mathcal{Q}_2 = \{ x \in \mathcal{Q} \mid \|x\|^2 > V(0)\},
    \end{align}
    such that $\mathcal{Q} = \mathcal{Q}_1 \cup \mathcal{Q}_2$ and $0 \in \mathcal{Q}_1$ for all $V$. By \cref{eq:LyapunovParam,eq:LocalLyapunovDecreaseCond} any $V$ satisfies $V(x) \geq 0$ for all $x \in \realsN{n_x}$ and $V(x) - V(x^+) \geq \|x\|^2$ for all $x \in \mathcal{Q}$, respectively. This implies
    \begin{equation}
        \label{eq:Q2Property}
        V(0) < \|x\|^2 \leq V(x), \quad \forall x \in \mathcal{Q}_2.
    \end{equation}
    Next, by construction, $\mathcal{Q}_1$ is compact. Continuity of $V$ in $x$ implies that $\argmin_{x \in \mathcal{Q}_1} \  V(x) = \mathcal{V}_{\textrm{argmin}}$ must be non-empty. In an argument by contradiction, assume $\exists \, {x} \in \mathcal{V}_{\textrm{argmin}}$ with ${x} \neq 0$. Satisfaction of
    \cref{eq:LocalLyapunovDecreaseCond} ensures $V(x^+) \leq V(x) - \|x\|^2$ for all $x \in \mathcal{Q}$, leading to
    \begin{equation}
        \label{eq:MinimumLemmaContradiction}
        V({x}^+)
        \leq V({x}) - \|{x}\|^2 < 
        V({x}) \leq V(0),
    \end{equation}
    where the last inequality follows by assumption. We recall that satisfaction of \cref{eq:BarrierCond} ensures $\mathcal{Q}$ is invariant. Taken together with \cref{eq:Q2Property} the invariance of $\mathcal{Q}$ implies that ${x}^+ \in \mathcal{Q}_1$, which contradicts ${x} \in \mathcal{V}_{\textrm{argmin}}$ thus proving $\mathcal{V}_{\textrm{argmin}} = \{0\}$. By $\argmin_{x \in \mathcal{Q}_1} \  V(x) = \{0\}$ and \cref{eq:Q2Property} it follows $V(0) < V(x)$ for all $x \in \mathcal{Q}\setminus\{0\}$, thereby proving \cref{eq:MinimumOfV}.

    From the above we conclude that $\tilde{V}$ is a positive definite function in $\mathcal{Q}$, which will now be used to construct $\alpha_1$, $\alpha_2 \in \mathcal{K}_\infty$ and continuous, positive definite function $\alpha_3$ from \cref{defn:LyapunovFunction}.
    \begin{enumerate}
            \item By invariance of $\mathcal{Q}$ and positive definiteness of $\tilde{V}$ in $\mathcal{Q}$,
            \begin{equation}
                \begin{alignedat}{2}
                    \tilde{V}(x) &\geq \tilde{V}(x) - \tilde{V}(x^+),\quad  && \forall x \in \mathcal{Q}
                    , \\
                                        &\geq \|x\|^2, && \forall x \in \mathcal{Q}.
                \end{alignedat}
            \end{equation}
            Thus, $\alpha_1(\|x\|) = \|x\|^2$ is a class $\mathcal{K}_{\infty}$ function that lower bounds $\tilde{V}$ for all $x \in \mathcal{Q}$.
            \item Following the work of Kalman \cite{mybibfile:Kalman1960_CT}, by continuity of $\tilde{V}(x)$, 
            \begin{equation}
                \alpha_2(\|x\|) = \sup_{\substack{y \in \mathcal{Q}  \\ \|y\| \leq \|x\|}} \tilde{V}(y) + \|x\|^2,
            \end{equation}
            is a class $\mathcal{K}_{\infty}$ function that upper bounds $\tilde{V}$ for all $x \in \mathcal{Q}$.
            \item By satisfaction of \cref{eq:LocalLyapunovDecreaseCond} it holds, 
            \begin{equation}
                \tilde{V}(x) - \tilde{V}(x^+) \geq \|x\|^2, \quad \forall x \in \mathcal{Q}.
            \end{equation}
            As a result, $\alpha_3(\|x||) = \|x\|^2$ is a continuous, positive definite function upper bounding the difference $\tilde{V}(x) - \tilde{V}(x^+)$ for all $x \in \mathcal{Q}$.
        \end{enumerate}
        Thus, $\tilde{V}$ defined by the solution to augmented optimization problem \cref{eq:OptimProblemFormulationInvariantLocalAsymptoticStability} is a valid Lyapunov function in $\mathcal{Q}$ for closed-loop system \labelcref{eq:GeneralClosedLoopSystem} and under the assumption that $f \circ (\textrm{id}, \varphi)$ is locally bounded, it follows this closed-loop system is \acrshort{LAS} and the set $\mathcal{Q}$ forms part of the \acrshort{ROA} \cite[Thm~B.13]{mybibfile:Rawlings2017}.
\end{proof}

Unfortunately, augmented optimization problem \labelcref{eq:OptimProblemFormulationInvariantLocalAsymptoticStability} is nonconvex, since $\alpha$ and $\sigma_q$ appear bilinearly with specific entries of $\sigma^{\Delta V}_{\textrm{ineq}}$, $\sigma^{\Delta \mathcal{Q}}_{\textrm{ineq}}$, which we denote by
$\sigma^{\Delta V}_{\textrm{ineq}}\big|{}^{}_{\textrm{bi}}$, $\sigma^{\Delta \mathcal{Q}}_{\textrm{ineq}}\big|{}^{}_{\textrm{bi}}$, respectively.
Optimizing over both sets of variables simultaneously would prevent a solution from being obtained efficiently. Therefore, following Valmorbida and Anderson \cite{mybibfile:Valmorbida2014}, augmented optimization problem \labelcref{eq:OptimProblemFormulationInvariantLocalAsymptoticStability} is solved by
\begin{enumerate}
    \item Finding an initial solution via linearization techniques.
    \item Utilizing this initial solution to formulate a sequence of \acrshortpl{SDP} with the property that the \acrshort{ROA} verified by the solution of an \acrshort{SDP} is no smaller than that of the solution from the preceding \acrshort{SDP} in the sequence.
\end{enumerate}
Both steps as well as the overall resulting algorithm are discussed in the remainder of this section. 

\subsubsection{Finding An Initial Solution via Linearization} To formulate the aforementioned sequence of \acrshortpl{SDP}, an initial solution to optimization problem \labelcref{eq:OptimProblemFormulationInvariantLocalAsymptoticStability} is required. Under the assumption that the closed-loop dynamics are continuously differentiable in a neighborhood around the origin, such an initial solution can be found by examining a linearization of the closed-loop dynamics 
\begin{subequations}    
    \begin{align}
        x^+ & = f\big(x, \varphi(x)\big), \\
            & \approx A_{\textrm{lin}}x, \\
            & =
        \Big( \pd{f(x,u)}{x} \Big|_{\begin{subarray}{l} x=0 \\
        u = \varphi(0) \\ \phantom{a} \end{subarray}} + \pd{f(x,u)}{u} \Big|_{\begin{subarray}{l} x=0 \\
        u = \varphi(0)\, \\ \phantom{a} \end{subarray}} \od{\varphi(x)}{x} \Big|_{x=0} \Big)
        x,
    \end{align}
\end{subequations}
where each of the above derivatives can be obtained either analytically or via standard neural network training techniques, e.g. backpropagation. An \acrshort{SDP} can be set up to find a quadratic Lyapunov function, $x\transpose P_{\textrm{lyap}} x$, $P_{\textrm{lyap}} \succ 0$, for this linearized system. If such a Lyapunov function is found, $\mathcal{Q}$ is guaranteed to be an invariant set of the nonlinear system for some $\alpha > 0$ \cite{mybibfile:Khalil1996}. Thus, augmented optimization problem \labelcref{eq:OptimProblemFormulationInvariantLocalAsymptoticStability} may be solved as an \acrshort{SDP} by setting $\sigma_q = x\transpose P_{\textrm{lyap}} x$, $\alpha = \alpha_{\textrm{lyap}}\ll 1$, leading to the initial problem
\begin{subequations}
    \label{eq:SDPFormulationInitialSolution}
    \begin{alignat}{4}
        &\span\span \text{find:} \, &  \; \Pi \setminus \big\{\sigma_q, \, \alpha \big\} \span \nonumber \\
        &\span\span \text{s.t.} \ & \eqref{eq:LyapunovParam}, \, \, \eqref{eq:LocalLyapunovDecreaseCond},  \ &  \eqref{eq:RealizationOfSetQ}\big|\scalebox{0.7}{$ \substack{ 
        \begin{aligned} \sigma_q &= x\transpose P_{\textrm{lyap}} x \\[-0.6ex] \alpha &= \alpha_{\textrm{lyap}} \end{aligned} } $}, \, \, \eqref{eq:BarrierCond} \span  \\
        &\span\span & \begin{gathered} 
        \sigma^V, \sigma^V_{\textrm{ineq}}, \sigma^{\Delta V},  
        \\
        \sigma^{\Delta V}_{\textrm{ineq}}, \sigma^{\Delta \mathcal{Q}},\sigma^{\Delta \mathcal{Q}}_{\textrm{ineq}},
        \end{gathered} \quad & \text{\acrshort{SOS} polynomials,}  \\
        &\span\span & 
        p^{\Delta V}_{\textrm{eq}}, \, p^{\Delta \mathcal{Q}}_{\textrm{eq}} \quad & \text{arbitrary polynomials.} 
    \end{alignat}
\end{subequations}

\subsubsection{A Sequence of SDPs}
Given an initial solution to augmented optimization problem \labelcref{eq:OptimProblemFormulationInvariantLocalAsymptoticStability} by means of the aforementioned linear analysis, we now construct a sequence of \acrshortpl{SDP} whose solutions define a sequence of \acrshort{ROA} estimates $\mathcal{Q}$ that are non-decreasing in size. 

For the first problem in the sequence, assume knowledge of a previous solution to augmented optimization problem \labelcref{eq:OptimProblemFormulationInvariantLocalAsymptoticStability}. Let $\sigma_q^{\mathcal{A}}$ denote the value of $\sigma_q$ in this previous solution. Consider now optimization problem $\mathcal{A}$
\begin{subequations}
    \label{eq:SDPFormulationInvariantLocalAsymptoticStabilityProbA}
    \begin{alignat}{4}
        &\span\span \underset{ 
        \Pi \setminus \{\sigma_q\}
        }{\textrm{maximize:}} \ & & \alpha \span \nonumber \\
        &\span\span \text{s.t.} \ & \eqref{eq:LyapunovParam}, \, \, \eqref{eq:LocalLyapunovDecreaseCond},  \  \eqref{eq:RealizationOfSetQ}\big|\scalebox{0.7}{$\sigma_q = \sigma_q^{\mathcal{A}}$}, \, \, \eqref{eq:BarrierCond} \span  \\
        &\span\span & \begin{gathered} 
        \sigma^V, \sigma^V_{\textrm{ineq}}, \sigma^{\Delta V},  
        \\
        \sigma^{\Delta V}_{\textrm{ineq}}, \sigma^{\Delta \mathcal{Q}},\sigma^{\Delta \mathcal{Q}}_{\textrm{ineq}},
        \end{gathered} \quad & \text{\acrshort{SOS} polynomials,}  \\
        &\span\span & 
        p^{\Delta V}_{\textrm{eq}}, \, p^{\Delta \mathcal{Q}}_{\textrm{eq}} \quad & \text{arbitrary polynomials,} 
    \end{alignat}
\end{subequations}
in which $\sigma_q$ is fixed to the value of $\sigma_q^{\mathcal{A}}$ and is therefore no longer an optimization variable.

Optimization problem $\mathcal{A}$ of \labelcref{eq:SDPFormulationInvariantLocalAsymptoticStabilityProbA} is solved as a sequence of \acrshortpl{SDP} using a line search over $\alpha$ that starts at the value given by the previous solution. The existence of a previous solution at this initial value of $\alpha$ guarantees the feasibility of the first \acrshort{SDP} in this sequence and negates the need for the inclusion of constraint \cref{eq:InvariantLocalAsymptoticStabilityGammaStrictlyPositive}. Additionally, any subsequent solutions to the \acrshortpl{SDP} comprising the line search over $\alpha$ define an \acrshort{ROA} estimate $\mathcal{Q}$ of increased size, as a solution to optimization problem $\mathcal{A}$ of \labelcref{eq:SDPFormulationInvariantLocalAsymptoticStabilityProbA} also defines a solution to augmented optimization problem \cref{eq:OptimProblemFormulationInvariantLocalAsymptoticStability}.

Once an (approximate) solution to optimization problem $\mathcal{A}$ of \labelcref{eq:SDPFormulationInvariantLocalAsymptoticStabilityProbA} has been found, let $\alpha^{\mathcal{A}}$, 
$\mathcal{Q}^{\mathcal{A}}$, 
$\sigma^{\Delta V}_{\textrm{ineq}}\big|{}^{\mathcal{A}}_{\textrm{bi}}$, $\sigma^{\Delta \mathcal{Q}}_{\textrm{ineq}}\big|{}^{\mathcal{A}}_{\textrm{bi}}$
denote the values defined by the relevant variables of this solution. Subsequent \acrshort{SDP} problems in the sequence are defined by solving optimization problem $\mathcal{B}$, in which $\sigma_q$ is an optimization variable. Therefore, to maintain convexity, the values of 
$\sigma^{\Delta V}_{\textrm{ineq}}\big|{}^{}_{\textrm{bi}}$, $\sigma^{\Delta \mathcal{Q}}_{\textrm{ineq}}\big|{}^{}_{\textrm{bi}}$ are fixed to 
$\sigma^{\Delta V}_{\textrm{ineq}}\big|{}^{\mathcal{A}}_{\textrm{bi}}$, $\sigma^{\Delta \mathcal{Q}}_{\textrm{ineq}}\big|{}^{\mathcal{A}}_{\textrm{bi}}$, respectively. In addition, to ensure that any new $\sigma_q$ defines a non-decreasing \acrshort{ROA} estimate, consider inclusion of the constraint
\begin{multline}
    \label{eq:InvariantLocalAsymptoticStabilityQbSupersetConstraint}
    \| \zeta \|^{2k}\big( \sigma_q^{\mathcal{A}}(\zeta) - \sigma_q(\zeta)\big)  \geq p_{\mathcal{B}}(\zeta)\transpose h^\varphi(\zeta)  
    \\ 
    + \sigma_{\mathcal{B}}(\zeta)\transpose
   \begin{bmatrix}
        \mathcal{M}\Big( \left[\!\begin{smallmatrix} \alpha - \sigma_q^{\mathcal{A}}(\zeta) 
        \\
        g^\varphi(\zeta)
        \end{smallmatrix}\!\right], 1 \Big)
        \\
        \mathcal{M}\Big( \left[\!\begin{smallmatrix} \alpha - \sigma_q^{\mathcal{A}}(\zeta) 
        \\
        g^\varphi(\zeta) \end{smallmatrix}\!\right], 2 \Big)
        \\
        \vdots
    \end{bmatrix}, 
    \ \forall \zeta \in \realsN{n_\zeta},
\end{multline}
with $\sigma_{\mathcal{B}}$ a vector of \acrshort{SOS} polynomials, $p_{\mathcal{B}}$ a vector of arbitrary polynomials and $k \in \integersN{}_{\geq 0}$. If satisfied for $\alpha \geq \alpha^{\mathcal{A}}$, \cref{eq:InvariantLocalAsymptoticStabilityQbSupersetConstraint} guarantees 
\begin{equation}
    \sigma_q\big(x, \lambda(x), \varphi(x)\big) \leq \sigma_q^{\mathcal{A}}\big(x, \lambda(x), \varphi(x)\big), \ \forall x \in \bar{\mathcal{Q}}, 
\end{equation}
with $\bar{\mathcal{Q}} = \big\{x \in \realsN{n_x} \mid \alpha - \sigma_q^{\mathcal{A}}(x, \lambda(x), \varphi(x)) \geq 0\big\} \supseteq \mathcal{Q}^{\mathcal{A}}$, thereby guaranteeing $\mathcal{Q} \supseteq \mathcal{Q}^{\mathcal{A}}$. In addition, since $\sigma^{\mathcal{A}}_q(0, \lambda(0), \varphi(0)) = 0$, it follows that satisfaction of \cref{eq:InvariantLocalAsymptoticStabilityQbSupersetConstraint} implies $\sigma_q \in \mathcal{Q}_0$. 
Interpreting \cref{eq:InvariantLocalAsymptoticStabilityQbSupersetConstraint} as a \acrshort{SOS} constraint, optimization problem $\mathcal{B}$ is defined as
\begin{subequations}
    \label{eq:SDPFormulationInvariantLocalAsymptoticStabilityProbB}
    \begin{alignat}{4}
        &\span\span 
        \underset{ \substack{\scalebox{0.65}{$\Pi \setminus \{ \sigma^{\Delta V}_{\textrm{ineq}}|_{\textrm{bi}}, \, \sigma^{\Delta \mathcal{Q}}_{\textrm{ineq}}|_{\textrm{bi}} \}$} \\ \scalebox{0.65}{$\sigma_{\mathcal{B}}, \, p_{\mathcal{B}}$} } }
        {\textrm{maximize:}} \ & \alpha \ & \span \nonumber \\
        &\span\span \text{s.t.} \ & \begin{multlined} \eqref{eq:LyapunovParam}, \, \, \eqref{eq:LocalLyapunovDecreaseCond}\big|{ \scalebox{0.7}{$ \sigma^{\Delta V}_{\textrm{ineq}}|{}^{}_{\textrm{bi}} = \sigma^{\Delta V}_{\textrm{ineq}}|{}^{\mathcal{A}}_{\textrm{bi}}$} }, \quad \\  \quad \quad \ \eqref{eq:RealizationOfSetQ}, \, \, \eqref{eq:BarrierCond}\big|{ \scalebox{0.7}{$ \sigma^{\Delta \mathcal{Q}}_{\textrm{ineq}}|{}^{}_{\textrm{bi}} = \sigma^{\Delta \mathcal{Q}}_{\textrm{ineq}}|{}^{\mathcal{A}}_{\textrm{bi}}$}, \, \, \eqref{eq:InvariantLocalAsymptoticStabilityQbSupersetConstraint} } \end{multlined} \span \span
        \\
        &\span\span & \! \! \! \! \! \! \! \! \! {\scalebox{0.85}{$\begin{gathered} 
        \sigma^V, \sigma^V_{\textrm{ineq}}, \sigma^{\Delta V}, \sigma^{\Delta V}_{\textrm{ineq}},
        \\
        \sigma^{\Delta \mathcal{Q}},\sigma^{\Delta \mathcal{Q}}_{\textrm{ineq}}, \sigma_q, \sigma_{\mathcal{B}}
        \end{gathered}$}} \quad & \text{\acrshort{SOS} polynomials,} 
        \\
        &\span\span & 
        {\scalebox{0.85}{$p^{\Delta V}_{\textrm{eq}}, p^{\Delta \mathcal{Q}}_{\textrm{eq}},p^{\mathcal{B}}$}} \quad & \text{arbitrary polynomials.} 
    \end{alignat}
\end{subequations}
Given the bilinear terms in \cref{eq:InvariantLocalAsymptoticStabilityQbSupersetConstraint}, optimization problem $\mathcal{B}$ is also solved as a sequence of \acrshortpl{SDP} using a line search over $\alpha$ that starts at the value of the previous solution, $\alpha^\mathcal{A}$. At this initial value, optimization problem $\mathcal{B}$ of \labelcref{eq:SDPFormulationInvariantLocalAsymptoticStabilityProbB} is feasible since a solution can be found using the values from optimization problem $\mathcal{A}$ of \labelcref{eq:SDPFormulationInvariantLocalAsymptoticStabilityProbA}, $\sigma_\mathcal{B}=0$ and $p_\mathcal{B} = 0$. Any subsequent solutions to the \acrshortpl{SDP} comprising the line search over $\alpha$ define an \acrshort{ROA} estimate of increased size, as a solution to optimization problem $\mathcal{B}$ of \labelcref{eq:SDPFormulationInvariantLocalAsymptoticStabilityProbB} also defines a solution to augmented optimization problem \cref{eq:OptimProblemFormulationInvariantLocalAsymptoticStability}.

Thus, given an initial solution to optimization problem \labelcref{eq:OptimProblemFormulationInvariantLocalAsymptoticStability} obtained by means of the linear analysis, consider now the sequence of \acrshortpl{SDP} obtained by solving optimization problems $\mathcal{A}$ and $\mathcal{B}$ in an alternating fashion, using a line search method over $\alpha$ as described above. 
It follows that:
\begin{enumerate}
    \item The solution of any \acrshort{SDP} problem in the sequence also defines a solution to augmented optimization problem \labelcref{eq:OptimProblemFormulationInvariantLocalAsymptoticStability}, forming a valid local stability certificate via \cref{thm:ValidityOptimProblemInvariantLocalAsymptoticStability}.
    \item The invariant sets $\mathcal{Q}$ defined by the solutions to the \acrshort{SDP} problems in the sequence are non-decreasing.
\end{enumerate}
Following these observations, a systematic algorithm directly optimizing the \acrshort{ROA} estimate is presented.

\subsubsection{Algorithmic Implementation}
We present a novel algorithm enabling a systematic examination of the local stability properties of closed-loop system \labelcref{eq:GeneralClosedLoopSystem}. This algorithm is shown in pseudocode in \cref{alg:AlternatingLocalStability}.

\begin{algorithm}[t]
    \caption{Sequential Local Stability Analysis}
    \label{alg:AlternatingLocalStability}
    \begin{algorithmic}
        \REQUIRE Semialgebraic set model $\big(\mathbf{K}_\varphi, \mathbf{K}_L\big)$. 
        \ENSURE $\exists P_{\textrm{lyap}} \succ 0$, $\exists
        \alpha_{\textrm{lyap}} 
        \geq \epsilon_{\textrm{prec}}$, $\exists\Delta \alpha_{\,\text{LS}}, \, \alpha_{\textrm{WHILE}}  > 0$ s.t.{\,}:
        \begin{enumerate}
            \renewcommand{\theenumi}{\roman{enumi}}
            \renewcommand{\labelenumi}{(\theenumi)}
            \item $P_{\textrm{lyap}} - A_{\text{lin}}\transpose P_{\textrm{lyap}} A_{\text{lin}} \succ 0$,
            \item \acrshort{SDP} \labelcref{eq:SDPFormulationInitialSolution} is feasible
        \end{enumerate}
        {\setlength{\leftskip}{-0.8em}
        \STATE $(\sigma_q^\mathcal{B}, \Delta \alpha, \alpha^{\mathcal{B}}) \gets (x\transpose P_{\textrm{lyap}} x, \infty, \alpha_{\textrm{lyap}})$
        \WHILE{$\Delta \alpha > \Delta \alpha_{\textrm{WHILE}}$}
        \STATE $(\alpha^{\mathcal{B}}_{\text{old}}, \, \sigma_q^{\mathcal{A}})\gets (\alpha^{\mathcal{B}}, \, \sigma_q^{\mathcal{B}})$
        \STATE $(\alpha^{\mathcal{A}}, \, \sigma^{\Delta V}_{\textrm{ineq}}\big|{}^{\mathcal{A}}_{\textrm{bi}}, \, \sigma^{\Delta \mathcal{Q}}_{\textrm{ineq}}\big|{}^{\mathcal{A}}_{\textrm{bi}}
        ) \gets \text{LineSearch}\big(\alpha^{\mathcal{B}}, \, \alpha^{\mathcal{B}} + \Delta \alpha_{\,\text{LS}},$ 
        \STATE \hfill $\text{Solve}(\text{\acrshort{SDP}} \ \labelcref{eq:SDPFormulationInvariantLocalAsymptoticStabilityProbA}\, ; \sigma_q^{\mathcal{A}})\big)$ 
        \STATE $(\alpha^\mathcal{B}, \, \sigma_q^\mathcal{B}, \, V) \gets \text{LineSearch}\big(\alpha^{\mathcal{A}},  \, \alpha^{\mathcal{A}} + \Delta \alpha_{\,\text{LS}},$ 
        \STATE  \hfill $\text{Solve}(\text{\acrshort{SDP}} \ \labelcref{eq:SDPFormulationInvariantLocalAsymptoticStabilityProbB}\, ; \sigma^{\Delta V}_{\textrm{ineq}}\big|{}^{\mathcal{A}}_{\textrm{bi}}, \, \sigma^{\Delta \mathcal{Q}}_{\textrm{ineq}}\big|{}^{\mathcal{A}}_{\textrm{bi}}, \, \sigma_q^{\mathcal{A}})\big)$ 
        \STATE $\Delta \alpha \gets \alpha^{\mathcal{B}} - \alpha^{\mathcal{B}}_{\text{old}}$
        \ENDWHILE
        \STATE \textbf{Return:} $(V, \alpha^{\mathcal{B}}, \sigma_q^\mathcal{B})$
        \STATE
        }
    \end{algorithmic}
\end{algorithm}
Here, $\text{LineSearch}(\alpha, \, \beta, \text{r})$ denotes a routine applying a line search in the interval $[\alpha, \beta]$ to the routine $\text{r}$, $\text{Solve}\left( \textrm{\acrshort{SDP}} \, (i);\, \gamma\right)$ is a routine determining the feasibility and/or solution of \acrshort{SDP} $(i)$ using the values provided by $\gamma$, $\epsilon_{\textrm{prec}}$ represents a positive scalar value greater than the numerical precision of the \acrshort{SDP} solver, $\Delta \alpha_{\textrm{LS}} > 0$ defines how much $\alpha$ may increase in one iteration of the LineSearch routine, and $\Delta \alpha_{\textrm{WHILE}} > 0$ defines the minimum amount $\alpha$ must increase during one while loop iteration.

From the preceding discussion, it is clear that \Cref{alg:AlternatingLocalStability} addresses the shortcomings identified at the beginning of this section, as it
\begin{enumerate}
    \item systematically updates the set $\mathcal{Q}$ and certifies directly at each \acrshort{SDP} solution that this set forms part of the closed-loop system's \acrshort{ROA},
    \item guarantees that the \acrshort{ROA} estimate $\mathcal{Q}$ does not shrink during each while loop iteration, and
    \item provides a natural termination criteria for the optimization problem in the form of a minimal increase in the value of $\alpha$ after each while loop iteration.
\end{enumerate}
A range of variations of this algorithm may also be implemented, e.g. exiting the while loop after a set number of iterations, etc. See \cref{sec:MPCImitation} for an example.

\section{Numerical Results}
\label{sec:NumericalResults}

The contributions of \cref{sec:ModelingContributions,sec:StabilityVerificationContributions} are demonstrated in the two following numerical examples, respectively. Both examples are solved using the SOSTOOLS \cite{mybibfile:SOSTOOLS} and MOSEK \cite{mybibfile:MOSEK} optimization toolboxes. \acrshortpl{SDP} are solved up to the default numerical tolerance of $\num{1e-6}$. The neural network of \cref{sec:MPCImitation} is trained using the Adam optimizer \cite{mybibfile:Kingma2017}.

\subsection{Implicit Mass-Spring-Damper System}
\label{sec:ImplicitSpringMassDamper}
Consider a mass-spring-damper system with a bounded input and nonlinear, bounded damping force described by 
\begin{equation}
    m\ddot{q} = -kq -d_2\hat{\phi}_{\text{tanh}}(d_1\dot{q}) + \text{sat}(u),
\end{equation}
with saturation achieved at $-1$ and $1$, $m = \qty{1}{\kilogram}$, $k = \qty[scientific-notation=fixed,round-precision=2]{0.5}{\newton\per\meter}$, $c_{\text{tanh}} = \num{1}$, $d_1 = \num[scientific-notation=fixed,round-precision=2]{0.1}$, $d_2 = \qty[scientific-notation=fixed,round-precision=2]{0.5}{\newton\second\per\meter}$. Define the continuous-time state $z\transpose = [q, \dot{q}]\transpose$ and $f$ such that $\dot{z} = f(z,u)$.
Next, a discrete-time system with state $x$ is obtained by approximating the continuous-time solutions via a backward euler discretization at a sampling rate of $T_s = \qty[scientific-notation=fixed,round-precision=1]{0.05}{\second}$, 
\begin{equation}
    x^+ = x + T_s f(x^+, u^+).
\end{equation}
In addition, consider the \acrshort{LQR} state-feedback controller $K = - \vstretch{1.3}{[} \, \num{0.0714}, \ \num{0.742} \vstretch{1.3}{]}$
designed using the approximate system dynamics
\begin{multline}
    x^+ \approx \bigg( I - \dpd{f(x,u)}{x} \bigg|_{\begin{subarray}{l} x=0 \\
    u = 0 \\ \phantom{a} \phantom{a} \end{subarray}} \bigg)^{-1} x \, + \\
    \int_0^{T_s} \bigg( I - \dpd{f(x,u)}{x} \bigg|_{\begin{subarray}{l} x=0 \\
    u = 0 \\ \phantom{a} \end{subarray}} \bigg)^{-1} \big(T_s - \tau\big) \dif \tau \, \bigg( \dpd{f(x,u)}{u} \bigg|_{\begin{subarray}{l} x=0 \\
    u = 0 \\ \phantom{a} \end{subarray}}\bigg) u,
\end{multline}
$Q = \text{diag}(1, 5)$ and $R = 10$. Finally, let $\hat{K} = \vstretch{1.3}{[} \, \hat{k}_1 \ \hat{k}_2 \vstretch{1.3}{]} = 2\sqrt{c_{\text{sp}} + \frac{1}{4}} \, K = - \vstretch{1.3}{[} \, \num{0.122}, \ \num{1.27} \vstretch{1.3}{]}$ and define the control law
\begin{equation}
    u(x) = \hat{\phi}_{\text{sp}}(\hat{K}x+1) - \hat{\phi}_{\text{sp}}(\hat{K}x-1) - 1,
\end{equation}
with $c_{\text{sp}} = \ln{(2)}^2$ such that $u(x) \in [-1, 1]$ for all $x \in \realsN{2}$ and $\od{u(x)}{x} \big|_{\begin{subarray}{l} x = 0_{\phantom{l}} \\  \end{subarray}} = K$. 

This is an implicit system utilizing the semialgebraic activation functions of \cref{sec:ModelingContributions}. Following the developments of \cref{sec:NewNNArchitectures}, the stability properties of this system are analyzed by examining an equivalent closed-loop system consisting of an open-loop dynamical system $\tilde{x}^+ = \tilde{f}(\tilde{x}, \tilde{u}) = \bigl[ \begin{smallmatrix} \tilde{u}_1 \\ \tilde{u}_2 \end{smallmatrix} \bigr]$ with state $\tilde{x} = x$, controlled by a memoryless \acrshort{REN}-based \acrshort{NNC} $\tilde{\varphi}(\tilde{x}) = D_{21}\lambda(\tilde{x}) + D_{22}\tilde{x} + b_{\varphi} = \bigl[ \begin{smallmatrix} 1 & 0 & 0 & 0 & 0 \\ 0 & 1 & 0 & 0 & 0 \end{smallmatrix} \bigr] \lambda(\tilde{x}) + \bigl[\begin{smallmatrix} 0 & 0 \\ 0 & 0 \end{smallmatrix} \bigr] \tilde{x} + \bigl[\begin{smallmatrix} 0  \\  0 \end{smallmatrix} \bigr]$, $\lambda(\tilde{x}) = \phi\big( D_{11} \lambda(\tilde{x}) + D_{12} \tilde{x} + b_v \big)$ where

\begin{subequations}
    \begin{alignat}{5}
        \lambda_{1} &= & \text{id} \Big( & \begin{bNiceArray}{wc{0.46cm}@{\hskip 3pt}wc{0.4cm}@{\hskip 3pt}wc{0.5cm}@{\hskip 4pt}wc{0.4cm}@{\hskip 1pt}wc{0.4cm}@{\hskip 4.5pt}}
            0 & T_s & 0 & 0 & 0
        \end{bNiceArray} 
        \lambda + && \begin{bNiceArray}{wc{0.2cm}@{\hskip 1pt}wc{0.275cm}@{\hskip 3pt}} 1 & 0 \end{bNiceArray} \tilde{x} & \span + 0 && \Big), \\
        \lambda_{2} &= & \text{id} \Big( & \begin{bNiceArray}{wc{0.46cm}@{\hskip 3pt}wc{0.4cm}@{\hskip 3pt}wc{0.5cm}@{\hskip 4pt}wc{0.4cm}@{\hskip 0.5pt}wc{0.4cm}@{\hskip 4.5pt}}
            \frac{-T_s k}{m} & 0 & \frac{-T_s d_2}{m} & \frac{T_s}{m} & \frac{-T_s}{m} 
        \end{bNiceArray} 
        \lambda + && \begin{bNiceArray}{wc{0.2cm}@{\hskip 1pt}wc{0.275cm}@{\hskip 3pt}} 0 & 1 \end{bNiceArray} \tilde{x} & \span - \tfrac{T_s}{m} && \Big), \\
        \lambda_{3} &= & \, \hat{\phi}_{\text{tanh}}\Big( & \begin{bNiceArray}{wc{0.46cm}@{\hskip 3pt}wc{0.4cm}@{\hskip 3pt}wc{0.5cm}@{\hskip 4pt}wc{0.4cm}@{\hskip 1pt}wc{0.4cm}@{\hskip 4.5pt}}
            0 & d_1 & 0 & 0 & 0
        \end{bNiceArray} 
        \lambda +
          && \begin{bNiceArray}{wc{0.2cm}@{\hskip 1pt}wc{0.275cm}@{\hskip 3pt}} 0 & 0 \end{bNiceArray} \tilde{x}
         & \span + 0 && \Big)  , \\
        \lambda_{4} &= & \, \hat{\phi}_{\text{sp}}\Big( & \begin{bNiceArray}{wc{0.46cm}@{\hskip 3pt}wc{0.4cm}@{\hskip 3pt}wc{0.5cm}@{\hskip 4pt}wc{0.4cm}@{\hskip 1pt}wc{0.4cm}@{\hskip 4.5pt}}
            \hat{k}_1 & \hat{k}_2 & 0 & 0 & 0
        \end{bNiceArray} 
        \lambda +
          && \begin{bNiceArray}{wc{0.2cm}@{\hskip 1pt}wc{0.275cm}@{\hskip 3pt}} 0 & 0 \end{bNiceArray} \tilde{x} & \span + 1 && \Big) , \\
        \lambda_{5} &= & \, \hat{\phi}_{\text{sp}}\Big( & \begin{bNiceArray}{wc{0.46cm}@{\hskip 3pt}wc{0.4cm}@{\hskip 3pt}wc{0.5cm}@{\hskip 4pt}wc{0.4cm}@{\hskip 1pt}wc{0.4cm}@{\hskip 4.5pt}}
            \hat{k}_1 & \hat{k}_2 & 0 & 0 & 0
        \end{bNiceArray}
        \lambda +
          && \begin{bNiceArray}{wc{0.2cm}@{\hskip 1pt}wc{0.275cm}@{\hskip 3pt}} 0 & 0 \end{bNiceArray}\tilde{x} & \span
         - 1 && \Big),
    \end{alignat}
\end{subequations}
$b_{x_\varphi} = 0$, $A = B_1 = B_2 = C_1 = C_2 = 0$. The memoryless \acrshort{REN}-based \acrshort{NNC} $\tilde\varphi$ is well-posed since $2I - D_{11} - D_{11}\transpose \succ 0$ and all scalar functions $\textrm{id}$, $\hat{\phi}_{\textrm{tanh}}$ and $\hat{\phi}_{\textrm{sp}}$ are monotone and slope-restricted in $[0,1]$ for $c_{\textrm{tanh}} = 1$, $c_{\textrm{sp}} = \ln(2)^2$.

Application of the semialegbraic set-based modeling procedure to the closed-loop system $\tilde{x}^+ = \tilde{f}\big( \tilde{x}, \tilde{\varphi}(\tilde{x})\big)$ results in an exact semialegbraic system model $\big(\mathbf{K}_{\tilde{\varphi}}, \mathbf{K}_{\tilde{L}}\big)$. Following the procedure of \cref{sec:SoAReview_SDPFormulation}, an \acrshort{SDP} is set up searching for a quartic global Lyapunov function in $x$, $\tilde{V}(x) = \nu(x)\transpose P \nu(x)$ with $\nu(x) =  \vstretch{1.3}{[} 1, \, x_1, \, x_2, \, x_1^2, \, x_1x_2, \, x_2^2 \vstretch{1.3}{]}$. \acrshort{SDP} \labelcref{eq:SDPFormulationGlobalAsymptoticStability} is feasible, verifying that the closed-loop system is \acrshort{GAS}. A phase portrait as well as several level sets of the Lyapunov function with $P =$

\begin{equation*}
        \! 
        \begingroup
            \sisetup{print-exponent-implicit-plus=true}
            \sisetup{tight-spacing=true}
            \sisetup{retain-zero-exponent=false}
            \setlength{\arraycolsep}{2pt}
            \begin{bNiceMatrix}
                0 & \num{9.10e-8}   & -\num{1.31e-7}            & \num{-3.00e-6}            & \num{-5.05e-6}            & \num{-7.59e-7} \\
                * & \num{1.29e3}    & \phantom{-}\num{6.24e2}   & \phantom{-}\num{2.49e1}   & \num{-1.42e1}      & \phantom{-}\num{3.47}e\!+\!0  \\
                * & *               & \phantom{-}\num{1.60e3}   & \phantom{-}\num{1.26e1}   & \phantom{-}\num{2.92e1}   & \num{-8.63}e\!+\!0  \\
                * & *               & *                         & \phantom{-}\num{1.42e2}   & \phantom{-}\num{3.22e-6}  & \phantom{-}\num{3.96e1} \\
                * & *               & *                         & *                         & \phantom{-}\num{4.80e2}   & \phantom{-}\num{2.56e0}e\!+\!0  \\
                * & *               & *                         & *                         & *                         & \phantom{-}\num{5.67e2}
            \end{bNiceMatrix}
        \endgroup
\end{equation*}
are shown in \cref{fig:ImplicitSpringMassDamperLyapunovContour}. 

\begin{figure}[t]
    \begin{center}
    \input{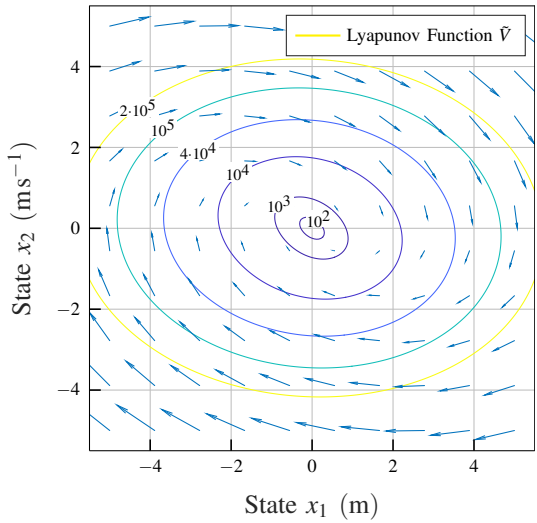}
    \end{center}
    \caption{Phase portrait of the closed-loop system under the neural network control law of \cref{sec:ImplicitSpringMassDamper}, shown in blue. Six level sets of the Lyapunov function $\tilde{V}$ solving \acrshort{SDP} \labelcref{eq:SDPFormulationGlobalAsymptoticStability} are shown.}
    \label{fig:ImplicitSpringMassDamperLyapunovContour}
\end{figure}

\begin{figure}[t]
    \begin{center}
    \input{TubeMPC1qRegion.tex}
    \end{center}
    \caption{Phase portrait of the closed-loop system under the neural-network control law of \cref{sec:MPCImitation}, shown in blue. Overlaid on top are the boundaries of the set $\mathcal{Q}^{\mathcal{A}}$ obtained after iteration $i$ of \cref{alg:AlternatingLocalStability}, and the boundary of the feasible set for the \acrshort{MPC} controller defined by optimization problem \labelcref{eq:LocallyLinearStableSystemTubeBasedMPCOptimizationProblem}.}
    \label{fig:TubeMPC1qRegion}
\end{figure}

\subsection{Model Predictive Controller Imitation}
\label{sec:MPCImitation}
Consider the following saturated, discrete-time \acrshort{LTI} system
\begin{equation}
    \label{eq:LocallyStableLinearSystemOpenLoopDefinition}
    x^+ = Ax + B\text{sat}(u) = \begin{bmatrix} 1 & 0.1 \\ 0 & 1.05 \end{bmatrix} x + \begin{bmatrix} 0 \\ 0.1 \end{bmatrix} \text{sat}(u),   
\end{equation}
with saturation achieved at $1$ and $-1$. 

It is immediately clear that there exists no globally stabilizing controller and therefore a local stability analysis is required as 
\begin{subequations}    
    \begin{alignat}{2}
    	|x_{2}^+| &= |1.05 x_{2} + 0.1 \text{sat}(u)|, \ \ & &   \\
    	 		&\geq 1.05 |x_{2}| - 0.1  & \forall |x_2| \geq & \: \tfrac{0.1}{1.05},
    \end{alignat}
\end{subequations}
which shows that $\{ x \in \realsN{2} \mid |x_2| \geq 2 \}$ is invariant for any control law $u(x)$. Next, consider a \acrshort{NNC} $\varphi$ trained to imitate an \acrshort{MPC} controller,
\begin{equation}
    \label{eq:LocallyLinearStableSystemNeuralNetMPCApproximation}
    x^+ = Ax + B\text{sat}\left(\varphi(x)\right) = Ax + B\left(u_{\text{MPC}}(x) + w\right),
\end{equation}
where $w \in \mathcal{W}$ captures any error between the \acrshort{MPC} controller and a neural-network-based controller. Assume $\mathcal{W} = \{ w \in \realsN{} \mid \|w\|_{\infty} \leq 0.1\}$ and let $u_{\text{MPC}}$ be a tube-based robust \acrshort{MPC} controller such that $u_{\text{MPC}}(x) = K_{\text{tube}}(x - z_1) + v_1$ with $z_1$ and $v_1$ the solutions to
\begin{subequations}
	\label{eq:LocallyLinearStableSystemTubeBasedMPCOptimizationProblem}
	\begin{alignat}{6}
		&& \underset{ \substack{\{z_i\}_{i = \{1, \dotsc, N+1\}} \\ \{v_i\}_{i = \{1, \dotsc, N\} }}}{\textrm{minimize:}} \ && 	\sum_{i=1}^{N} z_i\transpose Q z_i + v_i\transpose R v_i + z_{N+1}\transpose P z_{N+1} \span \span \span \span \span \nonumber \\
		&& \text{s.t.} \span \ & z_{i+1} &= \ && Az_{i}+Bv_{i}, \quad && \forall i\in [N], \label{eq:TubeMPC1NominalDynamicsConstraint}
		\\
		&& 				   && v_{i} &\in \ && \mathcal{U} \ominus K_{\text{tube}} \mathcal{E}, \quad && \forall i\in [N], \label{eq:LocallyLinearStableSystemTubeBasedMPCTightenedInputConstraints}
		\\
		&& 				   && z_{1} &\in \ && x \oplus \mathcal{E}, \quad &&	\\
		&& 				   && z_{N+1} &\in \ && \mathcal{X}_f, \quad &&
	\end{alignat}
\end{subequations}
with $N = 10$, $Q = 5I$, $R = 1$, $\mathcal{U} = [-1, 1]$. In addition, $K_\text{tube}$ and $P$ are equal to the \acrshort{LQR} state-feedback controller and corresponding solution of the discrete-time algebraic Ricatti equation for this choice of $Q$ and $R$, $\mathcal{X}_f$ is the maximal positive invariant set under the aforementioned \acrshort{LQR} controller and tightened input constraints of \cref{eq:LocallyLinearStableSystemTubeBasedMPCTightenedInputConstraints}, and $\mathcal{E}$ is a finite, polytopic approximation of the minimum robust positive invariant set $\oplus_{i=0}^{\infty} (A+BK_{\text{tube}})^iB\mathcal{W}$ \cite{mybibfile:Rakovic2005}. 

A \acrshort{ReLU} neural network consisting of $2$ hidden layers made up of $5$ neurons each is trained in a supervised learning setting using 62500 samples from the above \acrshort{MPC} controller. Next, the weights and biases of the output layer are adjusted to ensure the origin is an equilibrium of the closed-loop system via
\begin{subequations}
	\label{eq:LocallyLinearStableSystemTubeBasedMPCRemoveOffsetOptimizationProblem}
	\begin{alignat}{5}
		&& \underset{ \substack{W^{\textrm{new}}_3, \, b^{\textrm{new}}_3 } }{\textrm{minimize:}} \ && 	\|W^{\textrm{new}}_3 - W_3\|_{\infty} +  \|b^{\textrm{new}}_3 - b_3\|_{\infty} \span \span \span \span \span \nonumber \\
		&& \text{s.t.} \span \ & {W^{\textrm{new}}_3} \lambda_{0} + b^{\textrm{new}}_3 &= 0, &&  &&
	\end{alignat}
\end{subequations}
with $\lambda_0$ equal to the output of the final hidden layer for $x = 0$. Finally, following \cref{eq:LocallyLinearStableSystemNeuralNetMPCApproximation}, two additional \acrshort{ReLU} neurons are added to saturate the controller output between $-1$ and $1$. 

The \acrshort{ROA} of the closed-loop system obtained with this controller is estimated using \cref{alg:AlternatingLocalStability}. An initial solution is found by solving \acrshort{SDP} \cref{eq:SDPFormulationInitialSolution} with the solution to $P_{\textrm{lyap}} - A_{\text{lin}}\transpose P_{\textrm{lyap}} A_{\text{lin}} = I$ and $\alpha_{\textrm{lyap}} = 0.1$. Parameter $\Delta \alpha_{\text{LS}}$ is set to $1$ and the bisection method is used as a line search. All \acrshortpl{SDP} are set up to use a sparse selection of sixth order polynomials. To minimize the size of the positive semidefinite constraint of \cref{eq:InvariantLocalAsymptoticStabilityQbSupersetConstraint}, $\sigma_q$ is defined to be a fourth order \acrshort{SOS} polynomial in $x$, $\sigma_q(x) = \nu(x)\transpose Q \nu(x)$ with $\nu(x) =  \vstretch{1.3}{[} 1, \, x_1, \, x_2, \, x_1^2, \, x_1x_2, \, x_2^2 \vstretch{1.3}{]}$. The termination criteria for each bisection and the algorithm as a whole are set to a relative change in $\alpha$ less than $0.1\%$ or a maximum of $15$ iterations. With these settings \cref{alg:AlternatingLocalStability} terminates after $7$ iterations, with optimization problem $\mathcal{B}$ reported infeasible in the final iteration as a result of the specified numerical tolerances. 

The results prove the closed-loop system is \acrshort{LAS}. In addition, an estimate of the closed-loop system's \acrshort{ROA} is obtained, with the output of \cref{alg:AlternatingLocalStability} being $\sigma_q^{\mathcal{B}}$ defined by $Q=$

\begin{equation*}
        \! 
        \begingroup
            \sisetup{print-exponent-implicit-plus=true}
            \sisetup{tight-spacing=true}
            \sisetup{retain-zero-exponent=false}
            \setlength{\arraycolsep}{1.3pt}
            \begin{bNiceMatrix}
                0 & -\num{5.13e-7}              & -\num{1.56e-6}            & \num{-1.10e-5}                & \num{-3.81e-6}                & \num{-4.03e-6}  \\
                * & \phantom{-}\num{5.88e-3}    & \phantom{-}\num{3.23e-2}  & -\num{1.18e-1}                & \num{-5.72e-2}                & -\num{7.15e-3}  \\
                * & *                           & \phantom{-}\num{2.01e-1}  & -\num{4.81e-1}                & -\num{3.23e-1}                & \num{-4.82e-2}  \\
                * & *                           & *                         & \phantom{-}\num{4.01}e\!+\!0  & \phantom{-}\num{1.94}e\!+\!0  & \phantom{-}\num{4.62e-1} \\
                * & *                           & *                         & *                             & \phantom{-}\num{6.96}e\!+\!0  & \phantom{-}\num{3.24}e\!+\!0  \\
                * & *                           & *                         & *                             & *                             & \phantom{-}\num{3.49}e\!+\!0  \\
            \end{bNiceMatrix}
        \endgroup
\end{equation*}
and $\alpha^\mathcal{B} = 2.84$.

\Cref{fig:TubeMPC1qRegion} contains a phase portrait of the closed-loop system in blue, the boundary of the set $\mathcal{Q}^{\mathcal{A}}$ validated to form part of the closed-loop system's \acrshort{ROA} after every iteration of solving optimization problem $\mathcal{A}$, and the boundary of the feasible set of the tube-based robust \acrshort{MPC} controller used to generate the training data. 

These results show that, as expected, the sets $\mathcal{Q}^{\mathcal{A}}$ are non-decreasing between every iteration of the algorithm. The final set $\mathcal{Q}$ verified to form part of the \acrshort{ROA} is smaller in volume than the \acrshort{MPC} controller's feasible set, but interestingly also proves that points outside of the \acrshort{MPC} controller's feasible set converge to the origin. As \cref{fig:TubeMPC1qRegion} suggests, this behavior is most likely the result of the initial solution used in \cref{alg:AlternatingLocalStability}, which was selected without incorporating prior knowledge of the closed-loop system to demonstrate how \cref{alg:AlternatingLocalStability} can be used without such knowledge.

\section{Conclusion}
\label{sec:Conclusion}
This work presents contributions that address limitations of a state-of-the-art stability verification procedure for \acrshort{NNC}-controlled systems which makes use of semialgebraic-set-based modeling to pose the search for a Lyapunov function as an \acrshort{SDP}. 
Our contributions address the conservatism of the existing verification procedure when used to analyze \acrshortpl{NNC} utilizing transcendental activation functions and the restriction to \acrshortpl{NNC} using feedforward architectures. This is achieved through the introduction of novel semialgebraic activation functions that possess the same fundamental properties of common transcendental activation functions and by demonstrating compatibility of the stability verification procedure with \acrshortpl{NNC} from the broader class of \acrshortpl{REN}, which includes, among others, \acrshortpl{RNN}, \acrshort{LSTM} networks and \acrshortpl{CNN}. Furthermore, the search for a local Lyapunov function is greatly improved via the introduction of a class of candidate Lyapunov functions compatible with the stability verification procedure that is richer than any previously reported and the formulation of a novel sequence of \acrshortpl{SDP} that allow the \acrshort{ROA} estimate to be optimized directly.
Collectively, these contributions enable this stability verification procedure to analyze a broader class of \acrshortpl{NNC} with reduced conservatism.

Future work should aim to further enhance the utility and scalability of the proposed stability verification procedure. Considering the computational challenges associated with solving large-scale \acrshortpl{SDP}, future research should investigate methods that enable larger \acrshortpl{NNC} to be analyzed. One potential approach that we hope to examine in future research is the use of stochastic gradient descent and barrier function techniques to (approximately) solve large scale semidefinite programs \cite{mybibfile:Pauli2022,mybibfile:Revay2020}. If the parameters of the \acrshort{NNC} are also viewed as optimization variables, this method could be used to synthesize \acrshort{NNC} with known (local) stability guarantees. Subsequent research may also focus on examining applications of the newly introduced activation functions to obtain improved semialgebraic-set-based models of existing \acrshortpl{NNC} using transcendental activation functions and using (exact) semialgebraic-set-based models to formulate \acrshortpl{SDP} searching for incremental Lyapunov functions.

\section*{References}
\bibliographystyle{IEEEtranBST/IEEEtran}
\renewcommand{\section}[2]{}%
\bibliography{IEEEtranBST/IEEEabrv,mybibfile}

\begin{thebibliography}{10}
\providecommand{\url}[1]{#1}
\csname url@rmstyle\endcsname
\providecommand{\newblock}{\relax}
\providecommand{\bibinfo}[2]{#2}
\providecommand\BIBentrySTDinterwordspacing{\spaceskip=0pt\relax}
\providecommand\BIBentryALTinterwordstretchfactor{4}
\providecommand\BIBentryALTinterwordspacing{\spaceskip=\fontdimen2\font plus
\BIBentryALTinterwordstretchfactor\fontdimen3\font minus
  \fontdimen4\font\relax}
\providecommand\BIBforeignlanguage[2]{{%
\expandafter\ifx\csname l@#1\endcsname\relax
\typeout{** WARNING: IEEEtran.bst: No hyphenation pattern has been}%
\typeout{** loaded for the language `#1'. Using the pattern for}%
\typeout{** the default language instead.}%
\else
\language=\csname l@#1\endcsname
\fi
#2}}

\bibitem{mybibfile:Hunt1992}
K.~J. Hunt, D.~Sbarbaro, R.~{\.Z}bikowski, and P.~J. Gawthrop,
  ``\BIBforeignlanguage{en}{Neural networks for control systems---a survey},''
  \emph{\BIBforeignlanguage{en}{Automatica (Oxf.)}}, vol.~28, no.~6, pp.
  1083--1112, Nov. 1992.

\bibitem{mybibfile:Hanin2019}
B.~Hanin, ``\BIBforeignlanguage{en}{Universal function approximation by deep
  neural nets with bounded width and {ReLU} activations},''
  \emph{\BIBforeignlanguage{en}{Mathematics}}, vol.~7, no.~10, p. 992, Oct.
  2019.

\bibitem{mybibfile:Gonzalez2024}
\BIBentryALTinterwordspacing
C.~Gonzalez, H.~Asadi, L.~Kooijman, and C.~P. Lim, ``Neural networks for fast
  optimisation in model predictive control: A review,'' arXiv preprint, 2024.
  [Online]. Available: \url{https://arxiv.org/abs/2309.02668}
\BIBentrySTDinterwordspacing

\bibitem{mybibfile:Detailleur2025}
\BIBentryALTinterwordspacing
A.~Detailleur, D.~Wahby, G.~Ducard, and C.~Onder, ``Synthesis and sos-based
  stability verification of a neural-network-based controller for a two-wheeled
  inverted pendulum,'' arXiv preprint, 2025. [Online]. Available:
  \url{https://arxiv.org/abs/2508.15616}
\BIBentrySTDinterwordspacing

\bibitem{mybibfile:Norris2021}
G.~Norris, G.~Ducard, and C.~Onder, ``Neural networks for control: A tutorial
  and survey of stability-analysis methods, properties, and discussions,'' in
  \emph{2021 International Conference on Electrical, Computer, Communications
  and Mechatronics Engineering ({ICECCME})}, Mauritius, Mauritius, Oct. 2021,
  pp. 1--6.

\bibitem{mybibfile:Liberzon2002}
D.~Liberzon, \emph{Switching in systems and control}, ser. Systems \& Control:
  Foundations \& Applications.\hskip 1em plus 0.5em minus 0.4em\relax New York,
  NY: Springer, 2003.

\bibitem{mybibfile:Blondel1999}
V.~D. Blondel and J.~N. Tsitsiklis, ``Complexity of stability and
  controllability of elementary hybrid systems,'' \emph{Automatica (Oxf.)},
  vol.~35, pp. 479--489, Mar. 1999.

\bibitem{mybibfile:Korda2022}
M.~Korda, ``Stability and performance verification of dynamical systems
  controlled by neural networks: Algorithms and complexity,'' \emph{IEEE
  Control Systems Letters}, vol.~6, pp. 3265--3270, June 2022.

\bibitem{mybibfile:Dubach2022}
M.~Dubach and G.~Ducard, ``A comparison of verification methods for
  neural-network controllers using mixed-integer programs,'' in \emph{2022 7th
  International Conference on Robotics and Automation Engineering ({ICRAE})},
  Singapore, Nov. 2022, pp. 43--48.

\bibitem{mybibfile:Schwan2023}
R.~Schwan, C.~N. Jones, and D.~Kuhn, ``Stability verification of neural network
  controllers using mixed-integer programming,'' \emph{IEEE Transactions on
  Automatic Control}, vol.~68, pp. 7514--7529, June 2023.

\bibitem{mybibfile:Hu2020}
H.~Hu, M.~Fazlyab, M.~Morari, and G.~J. Pappas, ``{Reach-SDP}: Reachability
  analysis of closed-loop systems with neural network controllers via
  semidefinite programming,'' in \emph{2020 59th {IEEE} Conference on Decision
  and Control ({CDC})}, Jeju, Korea (South), Dec. 2020, pp. 5929--5934.

\bibitem{mybibfile:Korda2017}
M.~Korda and C.~N. Jones, ``Stability and performance verification of
  optimization-based controllers,'' \emph{Automatica (Oxf.)}, vol.~78, pp.
  34--45, Jan. 2017.

\bibitem{mybibfile:Richardson2023}
C.~R. Richardson, M.~C. Turner, and S.~R. Gunn, ``Strengthened circle and
  {Popov} criteria for the stability analysis of feedback systems with {ReLU}
  neural networks,'' \emph{IEEE Control Systems Letters}, vol.~7, pp.
  2635--2640, June 2023.

\bibitem{mybibfile:Yin2022}
H.~Yin, P.~Seiler, and M.~Arcak, ``Stability analysis using quadratic
  constraints for systems with neural network controllers,'' \emph{IEEE
  Transactions on Automatic Control}, vol.~67, pp. 1980--1987, Apr. 2022.

\bibitem{mybibfile:Pauli2021}
P.~Pauli, D.~Gramlich, J.~Berberich, and F.~Allgower, ``Linear systems with
  neural network nonlinearities: Improved stability analysis via acausal
  zames-falb multipliers,'' in \emph{2021 60th IEEE Conference on Decision and
  Control (CDC)}, Austin, TX, USA, Feb. 2022, pp. 3611--3618.

\bibitem{mybibfile:Newton2022}
M.~Newton and A.~Papachristodoulou, ``Stability of non-linear neural feedback
  loops using sum of squares,'' in \emph{2022 {IEEE} 61st Conference on
  Decision and Control ({CDC})}, Cancun, Mexico, Dec. 2022, pp. 6000--6005.

\bibitem{mybibfile:Revay2024}
M.~Revay, R.~Wang, and I.~R. Manchester, ``Recurrent equilibrium networks:
  Flexible dynamic models with guaranteed stability and robustness,''
  \emph{IEEE Transactions on Automatic Control}, vol.~69, pp. 2855--2870, May
  2024.

\bibitem{mybibfile:Samanipour2024}
P.~Samanipour and H.~A. Poonawala, ``Stability analysis and controller
  synthesis using single-hidden-layer {ReLU} neural networks,'' \emph{IEEE
  Trans. Automat. Contr.}, vol.~69, pp. 202--213, Jan. 2024.

\bibitem{mybibfile:Rawlings2017}
J.~B. Rawlings, D.~Q. Mayne, and M.~Diehl, \emph{Model predictive control:
  Theory, Computation, and Design}.\hskip 1em plus 0.5em minus 0.4em\relax
  Madison, WI: Nob Hill Publishing, 2017.

\bibitem{mybibfile:Newton2021}
M.~Newton and A.~Papachristodoulou, ``Neural network verification using
  polynomial optimisation,'' in \emph{2021 60th IEEE Conference on Decision and
  Control (CDC)}, Austin, TX, USA, Dec. 2021, pp. 5092--5097.

\bibitem{mybibfile:Parrilo2003}
P.~A. Parrilo, ``Semidefinite programming relaxations for semialgebraic
  problems,'' \emph{Mathematical Programming}, vol.~96, pp. 293--320, May 2003.

\bibitem{mybibfile:Revay2020_EquilibriumNet}
\BIBentryALTinterwordspacing
M.~Revay, R.~Wang, and I.~R. Manchester, ``Lipschitz bounded equilibrium
  networks,'' arXiv, 2020. [Online]. Available:
  \url{https://arxiv.org/abs/2010.01732}
\BIBentrySTDinterwordspacing

\bibitem{mybibfile:Agrawal2017}
A.~Agrawal and K.~Sreenath, ``Discrete control barrier functions for
  safety-critical control of discrete systems with application to bipedal robot
  navigation,'' in \emph{Proceedings of Robotics: Science and Systems},
  Cambridge, Massachusetts, July 2017.

\bibitem{mybibfile:Kalman1960_CT}
R.~E. Kalman and J.~E. Bertram, ``Control system analysis and design via the
  'second method' of {Lyapunov},'' \emph{J. Basic Eng.}, vol.~82, pp. 371--393,
  June 1960.

\bibitem{mybibfile:Valmorbida2014}
G.~Valmorbida and J.~Anderson, ``Region of attraction analysis via invariant
  sets,'' in \emph{2014 American Control Conference}, Portland, Oregon, USA,
  June 2014.

\bibitem{mybibfile:Khalil1996}
H.~K. Khalil, \emph{Nonlinear Systems}.\hskip 1em plus 0.5em minus 0.4em\relax
  Upper Saddle River, NJ: Pearson, 1996.

\bibitem{mybibfile:SOSTOOLS}
A.~Papachristodoulou, J.~Anderson, G.~Valmorbida, S.~Prajna, P.~Seiler, P.~A.
  Parrilo, M.~M. Peet, and D.~Jagt, \emph{{SOSTOOLS}: Sum of squares
  optimization toolbox for {MATLAB}}, http://arxiv.org/abs/1310.4716, 2021,
  available from {https://github.com/oxfordcontrol/SOSTOOLS}.

\bibitem{mybibfile:MOSEK}
\emph{MOSEK Optimization Toolbox for MATLAB - Release 10.0.47}, MOSEK ApS,
  MOSEK ApS, Fruebjergvej 3, Symbion Science Park, Box 16, 2100 Copenhagen O,
  Denmark, 2023.

\bibitem{mybibfile:Kingma2017}
\BIBentryALTinterwordspacing
D.~P. Kingma and J.~Ba, ``Adam: A method for stochastic optimization,'' 2017.
  [Online]. Available: \url{https://arxiv.org/abs/1412.6980}
\BIBentrySTDinterwordspacing

\bibitem{mybibfile:Rakovic2005}
S.~V. Rakovic, E.~C. Kerrigan, K.~I. Kouramas, and D.~Q. Mayne, ``Invariant
  approximations of the minimal robust positively invariant set,'' \emph{IEEE
  Transactions on Automatic Control}, vol.~50, pp. 406--410, Mar. 2005.

\bibitem{mybibfile:Pauli2022}
P.~Pauli, N.~Funcke, D.~Gramlich, M.~A. Msalmi, and F.~Allg{\"{o}}wer, ``Neural
  network training under semidefinite constraints,'' \emph{CoRR}, vol.
  abs/2201.00632, 2022.

\bibitem{mybibfile:Revay2020}
M.~Revay, R.~Wang, and I.~R. Manchester, ``A convex parameterization of robust
  recurrent neural networks,'' \emph{IEEE Control Systems Letters}, vol.~5, pp.
  1363--1368, Nov. 2020.

\end{thebibliography}
\appendices
\numberwithin{equation}{section}
\section{\texorpdfstring{Continuity of $(\lambda, \varphi)$ in $\tilde{x}$ for well-posed REN}{Continuity of (lambda, phi) in x tilde for well-posed REN}}
\label{sec:AppendixRENContinuity}
Following a proof similar to that in \cite{mybibfile:Revay2020_EquilibriumNet}, we show how the conditions of well-posedness for a \acrshort{REN} described by \cref{eq:RENdescription} are sufficient to guarantee the continuity of $\lambda$ and $\varphi$ in $\tilde{x} = \bigl[ \begin{smallmatrix} x \\ x_\varphi \end{smallmatrix} \bigr]$. By well-posedness of the \acrshort{REN}, it follows that there exists a $\gamma > 0$, such that
\begin{equation}
\label{eq:RENContinuityInitialCondition}
{\scalebox{0.85}{$
    \begin{aligned}
        2\Lambda & - \Lambda D_{11} -  D_{11}\transpose \Lambda \succ 
        \tfrac{1}{\gamma}\Big( D_{21}\transpose D_{21} + \big(\Lambda \big[ D_{12} \ C_1 \big] + \tfrac{1}{\gamma}D_{21}\transpose \big[ D_{22} \ C_2\big] \big)
        \\ 
        & \big( I - \tfrac{1}{\gamma^2} \Bigl[ \begin{smallmatrix} D_{22}\transpose \\ C_2\transpose \end{smallmatrix} \Bigr] \big[ D_{22} \ C_2 \big] \big)^{-1}
        \big( \Bigl[ \begin{smallmatrix} D_{12}\transpose \\ C_1\transpose \end{smallmatrix} \Bigr] \Lambda + \tfrac{1}{\gamma} \Bigl[ \begin{smallmatrix} D_{22}\transpose \\ C_2\transpose \end{smallmatrix} \Bigr] D_{21} \big)  \Big) \succeq 0,
    \end{aligned}
$}}
\end{equation}
where $\Lambda$ is a diagonal matrix with positive entries. By Schur complement this is equivalent to
\begin{gather}
\label{eq:ContinuityOfRENSchurComplement}
\scalebox{0.85}{$
\begin{bmatrix}
    2\Lambda-\Lambda D_{11} - D_{11}\transpose \Lambda - \frac{1}{\gamma} D_{21}\transpose D_{21} 
    & 
    -\Lambda 
    \big[ D_{12} \ \ C_1 \big] - \tfrac{1}{\gamma}D_{21}\transpose \big[ D_{22} \ \ C_{2} \big]
    \\
    *
    &
    \gamma \Big( I - \frac{1}{\gamma^2} 
    \Bigl[ \begin{smallmatrix} D_{22}\transpose \\ C_2\transpose \end{smallmatrix} \Bigr] \big[ D_{22} \ C_2 \big] \Big)
\end{bmatrix} \succ 0.
$}
\end{gather}
Letting $(\tilde{x}_1, v_1, \lambda_1, \varphi_1)$ and $(\tilde{x}_2, v_2, \lambda_2, \varphi_2)$ with $\tilde{x}_1 \neq \tilde{x}_2$ denote two unique solutions to \cref{eq:RENdescription} and using the shorthand $\Delta_{(\cdot)} = (\cdot)_1 - (\cdot)_2$ it follows
\begin{equation}    
    \label{eq:RENincrementalRelations}
    \begin{alignedat}{2}
         v_1 - v_2 &= \Delta_v && = D_{11}\Delta_{\lambda} + \big[ D_{12} \ \  C_1 \big] \Delta_{\tilde{x}}  \\
         \varphi_1 - \varphi_2 &= \Delta_{\varphi} && = D_{21}\Delta_{\lambda} +  \big[ D_{22} \ \ C_2 \big] \Delta_{\tilde{x}}. 
    \end{alignedat}
\end{equation}
Left and right multiplying \cref{eq:ContinuityOfRENSchurComplement} with $\left[ \begin{smallmatrix} \Delta_{\lambda} \\ \Delta_{\tilde{x}} \end{smallmatrix} \right]$, by \cref{eq:RENincrementalRelations} it follows 
\begin{equation}
    \label{eq:SimplificationQuadraticForm}
    \gamma \Delta_{\tilde{x}}\transpose\Delta_{\tilde{x}} - \tfrac{1}{\gamma}\Delta_\varphi\transpose\Delta_\varphi > 2\langle \Delta_{\lambda},\Delta_{v}-\Delta_{\lambda} \rangle_{\Lambda} \geq 0,
\end{equation}
where the last inequality follows from the assumed monotonicity and slope-restrictedness of all $\phi_i$. From \cref{eq:SimplificationQuadraticForm} it directly follows that $\|\Delta_{\varphi}\| \leq \gamma \|\Delta_{\tilde{x}}\|$, proving (Lipschitz) continuity of $\varphi$ in $\tilde{x}$. Lipschitz continuity of $v$ in $\tilde{x}$ is proven using a similar condition to \cref{eq:RENContinuityInitialCondition}. Under the assumption that all $\phi_i$ are continuous, it follows that $\lambda$ is also continuous in $\tilde{x}$.

\vskip -2\baselineskip plus -1fil
\begin{IEEEbiographynophoto}{Alvaro Detailleur}
received the master's degree in Robotics, Systems and Control from ETH Z{\"u}rich, Z{\"u}rich, Switzerland in 2024, focusing on system modeling and model-based control design. 

He has completed industrial internships at Forze Hydrogen Racing and Mercedes-AMG High Performance Powertrains, focusing on the software and control of high performance automotive power units. His research interests include nonlinear control, neural-network-based controllers and optimization-based control design with a practical application.

Mr. Detailleur was awarded the Young Talent Development Prize by the Royal Dutch Academy of Sciences (KHMW).
\end{IEEEbiographynophoto}

\vskip -2\baselineskip plus -1fil
\begin{IEEEbiographynophoto}{Dalim Wahby} received the B.Sc. degree in industrial engineering and management from Karlsruhe Institute of Technology (KIT), Karlsruhe, Germany, in 2022, with a focus on energy technologies and natural language processing. Additionally, he received the Dipl. Ing. in electronics and embedded systems from Polytech Nice-Sophia, Sophia Antipolis, France in 2024, and the M.Sc. in ICT innovation from Royal Institute of Technology (KTH), Stockholm, Sweden in 2025, with a major in electrical engineering.

He has completed a research internship at CNRS, focusing on adaptive control and the stability analysis of neural-network-based controllers. Currently, he is pursuing the Ph.D. degree in automatic signal and image processing at i3S/CNRS in Sophia-Antipolis, under the supervision of Guillaume Ducard, focusing on the development of a framework for the design and the analysis of neural-network-based controllers.
\end{IEEEbiographynophoto}

\vskip -2\baselineskip plus -1fil
\begin{IEEEbiographynophoto}{Guillaume J. J. Ducard} (Senior Member, IEEE), received the master’s degree in electrical engineering and the
Doctoral degree focusing on flight control for unmanned aerial vehicles (UAVs) from ETH
Z{\"u}rich, Z{\"u}rich, Switzerland, in 2004 and 2007, respectively.

He completed his two-year Postdoctoral course in 2009 from ETH Z{\"u}rich, focused
on flight control for UAVs. He is currently an Associate Professor with the
Universit{\'e} C\^{o}te d`Azur, France, and guest scientist with ETH Z{\"u}rich. His research interests include nonlinear control, neural networks, estimation, and guidance mostly applied to UAVs.
\end{IEEEbiographynophoto}
\vskip -2\baselineskip plus -1fil
\begin{IEEEbiographynophoto}{Christopher H. Onder} received the Diploma in mechanical engineering and Doctoral
degree in Doctor of technical sciences from ETH Zürich, Zürich, Switzerland.

He is a Professor with the Institute for Dynamic Systems and Control, Department of
Mechanical Engineering and Process Control, ETH Zürich. He heads the Engine Systems
Laboratory, and has authored and co-authored numerous articles and a book on modeling and
control of engine systems. His research interests include engine systems modeling, control
and optimization with an emphasis on experimental validation, and industrial cooperation.

Prof. Dr. Onder was the recipient of the BMW scientific award, the ETH medal, the
Vincent Bendix award, the Crompton Lanchester Medal, and the Arch T. Colwell award.
Additionally, he was awarded the Watt d’Or, the energy efficiency price of the
Swiss Federal Office of Energy, on multiple occasions for his projects. 
\end{IEEEbiographynophoto}
\vfill
\end{document}